\newif\iffull
\title{Classifying Convex Bodies by their\\Contact and Intersection Graphs}
\author{Anders Aamand\footnote{Basic Algorithms Research Copenhagen (BARC), University of Copenhagen.
BARC is supported by the VILLUM Foundation grant 16582.}\hspace{35pt}
Mikkel Abrahamsen$^\ast$\\
Jakob Bæk Tejs Knudsen$^\ast$\hspace{35pt}
Peter Michael Reichstein Rasmussen$^\ast$}
\date{\today}
\newcommand{\abs}[1]{\left\lvert #1\right\rvert}
\newcommand{\setbuilder}[2]{\left\{ #1 \; \middle\vert \; #2 \right\}}
\newcommand{\norm}[2]{\left\| #1 \right\|_{#2}}
\newcommand{\ball}[2]{B_{#2}(#1)}
\newcommand{\conv}{\mathop{\mathrm{conv}}}
\newcommand{\R}{\mathbb{R}}
\newcommand{\N}{\mathbb{N}}
\newcommand{\Z}{\mathbb{Z}}
\newcommand{\eps}{\varepsilon}
\newcommand{\mydef}{:=}
\newtheorem{theorem}{Theorem}
\newtheorem{lemma}[theorem]{Lemma}
\newtheorem{corollary}[theorem]{Corollary}
\newtheorem{proposition}[theorem]{Proposition}
\theoremstyle{definition}
\newtheorem{definition}[theorem]{Definition}
\newtheorem{construction}[theorem]{Construction}
\newtheorem{remark}[theorem]{Remark}
\begin{document}
	\maketitle
	
	\begin{abstract}
Suppose that $A$ is a convex body in the plane and that $A_1,\dots,A_n$ are translates of $A$. Such translates give rise to an \emph{intersection graph} of $A$, $G=(V,E)$, with vertices $V=\{1,\dots,n\}$ and edges $E=\{uv\mid A_u\cap A_v\neq \emptyset\}$. The subgraph $G'=(V, E')$ satisfying that $E'\subset E$ is the set of edges $uv$ for which the interiors of $A_u$ and $A_v$ are disjoint is a \emph{unit distance} graph of $A$. If furthermore $G'=G$, i.e., if the interiors of $A_u$ and $A_v$ are disjoint whenever $u\neq v$, then $G$ is a \emph{contact graph} of $A$.

In this paper we study which pairs of convex bodies have the same contact, unit distance, or intersection graphs.
We say that two convex bodies $A$ and $B$ are equivalent if there exists a linear transformation $B'$ of $B$ such that for any slope, the longest line segments with that slope contained in $A$ and $B'$, respectively, are equally long.
For a broad class of convex bodies, including all strictly convex bodies and linear transformations of regular polygons, we show that the contact graphs of $A$ and $B$ are the same if and only if $A$ and $B$ are equivalent.
We prove the same statement for unit distance and intersection graphs.


\end{abstract}

	\section{Introduction}

Consider a convex body $A$, i.e., a convex, compact region of the plane with non-empty interior, and let $\mathcal A=\{A_1,\ldots,A_n\}$ be a set of $n$ translates of $A$.
Then $\mathcal A$ gives rise to an \emph{intersection graph} $G=(V,E)$, where $V=\{1,\ldots,n\}$ and $E=\{uv\mid A_u\cap A_v\neq\emptyset\}$, and a \emph{unit distance graph} $G'=(V,E')$, where $uv\in E'$ if and only if $uv\in E$ and $A_u$ and $A_v$ have disjoint interiors.
In the special case that $G=G'$ (i.e., the convex bodies of $\mathcal A$ have pairwise disjoint interiors), we say that $G$ is a \emph{contact graph} (also known as a \emph{touch graph} or \emph{tangency graph}).
Thus, $A$ defines three classes of graphs, namely the intersection graphs $I(A)$, the unit distance graphs $U(A)$, and the contact graphs $C(A)$ of translates of $A$.

The study of intersection graphs has been an active research area in discrete and computational geometry for the past three decades.
Numerous papers consider the problem of solving classical graph problems efficiently on various classes of geometric intersection graphs.
From a practical point of view, the research is often motivated by the applicability of intersection graphs when modeling wireless communication networks and facility location problems.
If a station is located at some point in the plane and is able to transmit to and receive from all other stations within some distance then the stations can be represented as disks in such a way that two stations can communicate if and only if their disks overlap.

Meanwhile, the study of contact graphs of translates of a convex body has older roots.  It is closely related to the packings of such a body, which has a very long and rich history in mathematics going back (at least) to the seventeenth century, where research on the packings of circles of varying and constant radii was conducted and Kepler famously conjectured upon a 3-dimensional counterpart of such problems, the packing of spheres.
An important notion in this area is that of the \emph{Hadwiger number} of a body $K$, which is the maximum possible number of pairwise interior-disjoint translates $K_i$ of $K$ that each touch but do not overlap $K$.
The Hadwiger number of $K$ is thus the maximum degree of a contact graph of translates of $K$.
In the plane, the Hadwiger number is $8$ for parallelograms and $6$ for all other convex bodies.
We refer the reader to the books and surveys by L{\'a}szl{\'o} and G{\'a}bor Fejes T{\'o}th~\cite{toth1983new,toth1972lagerungen} and B{\"o}r{\"o}czky~\cite{boroczky2004finite}.

Another noteworthy result on contact graphs is the Circle Packing Theorem (also known as the Koebe--Andreev--Thurston Theorem):
A graph is simple and planar if and only if it is the contact graph of some set of circular disks in the plane (the radii of which need not be equal).
The result was proven by Koebe in 1935~\cite{koebe1936kontaktprobleme} (see~\cite{felsner2018primal} for a streamlined, elementary proof).
Schramm~\cite{schramm2007combinatorically} generalized the circle packing theorem by showing that if a smooth planar convex body is assigned to each vertex in a planar graph, then the graph can be realized as a contact graph where each vertex is represented by a homothet (i.e., a scaled translation) of its assigned body.

In this paper we investigate the question of when two convex bodies $A$ and $B$ give rise to the same classes of graphs.
We restrict ourselves to convex bodies $A$ that have the URTC property (\emph{unique regular triangle constructibility}). This is the property that given two interior disjoint translates $A_1,A_2$ of $A$ that touch, there are exactly two ways to place a third translate $A_3$ such that $A_3$ is interior disjoint from $A_1$ and $A_2$, but touches both.
Convex bodies with the URTC property include all linear transformations of regular polygons except squares and all strictly convex bodies~\cite{geher2015contribution}.

The main result of the paper is summarized in the following theorem.
\begin{theorem}\label{thm:main}
Let $A$ and $B$ be convex bodies with the URTC property. Then each of the identities $I(A)=I(B)$, $U(A)=U(B)$, and $C(A)=C(B)$ holds if and only if the following condition is satisfied:
there is a linear transformation $B'$ of $B$ such that for any slope, the longest segments contained in $A$ and $B'$, respectively, with that slope are equally long.
\end{theorem}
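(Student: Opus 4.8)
The plan is to route everything through the \emph{difference body} $DA\mydef A-A$, which is a centrally symmetric convex body. For translates $A_u=A+c_u$ and $A_v=A+c_v$ one has $A_u\cap A_v\neq\emptyset\iff c_v-c_u\in DA$ and $\mathrm{int}(A_u)\cap\mathrm{int}(A_v)=\emptyset\iff c_v-c_u\notin\mathrm{int}(DA)$, so touching corresponds to $c_v-c_u\in\partial(DA)$. Hence $I(A)$, $U(A)$, $C(A)$ are exactly the classes of graphs realizable by placing points $c_1,\dots,c_n$ so that an edge $ij$ encodes, respectively, $c_i-c_j\in DA$; $c_i-c_j\in\partial(DA)$; and $c_i-c_j\in\partial(DA)$ subject to $c_i-c_j\notin\mathrm{int}(DA)$ for all $i\neq j$. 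In particular all three classes depend on $A$ only through $DA$, and each is unchanged when an invertible linear map is applied to $DA$ (apply the same map to the $c_i$). Finally, the longest segment of a given slope inside a convex body $K$ has length half that of the longest chord of that slope in $DK$, and this data is precisely the radial function of $DK$, which determines the centrally symmetric body $DK$; so $A$ and $B$ are equivalent in the sense of the theorem if and only if $DB=T(DA)$ for some invertible linear $T$. The ``if'' direction of the theorem is then immediate from the linear invariance just noted.

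For the ``only if'' direction I would show that $C(A)=C(B)$ forces $DB=T(DA)$, the cases of $U$ and $I$ being analogous. The idea is to manufacture \emph{rigid} finite graphs in $C(A)$ --- graphs every realization of which (by translates of $A$) coincides up to an affine map. The atom is the URTC configuration itself: three pairwise-touching, interior-disjoint translates correspond to points $0,p,q$ with $p,q,p-q\in\partial(DA)$, and the URTC property says that once two of the three positions are fixed the third may occupy exactly two spots, which collapse to one after also fixing an orientation. Iterating the rule ``adjoin a translate touching two already-placed, mutually touching ones'' produces rigid patches with a triangular-lattice-like structure, and such a patch can be decorated with further translates attached to it rigidly so that their positions certify prescribed points of $\partial(DA)$; letting the patches grow, the certified points become dense in $\partial(DA)$.

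Assuming $C(A)=C(B)$, I would then apply an affine map to one side so that a fixed URTC triangle is realized at $0,(1,0),(1/2,\sqrt3/2)$ for translates of both $A$ and $B$; this places these three points and their reflections in the origin on $\partial(DA)\cap\partial(DB)$. Each decorated patch is a graph in $C(A)=C(B)$, hence realizable by translates of $B$; by rigidity that realization matches the canonical one up to an affine map, and since the reference triangle is realized identically, the map is the identity. Thus every certified point of $\partial(DA)$ lies on $\partial(DB)$, and by density, closedness, and symmetry $\partial(DA)=\partial(DB)$, i.e.\ $DB=T(DA)$. The arguments for $U$ and $I$ follow the same outline with suitably redesigned scaffolds.

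The main obstacle is the rigidity claim itself --- that patches glued together from URTC triangles admit an essentially unique realization --- together with designing the decorations so the certified points are genuinely dense in $\partial(DA)$, and verifying that the patches are honest contact graphs with no accidental overlaps. For $U$ and $I$ one must moreover rebuild the scaffolds so that the weaker constraints there (non-edges unconstrained for $U$, overlaps permitted for $I$) do not admit spurious realizations; this is precisely where the URTC hypothesis earns its keep.
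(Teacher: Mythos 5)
Your outline for the contact-graph direction is in essence the paper's: the difference body $A-A$ is the paper's symmetrization $\tfrac12(A+(-A))$ up to scaling; the rigid patches built from URTC triangles are the paper's \emph{lattice unique} graphs; the ``decorations certifying points of $\partial(DA)$'' are the paper's beams $\mathcal{B}_\theta(\ell)$ attached to a hexagonal patch; and your ``density, closedness, symmetry'' step is the paper's Lemma~\ref{findirlem}, a compactness argument whose contrapositive furnishes a finite set of angles and an $\eps$-gap, which is what lets one conclude with a \emph{finite} graph. The obstacles you flag (rigidity, avoiding accidental overlaps, controlling the beam-to-lattice attachment slack) are exactly the technical content of Section~\ref{necsec}, so for $C$ and $U$ you have identified the right route.

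The genuine gap is the intersection-graph case, which you dismiss as ``analogous with suitably redesigned scaffolds.'' This fails: for intersection graphs translates may overlap arbitrarily, so a lattice-unique patch has no rigidity whatsoever---the same abstract graph is realizable by wildly different point configurations, and you cannot ``certify'' any boundary point. The paper needs an entirely different mechanism (Section~\ref{intsec}): it builds triangle-free graphs $P_A^k$ whose drawings are forced to be plane embeddings (Lemma~\ref{lemma:triangle}), uses $k$ nested cycles to force a designated vertex $s_0$ to have distance $\Omega(k)$ from the outer cycle (Lemma~\ref{lem:cycleDist}), augments to $Q_A^k$ so that a cycle $\alpha_k$ is confined to an annulus around $s_0$ of relative thickness $O(1/k)$, and then \emph{reduces} to the contact-graph result via $\eps$-overlap graphs: a drawing of $H_A^k(G)$ as an intersection graph of $B$ yields a drawing of $G$ as a $(10/k)$-overlap graph of $B$, and letting $k\to\infty$ produces a supergraph of $G$ in $C(B)$, contradicting Theorem~\ref{main:contact}. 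None of this is a redesign of the URTC scaffolds; it is a separate construction, and claiming it follows ``analogously'' leaves the hardest third of the theorem unproved.
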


\subsection{Other Related Work}
Several papers have compared classes of intersection graphs of various geometric objects, see for instance~\cite{cabello2017refining,cardinal2017intersection,chaplick2014grid,janson1992thresholds,kratochvil1994intersection}.
Most of the results are inclusions between classes of intersection graphs of one-dimensional objects such as line segments and curves.

A survey by Swanepoel~\cite{swanepoel2018combinatorial} summarizes results on minimum distance graphs and unit distance graphs in normed spaces, including bounds on the minimum/maximum degree, maximum number of edges, chromatic number, and independence number.

Perepelitsa~\cite{perepelitsa2003bounds} studied unit disk intersection graphs in normed planar spaces and showed that they are $\chi$-bounded in any such space.
Kim et al.~\cite{kim2004chromatic} improved Perepelitsa's bound.
For other work on intersection graphs of translates of a fixed convex body, see~\cite{dumitrescu2011piercing,dumitrescu2012coloring, kim2008coloring,kim2006transversal}.

In the area of computational geometry, M{\"{u}}ller et al.~\cite{muller2013integer} gave sharp upper and lower bounds on the size of an integer grid used to represent an intersection graph of translates of a convex polygon with corners at rational coordinates.
Their results imply that for any convex polygon $R$ with rational corners, the problem of recognizing intersection graphs of translates of $R$ is in \textsc{NP}.
On the contrary, it is open whether recognition of unit disk graphs in the Euclidean plane is in NP.
Indeed, the problem is $\exists\R$-complete (and thus in PSPACE), and using integers to represent the center coordinates and radii of the disks in some graphs requires exponentially many bits~\cite{cardinal2015computational,mcdiarmid2013integer}.

	\subsection{Preliminaries}
We begin by defining some basic geometric concepts and terminology. 
\iffull
\subsubsection{Convex Bodies and Graphs as Point Sets}
\fi
For a subset $A\subset \R^2$ of the plane we denote by $A^\circ$ the interior of $A$.
We say that $A$ is a \emph{convex body} if $A$ is compact, convex, and has non-empty interior. 
We say that $A$ is \emph{symmetric} if whenever $x\in A$, then $-x\in A$. 
It is well-known that if $A$ is a symmetric convex body, then the map $\norm{\cdot}{A}:\R^2 \to \R_{\geq 0}$ defined by
$$
\norm{x}{A}=\inf\{\lambda\geq 0 \mid x \in \lambda A\},
$$
is a norm. 
Moreover $A=\{x\in \R^2\mid \norm{x}{A}\leq 1\}$ and $A^\circ=\{x\in \R^2\mid \norm{x}{A}< 1\}$.

It follows from these properties that for translates $A_1=A+v_1$ and $A_2=A+v_2$ it holds that $A_1\cap A_2\neq \emptyset$ if and only if $\norm{v_1-v_2}{A}\leq 2$ and $A_1^\circ \cap A_2^\circ \neq \emptyset$ if and only if $\norm{v_1-v_2}{A}<2$. 
This means that when studying contact, unit distance, and intersection graphs of a symmetric convex body $A$, we can shift viewpoint from translates of $A$ to point sets in $\R^2$ and their $\norm{\cdot}{A}$-distances:
If $\mathcal{A}\subset \R^2$ is a set of points we define $I_A(\mathcal{A})$ and $U_A(\mathcal{A})$  to be the graphs with vertex set $\mathcal{A}$ and edge sets $\{(x,y)\in \mathcal{A}^2 \mid x \neq y \text { and } \|x-y\|_A\leq 2\}$ and $\{(x,y)\in \mathcal{A}^2 \mid x \neq y \text { and } \|x-y\|_A= 2\}$, respectively.
Moreover, if for all distinct points $x,y\in \mathcal{A}$ it holds that $\|x-y\|_A\geq 2$, we say that $\mathcal{A}$ is \emph{compatible with $A$} and define $C_A(\mathcal{A})$ to be the graph with vertex set $\mathcal{A}$ and edge set $\{(x,y)\in \mathcal{A}^2 \mid x \neq y \text { and } \|x-y\|_A= 2\}$.
Then $I_A(\mathcal{A}), U_A(\mathcal{A})$, and $C_A(\mathcal{A})$, respectively, are isomorphic to the intersection, unit distance, and contact graph of $A$ realized by the translates $(A+a)_{a\in\mathcal{A}}$.
When studying contact, unit distance, and intersection graphs of a symmetric, convex body $A$ we will view them as being induced by point sets rather than by translates of $A$.

\iffull
\subsubsection{The URTC Property}
\fi
We say that a (not necessarily symmetric) convex body $A$ in the plane has the URTC property if the following holds: For any two interior disjoint translates of $A$, call them $A_1$ and $A_2$, satisfying that $A_1\cap A_2\neq \emptyset$, there exists precisely two vectors $v\in \R^2$ such that for $i\in \{1,2\}$, $(A+v)^\circ \cap A_i^\circ=\emptyset$ but $(A+v) \cap A_i\neq\emptyset$. If $A$ is symmetric, this amounts to saying that for any two points $v_1,v_2\in \R^2$ with $\norm{v_1-v_2}{A}=2$, the set $\{v\in \R^2 \mid \norm{v-v_1}{A}=\norm{v-v_2}{A}=2\}$ has size two.
Geh{\'e}r~\cite{geher2015contribution} proved that a symmetric convex body $A$ has the URTC property if and only if the boundary $\partial A$ does not contain a line segment of length more than~$1$ in the $\norm{\cdot}{A}$-norm.

\iffull
\subsubsection{Drawing of a Graph}
\fi
A \emph{drawing} of a graph $G \in I(A)$ as an intersection graph of a convex body
$A$ is a point set $\mathcal{A} \subset \R^2$ and a set of straight line segments
$\mathcal{L}$ such that $I_A(\mathcal{A})$ is isomorphic to $G$ and $\mathcal{L}$
is exactly the line segments between the points $u, v \in \mathcal{A}$
which are connected by an edge in $G$.
We define a drawing of a graph $G$ as a contact
and unit distance graph similarly.

\iffull
\subsubsection{Notation}
\fi
For a norm $\norm{\cdot}{}$ on $\R^2$ and a line segment $\ell$ with endpoints $a$ and $b$ we will often write $\norm{\ell}{}=\norm{ab}{}$ instead of $\norm{a-b}{}$. Also, if $A$ is a symmetric convex body and $U,V\subset \R^2$, we define $d_A(U,V):=\inf\{\norm{uv}{A}\mid (u,v)\in U \times V\}$.

	\subsection{Structure of the Paper}\label{papstruc}
\iffull
In Section~\ref{suffapp}, we establish the sufficiency of the condition of Theorem~\ref{thm:main}, which is relatively straightforward.
\else
Establishing the sufficiency of the condition of Theorem~\ref{thm:main} is relatively straightforward so due to space limitations we have deffered that part of the proof to the full version~\cite{}.
\fi

\iffull
In Section~\ref{symmetriapp} we show how to reduce Theorem~\ref{thm:main} to the case where the convex bodies are symmetric.
For contact graphs, we then prove the following more general version of the necessity of the condition of Theorem~\ref{thm:main} in Section~\ref{necsec}.
\else
In the full version, we show how to reduce Theorem~\ref{thm:main} to the case where the convex bodies are symmetric.
In Section~\ref{necsec}, we show the following generalization of necessity in the theorem in the symmetric case.
\fi
\begin{theorem}\label{main:contact}
Let $A$ and $B$ be symmetric convex bodies with the URTC property such that $A$ is not a linear transformation of $B$. There exists a graph $G\in C(A)$ such that for all $H\in C(B)$ and all subgraphs $H'\subset H$, $G$ is not isomorphic to $H'$. In particular $C(A) \setminus C(B) \neq \emptyset$.
\end{theorem}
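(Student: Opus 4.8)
The plan is to construct a graph $G\in C(A)$ that "remembers" enough of the metric geometry of $\|\cdot\|_A$ to prevent it from appearing as a subgraph of any contact graph of $B$, under the hypothesis that $A$ is not a linear transformation of $B$. The starting point is the URTC property: given two touching interior-disjoint translates, a third touching translate has exactly two positions. This means that if I fix a "base edge" $x_0x_1$ with $\|x_0x_1\|_A=2$ and then repeatedly grow triangles, the point set is essentially forced once a few choices (orientations) are made. In other words, URTC lets me build a rigid "triangular grid" of points in $\R^2$ whose combinatorial structure as a contact graph pins down the shape of the unit ball $A$ along the directions that occur. Concretely, I would take a large, regular-triangle-style point set $\mathcal A$ compatible with $A$ — think of a big chunk of the lattice generated by two unit vectors $u,v$ with $\|u\|_A=\|v\|_A=\|u-v\|_A=2$ — whose contact graph $G=C_A(\mathcal A)$ is highly rigid.

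The key steps, in order, would be: (1) Establish a \emph{rigidity lemma}: if $\mathcal B\subset\R^2$ is compatible with $B$ and $H'\subset C_B(\mathcal B)$ contains an induced copy of the triangular-grid graph $G$, then (after a linear transformation of $B$ and a translation) the points of $\mathcal B$ realizing $G$ must lie exactly at the prescribed lattice positions, and moreover each edge of $G$ forces $\|\cdot\|_B=2$ along the corresponding direction. The URTC property is what drives this: each new vertex of $G$ is adjacent to two previously placed, touching vertices, so it has only two possible locations, and the global connectivity of the grid propagates a single consistent choice. (2) Use step (1) to conclude that the existence of such $H'$ would force $\|\cdot\|_B$ to agree with $\|\cdot\|_A$ (up to a linear transformation) on \emph{all} the edge-directions appearing in $G$. (3) Choose $G$ large enough (a fine and long enough patch of the lattice, possibly several patches at rational "rotations" of one another, realizable because $A$ has URTC) that the set of forced directions is dense enough — or structured enough — to determine the convex body $B$ completely, i.e., to force $B$ to be a linear transformation of $A$, contradicting the hypothesis. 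Here I would want to exploit that a symmetric convex body is determined by its boundary, and that knowing the length of the longest segment in every direction (equivalently, the support-type data along a dense set of directions) is enough to reconstruct it up to linear equivalence — and that if two URTC bodies are "equivalent" in the sense of the theorem but not linearly equivalent, one can still separate them combinatorially because URTC excludes the degenerate polygonal cases (squares) where equivalence is coarser.

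The main obstacle is step (1), the rigidity lemma: I must rule out that the grid graph $G$ appears in $C_B(\mathcal B)$ in some "unintended" way — via non-lattice placements, via translates of $B$ that touch along long boundary segments, or via an embedding that uses extra vertices of $H$ and matches $G$ only as an \emph{abstract} subgraph rather than an induced one. The URTC hypothesis on $B$ handles the "two positions for the third translate" issue locally, but I will need to bootstrap this into a genuinely global statement, carefully tracking how much freedom remains (reflections, the global orientation, possible "slippage" at vertices where the contact is along a boundary segment of length exactly $1$). I expect the proof to proceed by: first showing the base triangle $x_0x_1x_2$ of $G$ forces, up to a linear map, $\|\cdot\|_B=2$ on those three directions and fixes the three points; then an inductive "zipper" argument extending rigidity across the whole grid; and finally a counting/denseness argument in step (3). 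A secondary subtlety is making $G$ \emph{not} isomorphic to any subgraph $H'$ of any $H\in C(B)$, not merely not an induced subgraph — so the grid must be chosen to have no "spurious" edges under $\|\cdot\|_B$ either, which again is where choosing $\mathcal A$ with generic enough geometry (so that non-adjacent lattice points are strictly farther than $2$ in every relevant candidate norm) becomes essential.
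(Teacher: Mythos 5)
Your proposal has the right starting point --- use URTC rigidity on a large triangular lattice patch --- but step (3) cannot work as stated, and this is a genuine gap, not a detail. A finite graph encodes metric constraints along only finitely many directions (one per realized edge), and finitely many values of the signature never determine a symmetric convex body; so no finite $G$ can ``force $B$ to be a linear transformation of $A$'' by a denseness argument. The paper's resolution is a compactness lemma (Lemma~\ref{findirlem}): if $A$ is not a linear image of $B$, then there exist $\varepsilon>0$ and a \emph{finite} set of directions $\Theta\subset[0,\pi)$ such that for \emph{every} linear map $T$, some $\theta\in\Theta$ satisfies $|\rho_{T(B)}(\theta)-\rho_A(\theta)|\geq\varepsilon$. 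This converts the qualitative hypothesis into a finitary, quantitative one that a single finite graph can detect; without it the construction cannot close.

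The second gap is the mechanism for encoding a non-lattice direction $\theta$. A triangular patch from $\mathcal{L}_A(e_1,e_2)$ constrains $B$ only along $e_1,e_2,e_1-e_2$, and those three constraints are exactly what the freedom in the linear map $T$ absorbs, so they carry no information about $B$. The paper instead encodes each $\theta\in\Theta$ with a ``beam'' $\mathcal{B}_\theta(\ell)$: a long rigid chain stepping through direction $\theta$ at $\|\cdot\|_A$-distance $2$, placed inside a large lattice hexagon $\mathcal{H}_k$ and tied to it at both ends by a bounded number of extra vertices. Lattice uniqueness pins the anchor points (after applying $T$) within $O(1)$ of their prescribed images, so in any realization with $B$ the beam has Euclidean length $2\ell\,\rho_{T(B)}(\theta)\pm O(1)$, while by construction it equals $2\ell\,\rho_A(\theta)$; taking $\ell=\Omega(1/\varepsilon)$ gives the contradiction. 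Your ``several patches at rational rotations'' is too vague to substitute for this concrete rigid attachment of a new direction to the lattice. Two of your secondary worries also dissolve: your remark about URTC bodies that are ``equivalent but not linearly equivalent'' does not apply for symmetric bodies (the signature determines a symmetric body, so equivalence coincides with linear equivalence here), and your concern about ``spurious edges'' is not an obstruction, since the theorem allows $G$ to appear as a non-induced subgraph and extra edges in a realization only add more distance-$2$ constraints, strengthening rather than weakening the rigidity argument.
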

As we will also discuss in Section~\ref{necsec} the same result holds if $C(X)$ is replaced by $U(X)$ for $X\in \{A,B\}$ everywhere in the theorem above. The proof is identical. 

In Section~\ref{intsec} we prove the following result which combined with~Theorem~\ref{main:contact} yields the necessity of the condition of Theorem~\ref{thm:main} for \emph{intersection graphs}.
\begin{theorem}\label{theorem:main}
	Let $A$ and $B$ be symmetric convex bodies. If there exists a graph $G\in C(A)$ such
	that for all $H \in C(B)$ and all subgraphs $H'\subset H$, $G$
	is not isomorphic to $H'$, then $I(A) \neq I(B)$.
\end{theorem}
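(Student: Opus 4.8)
The plan is to show the contrapositive-friendly statement directly: assuming $I(A) = I(B)$, we produce, for every $G \in C(A)$, some $H \in C(B)$ and a subgraph $H' \subseteq H$ with $G \cong H'$. Fix $G \in C(A)$ with a realizing point set $\mathcal A \subset \R^2$ (so $\|x-y\|_A \ge 2$ for distinct $x,y \in \mathcal A$, and edges are the pairs at distance exactly $2$). The key observation is that the contact graph $C_A(\mathcal A)$ is a spanning subgraph of the intersection graph $I_A(\mathcal A)$ on the same point set: every touching pair is an intersecting pair. So $G$ is (isomorphic to) a spanning subgraph of some graph in $I(A)$. Since $I(A) = I(B)$, that intersection graph is also realizable by translates of $B$; the problem is that the $B$-realization need not have the contact pairs of $G$ realized as \emph{touching} pairs — intersections might be ``too deep.''

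To fix this, I would use a \emph{scaling / perturbation} argument. The natural idea: take the point set $\mathcal A$ realizing $G$ as a contact graph of $A$, and scale it up by a factor $1+\delta$ for small $\delta > 0$. Under the scaled point set $(1+\delta)\mathcal A$, each formerly-touching pair now has $\|x-y\|_A$ slightly above $2$, i.e.\ the corresponding translates of $A$ are now \emph{disjoint}; each formerly non-adjacent pair had $\|x-y\|_A$ strictly above $2$ (a compact set of pairs, so bounded away from $2$), hence stays disjoint for $\delta$ small. Thus the scaled configuration realizes the \emph{empty} graph in $I(A)$ — not what we want. Instead, scale \emph{down}: replace $\mathcal A$ by $(1-\delta)\mathcal A$. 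Now touching pairs become intersecting (distance $2(1-\delta) < 2$), non-adjacent pairs of $G$ may or may not become intersecting, but edges of $G$ definitely become edges of the intersection graph. So $I_A((1-\delta)\mathcal A)$ is a graph in $I(A)$ that \emph{contains} $G$ as a subgraph on the same vertex set. Set $G^+ := I_A((1-\delta)\mathcal A) \in I(A) = I(B)$. Realize $G^+$ by translates of $B$ via a point set $\mathcal B$, so $I_B(\mathcal B) \cong G^+ \supseteq G$.

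The remaining task is to turn the $B$-intersection realization of $G^+$ into a $B$-\emph{contact} realization $H$ with $G$ as a subgraph. Here I would again scale: the finitely many intersecting pairs in $I_B(\mathcal B)$ all have $\|x-y\|_B < 2$, so $\max$ over them is some $c < 2$; scale $\mathcal B$ by $2/c$ to get a point set $\mathcal B'$ in which every edge of $G^+$ corresponds to $\|x-y\|_B \le 2$, with at least one pair attaining exactly $2$. That is not yet a contact realization (some pairs may still overlap, and we need \emph{all} pairwise distances $\ge 2$). The clean way around this is to instead realize $G$ itself more carefully: since $\mathcal B'$ has all pairs at $B$-distance $\ge$ (something), perturb/scale so that the pairs that are edges \emph{of $G$} specifically are pushed to distance exactly $2$ while all other pairs sit at distance $\ge 2$ — this is possible because $G$'s edge set is exactly the touching pairs of a valid $A$-contact configuration, and one can ``transport'' the combinatorial structure. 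Concretely: starting from $\mathcal B$ realizing $G^+ \supseteq G$, continuously shrink the whole point set from scale $1$ down toward scale $0$; at scale $1$ all $G$-edges are strict overlaps ($<2$), at scale large enough all are disjoint ($>2$) — wait, shrinking decreases distances, so instead \emph{enlarge}: at scale $\lambda = 1$ the $G^+$-edges overlap, and as $\lambda \to \infty$ all pairwise distances $\to \infty$, so every pair eventually exceeds $2$; by a compactness/continuity argument there is a scale at which the edges of $G$ are exactly at distance $2$ and — this needs care — non-edges are $\ge 2$.

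The main obstacle, and the part requiring the real work, is exactly this last step: ensuring that when we scale the $B$-realization to make $G$'s edges touch, the \emph{non-edges} of $G$ do not end up at distance strictly less than $2$ (which would violate compatibility with $B$, so $C_B$ would not even be defined on that point set). This is not automatic from $I(A) = I(B)$ alone — it is a statement about the geometry of how intersection-graph realizations can be ``inflated'' to contact-graph realizations. I expect the paper handles it by working not with the bare point set but by a more surgical construction: realize $G$ as a contact graph of $A$, use $I(A)=I(B)$ on a slightly shrunk copy to get a $B$-intersection realization where the $G$-edges are present, then invoke a separation/scaling lemma (possibly one proved just before Theorem~\ref{theorem:main}) guaranteeing that one can rescale so the designated edge set is realized by touching and everything else by disjointness, yielding a genuine $H \in C(B)$ with $G \subseteq H$. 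Once that lemma is in hand, the theorem follows immediately by contraposition.
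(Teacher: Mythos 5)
Your proposal correctly reduces the theorem to its contrapositive and correctly locates the hard step, but the scaling idea does not close the gap, and the paper's actual proof takes a fundamentally different (and substantially longer) route.

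The concrete obstruction to your scaling argument is this: once you have a $B$-realization $\mathcal B$ of the abstract graph $G^+$, you have no control whatsoever over \emph{where} the edges of $G$ sit in the ordering of pairwise $\|\cdot\|_B$-distances. The graph $G^+$ is just a combinatorial object; a realization of it by translates of $B$ can place the pair corresponding to some $G$-edge at $B$-distance, say, $0.1$, while placing a non-$G$-edge of $G^+$ at $B$-distance $1.99$. If you now uniformly scale $\mathcal B$ up until the \emph{smallest} pairwise distance reaches $2$, you do obtain a point set compatible with $B$ — but its contact graph only records the pair(s) that were originally at minimum distance, and there is no reason this should include all (or even any) of $G$'s edges. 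Conversely, if you scale until \emph{all} $G$-edges are at distance $\le 2$, some non-edge of $G$ may already have slipped below distance $2$, so the point set is not compatible with $B$ at all and $C_B$ is undefined on it. Uniform scaling preserves the distance order but cannot collapse a range of distances onto the single value $2$, which is what would be needed.

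The paper does something much more surgical, as you suspected, and it is worth understanding why no variant of the ``take the $B$-realization and rescale'' idea can succeed: one must somehow force the $B$-realization to be \emph{geometrically} close to a scaled $B$-contact realization, not merely combinatorially isomorphic to a graph containing $G$. The paper achieves this by attaching to each contact point of $G$ a large rigid gadget $Q_A^k$. The gadget is built on a triangle-free ``onion'' graph $P_A^{k'}$ (nested cycles around a center $s_0$), which by Lemma~\ref{lemma:triangle} is forced to be drawn planarly in any intersection-graph realization, by either body, and whose outer cycle is therefore forced to lie at distance $\Omega(k)$ from $s_0$. To this one adds radial paths of controlled lengths so that in \emph{every} drawing of $Q_A^k$ as an intersection graph of $A$ \emph{or} $B$, a designated cycle $\alpha_k$ is trapped in a thin annulus around $s_0$ (Lemmas~\ref{lemma:annulus} and~\ref{lemma:winding}). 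One then assembles $H_A^k(G)$ by placing these gadgets at $(2k-2)w$ for each vertex $w$ of $G$: two $\alpha_k$-cycles intersect precisely when the corresponding points touch. Any $B$-drawing of $H_A^k(G)$ thus forces the gadget centers to approximate a $B$-contact configuration of $G$ up to a $O(1/k)$ error — this is formalized via the $\eps$-overlap graphs $C_\eps(B)$, Lemma~\ref{lemma:realizeH}, and Lemma~\ref{lemma:overlap} — and a compactness argument (Lemma~\ref{lemma:implyContact}) passes to the limit $k\to\infty$ to produce an honest $H\in C(B)$ with $G$ as a subgraph. Your proposal identifies the right target and the right reduction, but the missing ``separation/scaling lemma'' you hope for is not a lemma one can state about bare point sets; the gadget construction that rigidifies the geometry is the real content of the proof.
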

This result holds for general symmetric convex bodies.
An improvement of Theorem~\ref{main:contact} to general symmetric convex bodies (not necessarily having the URTC property) would thus yield a version of Theorem~\ref{thm:main} that also holds for general convex bodies.
	\section{Sufficiency of the Condition of Theorem \ref{thm:main}}\label{suffapp}
This section establishes the sufficiency of the condition stated in Theorem~\ref{thm:main} -- this is the easy direction. It is worth noting that for this direction our result holds for general convex bodies.

Essentially, we show that the classes of contact, unit distance, and intersection graphs arising from a convex body $A$ are closed under linear transformations of $A$ and under operations on $A$ maintaining the \emph{signature} of $A$ which we proceed to define.
\begin{definition}[Profile]
	Let $A\subset \R^2$ be a convex body. A \emph{profile} through $A$ at angle $\theta\in [0, \pi)$ is a closed line segment $\ell_\theta$ of maximal length which has argument $\theta$ and is contained in $A$. 
\end{definition}
\begin{definition}[Signature]
	Let $A\subset \R^2$ be a convex body. The \emph{signature} of $A$ is the function $\rho_A\colon [0, \pi)\to \R$ satisfying that for every $\theta\in [0, \pi)$, $\rho_A(\theta)=\abs{\ell_\theta}$ is the length of a profile through $A$ at angle $\theta$.
\end{definition}

\begin{lemma}\label{Lem-Moving_with_argument}
	Let $A\subset \R^2$ be a convex body and $v\in \R^2$ a vector with argument $\theta\in [0, 2\pi)$ and magnitude $r$. Then 
	\begin{enumerate}
		\item $(A+v)\cap A \neq \emptyset$ if and only if $r\leq \rho_A(\theta)$.
		\item $(A+v)^{\circ}\cap A^\circ\neq\emptyset$ if and only if $r< \rho_A(\theta)$.
	\end{enumerate}
\end{lemma}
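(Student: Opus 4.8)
The plan is to reduce both statements to a one-dimensional question about chords of $A$ in the direction $\theta$. First I would reduce to the case $\theta \in [0,\pi)$: translating by $v$ and by $-v$ gives congruent configurations, so $(A+v)\cap A \neq \emptyset$ iff $(A-v)\cap A \neq \emptyset$, and likewise for interiors; hence it suffices to treat vectors whose argument lies in $[0,\pi)$, where the signature $\rho_A(\theta)$ is defined. Throughout, fix the unit vector $u_\theta$ in direction $\theta$, so $v = r\,u_\theta$.

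For part (1), the key observation is that $(A+v)\cap A \neq \emptyset$ if and only if there exist points $a,a' \in A$ with $a - a' = v = r u_\theta$; that is, iff $A$ contains two points whose difference is $r u_\theta$. Equivalently, iff $A$ contains a line segment of length $r$ with argument $\theta$, which by the definition of the profile $\ell_\theta$ (a segment of \emph{maximal} length in direction $\theta$ inside $A$) happens precisely when $r \le |\ell_\theta| = \rho_A(\theta)$. The ``if'' direction is immediate: a subsegment of $\ell_\theta$ of length $r$ has endpoints $a',a \in A$ with $a = a' + v$, so $a \in (A+v)\cap A$. For the ``only if'' direction, a point $p \in (A+v) \cap A$ yields $p \in A$ and $p - v \in A$, so the segment from $p-v$ to $p$ lies in $A$ by convexity, has argument $\theta$ and length $r$, forcing $r \le \rho_A(\theta)$ by maximality of the profile. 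The only mild subtlety is the boundary case $r = \rho_A(\theta)$, handled by taking $\ell_\theta$ itself (which is closed, hence its endpoints lie in the compact set $A$).

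For part (2), I would run the same argument with interiors. If $r < \rho_A(\theta)$, choose the profile $\ell_\theta$ with endpoints $a', a$, $a = a' + \rho_A(\theta) u_\theta$; the open subsegment consisting of points strictly between $a'$ and $a$ lies in $A^\circ$ (a standard fact: a point in the relative interior of a chord whose endpoints are in a convex body with nonempty interior lies in the interior, provided the chord is not the whole body collapsed to a segment — here $A$ has nonempty interior so this holds). Picking a point $p$ on this open segment far enough from $a'$ that $p - v$ also lies strictly between $a'$ and $a$ gives $p \in A^\circ$ and $p - v \in A^\circ$, whence $p \in (A+v)^\circ \cap A^\circ$; this is possible exactly because $r < \rho_A(\theta)$. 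Conversely, if $p \in (A+v)^\circ \cap A^\circ$ then $p, p-v \in A^\circ$, and the open segment between them is a chord of length $r$ in direction $\theta$ through interior points; extending it to a maximal chord of $A$ in direction $\theta$ shows $\rho_A(\theta) \ge r$, and equality is impossible since a maximal chord's endpoints lie on $\partial A$ while here the chord can be pushed slightly in the $\pm u_\theta$ direction and still stay in $A^\circ$, so in fact $\rho_A(\theta) > r$.

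I expect the main (though still minor) obstacle to be the careful treatment of the boundary/strictness statements in part (2): making precise why a chord through two interior points, having maximal length in its direction, would have to have endpoints on $\partial A$, and hence why strict inequality $r < \rho_A(\theta)$ is both necessary and sufficient rather than just $r \le \rho_A(\theta)$. This is a routine consequence of convexity and the fact that $A^\circ \neq \emptyset$, but it is the one place where one must argue rather than merely cite the definition of the signature.
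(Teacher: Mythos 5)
Your part (1) and the ``only if'' direction of part (2) match the paper's argument and are fine. The gap is in the ``if'' direction of part (2), where you assert as a standard fact that a point strictly between the endpoints of a chord of a convex body with nonempty interior lies in the interior. That fact actually requires at least one of the two endpoints of the segment to be in $A^\circ$; when both endpoints lie on $\partial A$ — and a profile's endpoints always do, by maximality — the relative interior of the chord may lie entirely on the boundary. Concretely, let $A$ be the triangle with vertices $(0,0)$, $(1,0)$, $(1/2,1)$ and take $\theta=0$: the unique profile at angle $0$ is the bottom edge from $(0,0)$ to $(1,0)$, whose relative interior is contained in $\partial A$, not $A^\circ$. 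So your chosen point $p$ on the open part of $\ell_\theta$ need not be interior, and you have not produced a witness for $(A+v)^\circ\cap A^\circ\neq\emptyset$. (The lemma is still true in this example, of course; it is your argument that does not establish it.)

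The paper closes this direction by going two-dimensional rather than staying on the chord: it picks a point $c\in A$ not collinear with the profile's endpoints $a,b$, and uses that the open triangle with vertices $a,b,c$ is a nonempty open subset of $\R^2$ contained in $A$ (by convexity), hence contained in $A^\circ$; the same holds for the translate by $v$. Since $r<\rho_A(\theta)$ forces $a+v$ to lie strictly between $a$ and $b$, the two open triangles overlap near $a+v$, yielding the required interior intersection. If you prefer to keep your one-dimensional approach, you could instead slide the chord off the boundary: replace $\ell_\theta$ by the segment from $(1-\delta)a+\delta c$ to $(1-\delta)b+\delta c$ for small $\delta>0$; its endpoints now lie in $A^\circ$ (one endpoint of each defining segment is $c\in A^\circ$), so the whole segment is in $A^\circ$, and its length $(1-\delta)\rho_A(\theta)$ still exceeds $r$. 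As written, though, your proof of this implication does not go through.
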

\begin{proof}
		First, suppose that $a\in (A+v)\cap A$. Then $a-v\in A$ and by convexity, the line segment $\ell$ from $a-v$ to $a$, which has length $r$ and argument $\theta$, is contained in $A$. It follows that $r\leq \rho_A(\theta)$. If further $a\in A^{\circ}$, there exists a vector $u$ of length $\epsilon>0$ and argument $\theta$ such that $a+u\in A$. It follows that $(A+v+u)\cap A\neq \emptyset$ implying that $r+\epsilon\leq \rho_A(\theta)$, so $r<\rho_A(\theta)$.

		Second, suppose $r\leq \rho_A(\theta)$ and let $a, b$ be the endpoints of a profile, $\ell_\rho$, through $A$ at angle $\theta$ such that the vector from $a$ to $b$ has argument $\theta$. Then $a+v\in \ell_\theta$ as $r\leq \rho_A(\theta)=\abs{\ell_\theta}$. It follows that $a+v\in A\cap (A+v)$ implying $A\cap (A+v)\neq \emptyset$. If further $r<\rho_A(\theta)$, then $a+v\neq b$. Let $c\in A$ be such that $a,b$ and $c$ are not colinear. (Such a point exists as $A$ has non-empty interior.) The interiors of the triangles with vertices $a,b,c$ and $a+v,b+v,c+v$ are contained in $A^\circ$ and $(A+v)^\circ$, respectively and their intersection is non-trivial. It follows that $A^\circ \cap (A+v)^\circ\neq \emptyset$, as desired.

\end{proof}
Second, the following lemma demonstrates closure of contact, unit distance, and intersection graphs under linear transformations.
\begin{lemma}\label{Lem-Graphs_invariant_under_lt}
	Let $A\subset \R^2$ be a compact and connected subset of the plane and $M\colon \R^2\to\R^2$ be an invertible linear transformation. Then $C(A)=C(M(A))$, $U(A)=U(M(A))$, and $I(A)=I(M(A))$.
\end{lemma}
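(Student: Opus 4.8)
The plan is to prove that an invertible linear transformation $M$ induces a bijection on the relevant point-set realizations, sending a realization of a graph $G$ as a contact/unit distance/intersection graph of $A$ to a realization of the \emph{same} graph for $M(A)$, and vice versa. The key observation is that $M$ maps translates of $A$ to translates of $M(A)$: concretely, $M(A + v) = M(A) + M(v)$. Moreover, since $M$ is invertible and affine, it preserves set intersection and set inclusion, and it maps interiors to interiors (an invertible linear map is a homeomorphism of $\R^2$, so $M(A^\circ) = (M(A))^\circ$). Hence for any two translates $A + v_1$ and $A + v_2$ we have $(A+v_1) \cap (A+v_2) \neq \emptyset$ if and only if $(M(A)+M(v_1)) \cap (M(A)+M(v_2)) \neq \emptyset$, and the analogous equivalence holds with interiors on both sides.

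From this, the three graph-class equalities follow directly. Given $G \in I(A)$, fix a realization by translates $(A + v_i)_{i=1}^n$; then $(M(A) + M(v_i))_{i=1}^n$ realizes the same graph $G$ as an intersection graph of $M(A)$, using the intersection equivalence above. This shows $I(A) \subseteq I(M(A))$, and applying the same argument to $M^{-1}$ gives the reverse inclusion, so $I(A) = I(M(A))$. For unit distance graphs, recall $uv \in E'$ iff the two translates intersect but have disjoint interiors; both conditions are preserved by $M$ by the above, so the edge set of the unit distance graph is preserved, giving $U(A) = U(M(A))$. For contact graphs, recall that $G \in C(A)$ precisely when $G \in U(A)$ \emph{and} $G$ admits a realization in which \emph{all} pairs of translates have disjoint interiors; since $M$ preserves ``disjoint interiors'' for every pair simultaneously, a valid contact realization for $A$ maps to a valid contact realization for $M(A)$ with the same graph, and conversely via $M^{-1}$, so $C(A) = C(M(A))$.

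The only mild subtlety — and the one point that deserves explicit justification rather than being waved away — is that $M$ genuinely maps interiors onto interiors, i.e.\ $M(A^\circ) = (M(A))^\circ$. This is where invertibility of $M$ is used: an invertible linear map of $\R^2$ is a homeomorphism, so it carries open sets to open sets and boundary to boundary, whence $M(A^\circ)$ is an open subset of $M(A)$ containing no boundary point, and conversely every interior point of $M(A)$ is the image of an interior point of $A$. (Note the hypothesis only asks that $A$ be compact and connected, not convex; the argument above never uses convexity of $A$, only that $M$ is a homeomorphism and maps translates to translates, so nothing changes.) Everything else is a routine unwinding of definitions, and I do not anticipate any real obstacle: the lemma is essentially the statement that the whole setup is equivariant under the affine group, and linear invertibility is exactly what makes the equivalences two-sided.
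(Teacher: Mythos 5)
Your proof is correct and takes essentially the same approach as the paper: map each translate $A+v_i$ to $M(A)+M(v_i)$ and observe that intersection and interior-disjointness are preserved, then invoke $M^{-1}$ for the reverse inclusion. You spell out the homeomorphism/interior point more carefully than the paper, which simply states the conclusion in one line, but the idea is identical.
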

\begin{proof}
    Suppose that $A_1, \dots, A_n$ is a realization of a graph $G$ as either a contact, unit distance, or intersection graph of $A$. Then $M(A_1), \dots, M(A_n)$ is a realization of $G$ as a contact, unit distance, or intersection graph, respectively, of $M(A)$.
\end{proof}
Finally, we combine the above observations to prove our sufficient condition.
\begin{theorem}\label{suff:thm}
	Let $A, B\subset \R^2$ be convex bodies. If there exists a linear transformation $M\colon \R^2\to\R^2$ such that the signatures of $A$ and of $M(B)$ are identical, i.e.,~$\rho_A=\rho_{M(B)}$, then $C(A)=C(B)$, $U(A)=U(B)$, and $I(A)=I(B)$.
\end{theorem}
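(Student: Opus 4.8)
The plan is to combine Lemma~\ref{Lem-Graphs_invariant_under_lt} with Lemma~\ref{Lem-Moving_with_argument} in two steps. By Lemma~\ref{Lem-Graphs_invariant_under_lt} we have $C(B)=C(M(B))$, $U(B)=U(M(B))$, and $I(B)=I(M(B))$, so it suffices to prove the theorem under the additional hypothesis that $M$ is the identity, i.e.\ to show that if $\rho_A=\rho_B$ then $C(A)=C(B)$, $U(A)=U(B)$, and $I(A)=I(B)$. So from now on assume $\rho_A=\rho_B$.

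Next I would show $I(A)\subseteq I(B)$; the reverse inclusion follows by symmetry, and the arguments for $U$ and $C$ are essentially identical. Let $G\in I(A)$ be realized by translates $A+v_1,\dots,A+v_n$. I claim the \emph{same} translation vectors realize $G$ as an intersection graph of $B$, i.e.\ $B+v_1,\dots,B+v_n$ works. Indeed, for $i\neq j$ write $w=v_j-v_i$ with argument $\theta\in[0,2\pi)$ and magnitude $r$. By Lemma~\ref{Lem-Moving_with_argument}, $(A+v_i)\cap(A+v_j)\neq\emptyset$ iff $(A+w)\cap A\neq\emptyset$ iff $r\le\rho_A(\theta)$, and likewise $(B+v_i)\cap(B+v_j)\neq\emptyset$ iff $r\le\rho_B(\theta)$. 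Here I should be slightly careful: $\rho_A$ is defined on $[0,\pi)$, whereas $\theta$ ranges over $[0,2\pi)$; but a profile at angle $\theta$ and one at angle $\theta+\pi$ have the same length (they are parallel segments of maximal length), so it is harmless to extend $\rho_A$ to $[0,2\pi)$ by $\rho_A(\theta+\pi)=\rho_A(\theta)$, and Lemma~\ref{Lem-Moving_with_argument} as stated already allows $\theta\in[0,2\pi)$. Since $\rho_A=\rho_B$, the two intersection conditions coincide, so $ij$ is an edge of $I_B(\{v_1,\dots,v_n\})$ iff it is an edge of $G$. Hence $G\in I(B)$.

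For unit distance graphs, a realization of $G\in U(A)$ by $A+v_1,\dots,A+v_n$ means: for every pair $i\neq j$, $(A+v_i)\cap(A+v_j)\neq\emptyset$, and moreover the interiors intersect iff $ij\notin E(G)$. By Lemma~\ref{Lem-Moving_with_argument}, with $w,\theta,r$ as above, this says $r\le\rho_A(\theta)$ always, and $r<\rho_A(\theta)$ iff $ij\notin E(G)$; equivalently $r=\rho_A(\theta)$ exactly when $ij\in E(G)$. Replacing $A$ by $B$ and using $\rho_A=\rho_B$, the vectors $v_1,\dots,v_n$ realize $G$ as a unit distance graph of $B$ as well, giving $U(A)\subseteq U(B)$ and, by symmetry, equality. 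The contact case is the special case where $ij\in E(G)$ whenever the bodies meet, i.e.\ where no pair of interiors intersect; the same vectors work for $B$ by the identical argument, and compatibility of $\{v_1,\dots,v_n\}$ with $B$ (all pairwise $\norm{\cdot}{B}$-distances at least $2$, in the symmetric case) is exactly the statement that no two interiors overlap, which transfers via $\rho_A=\rho_B$.

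There is no real obstacle here; the only point requiring a moment's care is the bookkeeping about the domain of the signature ($[0,\pi)$ versus $[0,2\pi)$) and the observation that a profile is determined up to translation by its slope, so that the length does not depend on which of the two directions of a given slope we use. Once that is dispatched, everything reduces to quoting Lemma~\ref{Lem-Moving_with_argument} for $A$ and for $B$ and comparing the resulting inequalities.
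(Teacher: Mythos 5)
Your proof is correct and follows essentially the same route as the paper: first reduce to $M=\mathrm{id}$ via Lemma~\ref{Lem-Graphs_invariant_under_lt}, then show that the very same translation vectors $v_1,\dots,v_n$ realize $G$ for $B$ by applying Lemma~\ref{Lem-Moving_with_argument} to each pair and using $\rho_A=\rho_B$. Your remark about extending $\rho_A$ to $[0,2\pi)$ by $\pi$-periodicity is a small bookkeeping point the paper elides, but it is indeed the right observation and does not change the argument.
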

\begin{proof}
	Having already established Lemma \ref{Lem-Graphs_invariant_under_lt}, it suffices to show that if $\rho_A=\rho_B$ then $C(A)=C(B)$, $U(A)=U(B)$, and $I(A)=I(B)$. 

	Let $A_1, \dots, A_n\subset \R^2$ be translated copies of $A$ in the plane; let $v_1, \dots, v_n\in \R^2$ be vectors satisfying $A_i=A+v_i$ for every $i\in [n]$; and let $n$ translated copies of $B$, $B_1, \dots, B_n$, be defined by $B_i=B+v_i$ for $i\in [n]$. Now, consider any pair $(i, j)\in [n]^2$ and denote by $\theta_{i, j}$ and $r_{i, j}$ the argument and magnitude of the vector $v_i-v_j$. 
	By~\Cref{Lem-Moving_with_argument}, $A_i\cap A_j\neq \emptyset$ if and only if $r_{i, j}\leq\rho_A(\theta_{i, j})=\rho_B(\theta_{i, j})$, which is true if and only if $B_i\cap B_j\neq \emptyset$.  Similarly, $A_i^\circ\cap A_j^\circ\neq \emptyset$ if and only if $B_i^\circ \cap B_j^\circ\neq \emptyset$. It follows that if $A_1, \dots, A_n$ is a realization of a graph $G$ as a contact, unit distance, or intersection graph then $B_1, \dots, B_n$ is a realization of $G$ as a contact, unit distance, or intersection graph, respectively.

\end{proof}

	\section{Reducing to Symmetric Convex Bodies}\label{symmetriapp}
In this section we show that for proving the necessity of the condition of~\Cref{thm:main} it suffices to consider only symmetric convex bodies with the URTC property. 
We use the well-known trick in discrete geometry of considering the ``symmetrization'' of a convex body $A$, $K=\frac12(A+(-A))$. 
\begin{lemma}\label{finishlem}
	Let $A\subset \R^2$ be a convex body with signature $\rho_A$. Then $K=\frac12(A+(-A))$	 is a symmetric convex body with signature $\rho_K=\rho_A$.
\end{lemma}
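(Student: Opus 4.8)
The plan is to show two things: that $K = \frac12(A + (-A))$ is a symmetric convex body, and that its signature agrees with that of $A$. Symmetry is immediate: if $x \in K$, then $x = \frac12(a_1 - a_2)$ for some $a_1, a_2 \in A$, and then $-x = \frac12(a_2 - a_1) \in K$ as well, since the roles of $a_1$ and $a_2$ are symmetric. Compactness of $K$ follows because $K$ is the image of the compact set $A \times A$ under the continuous map $(a_1,a_2) \mapsto \frac12(a_1 - a_2)$, and convexity follows because $A + (-A)$ is a Minkowski sum of convex sets, hence convex, and scaling by $\frac12$ preserves convexity. To see that $K$ has non-empty interior, note that $A$ has non-empty interior, so it contains a small disk $D$; then $K \supseteq \frac12(D + (-A)) \supseteq \frac12(D + (-d_0))$ for any fixed $d_0 \in D$, which is a translate of $\frac12 D$, a disk of positive radius. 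So $K$ is a symmetric convex body.

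The main point is the claim $\rho_K = \rho_A$. Fix an angle $\theta \in [0,\pi)$, and let $e$ be a unit vector with argument $\theta$. I want to show the longest segment of slope $\theta$ inside $K$ has the same length as the longest such segment inside $A$. One clean way: for a convex body $C$, $\rho_C(\theta)$ equals the length of the orthogonal projection of $C$ onto the line $\R e$ — that is, $\rho_C(\theta) = \max_{c \in C}\langle c, e\rangle - \min_{c \in C}\langle c, e\rangle$. This is because projecting onto $\R e$ is a linear map, and a convex body projects onto a segment whose length is realized by a chord of $C$ parallel to $e$ (take preimages of the two extreme points and join them; conversely any segment of slope $\theta$ in $C$ projects into that interval). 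Now compute this width for $K$: since $\langle \cdot, e\rangle$ is linear,
\[
\max_{x \in K}\langle x, e\rangle = \max_{a_1,a_2 \in A} \tfrac12\big(\langle a_1,e\rangle - \langle a_2, e\rangle\big) = \tfrac12\Big(\max_{a_1 \in A}\langle a_1,e\rangle - \min_{a_2 \in A}\langle a_2,e\rangle\Big),
\]
and likewise $\min_{x \in K}\langle x,e\rangle = \tfrac12\big(\min_{a \in A}\langle a,e\rangle - \max_{a \in A}\langle a,e\rangle\big)$. Subtracting, the width of $K$ in direction $e$ equals $\max_{a \in A}\langle a,e\rangle - \min_{a \in A}\langle a,e\rangle$, which is exactly the width of $A$ in direction $e$, i.e.\ $\rho_A(\theta)$. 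Hence $\rho_K(\theta) = \rho_A(\theta)$ for all $\theta$.

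The only step requiring a little care is the identity $\rho_C(\theta) = \max_{c\in C}\langle c,e\rangle - \min_{c\in C}\langle c,e\rangle$ relating the profile length to the projection width; once that is in hand, the rest is just linearity of the inner product and the fact that $\max$ and $\min$ over a Minkowski sum split additively. I expect this projection identity to be the main (very mild) obstacle, and it can be justified directly: any chord of $C$ parallel to $e$ projects into the interval $[\min_c\langle c,e\rangle, \max_c\langle c,e\rangle]$ so has length at most the width; and picking $p, q \in C$ attaining the max and min of $\langle \cdot, e\rangle$, the segment $[p,q]$ need not be parallel to $e$, but sliding its endpoints along the two supporting lines $\{\langle \cdot, e\rangle = \text{const}\}$ — which meet $C$ in (possibly degenerate) segments, using convexity and compactness — one produces a genuine chord of slope $\theta$ of exactly that length. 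This gives $\rho_C(\theta) \ge \text{width}$, and combined with the upper bound, equality.
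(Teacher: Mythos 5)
Your reduction of $\rho_C(\theta)$ to the orthogonal projection width $\max_{c\in C}\langle c,e\rangle - \min_{c\in C}\langle c,e\rangle$ is where the proof breaks: these two quantities are not equal for a general convex body. Take the triangle with vertices $(0,0)$, $(1,2)$, $(2,1)$ and $\theta=\pi/2$ (so $e=(0,1)$). The projection width onto the $y$-axis is $2$, but the longest vertical chord has length $3/2$ (attained at $x=1$). The same discrepancy occurs for symmetric bodies: for the ellipse $x^2/4+y^2\le 1$ and $e=(1,1)/\sqrt{2}$, the chord of slope $1$ through the center has length $4\sqrt{2/5}\approx 2.53$, while the projection width in direction $e$ is $\sqrt{10}\approx 3.16$. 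Your ``sliding'' argument fails precisely because the two supporting lines $\{\langle\cdot,e\rangle=\mathrm{const}\}$ can meet $C$ in single points that are not aligned along $e$ (as in the triangle example above), so there is nothing to slide: no chord of slope $\theta$ joins the two supporting lines. What you have actually shown is that $K$ and $A$ have the same width function, which is a weaker statement than having the same signature.

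The paper avoids this issue by arguing directly with chords. For $\rho_K\ge\rho_A$: if $\ell_\theta$ is a longest chord of $A$ at angle $\theta$, then $\frac12(\ell_\theta + (-\ell_\theta))\subset K$ is a segment of the same length and argument. For $\rho_K\le\rho_A$: if a longest chord of $K$ has endpoints $\frac12(a_1-a_2)$ and $\frac12(a_3-a_4)$ with $a_i\in A$, then the identity $\frac12(a_1-a_2)-\frac12(a_3-a_4)=\frac12(a_1+a_4)-\frac12(a_2+a_3)$ expresses the same displacement vector as the difference of two points of $A$ (by convexity of $A$), producing a chord of $A$ of the same length and argument. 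Your proof would need to be reworked along these lines, or some other route that genuinely compares chord lengths rather than projection widths.
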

\begin{proof}
	It is well-known and easy to check that $K$ is symmetric, bounded and convex. 	
	For the statement concerning the signature, let $\theta\in [0, \pi)$ be given. 
	Consider a profile $\ell_\theta$ through $A$ at angle $\theta$. Then $\frac12\left(\ell_\theta + (-\ell_\theta)\right)\subset K$ is a line segment of the same length and argument as $\ell_\theta$ and hence, $\rho_K(\theta)\geq \rho_A(\theta)$. 
	Conversely, consider a profile $\ell_\theta'$ through $K$ at angle $\theta$ and let $\frac12(a_1-a_2)$ and $\frac12(a_3-a_4)$ be the endpoints of $\ell_\theta'$ with $a_1, \dots, a_4\in A$. Then the vector $v = \frac12(a_1-a_2)-\frac12(a_3-a_4)$ has argument $\theta$ and magnitude $\rho_K(\theta)$, but since 
	$$\frac12(a_1-a_2)-\frac12(a_3-a_4) = \frac12(a_1+a_4)-\frac12(a_2+a_3)$$
	and $\frac12(a_1+a_4),\frac12(a_2+a_3)\in A$ by convexity of $A$, there is a line segment from $\frac12(a_1+a_4)$ to $\frac12(a_2+a_3)$ in $A$ with the same argument and magnitude as $v$. It follows that $\rho_K(\theta)\leq \rho_A(\theta)$.
\end{proof}

Thus, for every convex body $A$ there is a symmetric convex body $K$ with the same signature and which by~\Cref{suff:thm} therefore has the same contact, unit distance, and intersection graphs as $A$. 
\begin{proposition}\label{finishprop}
    Let $A$ and $B$ be convex bodies such that $\rho_A=\rho_B$. Then $A$ has the URTC property if and only if $B$ has the URTC property. In particular this applies when $B=\frac{1}{2}(A+(-A))$ is the symmetrization of $A$.
\end{proposition}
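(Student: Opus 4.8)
The plan is to show that whether a convex body has the URTC property is determined by its signature alone; the proposition is then immediate, and the ``in particular'' clause follows by combining this with Lemma~\ref{finishlem}.

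First I would restate the touching relation between translates purely in terms of the signature. By Lemma~\ref{Lem-Moving_with_argument}, if $v$ has argument $\theta$ and magnitude $r$ then $(A+v)\cap A\neq\emptyset$ iff $r\le\rho_A(\theta)$ while $(A+v)^\circ\cap A^\circ\neq\emptyset$ iff $r<\rho_A(\theta)$, so the translates $A$ and $A+v$ touch (intersect with disjoint interiors) exactly when $r=\rho_A(\theta)$. Here I adopt the convention, already implicit in Lemma~\ref{Lem-Moving_with_argument}, that $\rho_A$ is extended $\pi$-periodically to $[0,2\pi)$; this is consistent since a longest segment of argument $\theta$ in $A$ is also a longest segment of argument $\theta+\pi$. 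Translating by $-v_i$, two translates $A+v_1,A+v_2$ touch iff $\abs{v_1-v_2}=\rho_A(\vartheta)$, where $\vartheta$ is the argument of $v_1-v_2$.

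Next I would rewrite the URTC definition accordingly. With $A_i=A+v_i$, the condition that a translate $A+v$ be interior-disjoint from but intersecting $A_i$ becomes, after translating by $-v_i$, precisely $\abs{v-v_i}=\rho_A(\arg(v-v_i))$ (and such a $v$ is automatically distinct from $v_1,v_2$, since if $v=v_i$ then $(A+v)^\circ\cap A_i^\circ=A_i^\circ\neq\emptyset$). Hence, writing for an \emph{admissible pair} $p,q\in\R^2$, i.e.\ one with $\abs{p-q}=\rho_A(\arg(p-q))$,
\[
S_A(p,q)=\setbuilder{r\in\R^2}{\abs{r-p}=\rho_A(\arg(r-p))\ \text{and}\ \abs{r-q}=\rho_A(\arg(r-q))},
\]
the body $A$ has the URTC property if and only if $\abs{S_A(p,q)}=2$ for every admissible pair $(p,q)$. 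Crucially, the set of admissible pairs, the sets $S_A(p,q)$, and their cardinalities all depend on $A$ only through $\rho_A$.

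Therefore, if $\rho_A=\rho_B$ then the admissible pairs coincide and $S_A(p,q)=S_B(p,q)$ for every such pair, so $A$ has the URTC property iff $B$ does. For the last sentence, Lemma~\ref{finishlem} gives $\rho_{\frac12(A+(-A))}=\rho_A$, so the equivalence applies with $B=\frac12(A+(-A))$. I do not anticipate a genuine obstacle; the only care needed is the bookkeeping of arguments modulo $\pi$ versus in $[0,2\pi)$, and the elementary remark that a third-translate placement cannot coincide with $v_1$ or $v_2$, both of which are dealt with above.
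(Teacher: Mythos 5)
Your proof is correct and takes essentially the same route as the paper: both reduce the URTC condition to a statement depending only on the signature via Lemma~\ref{Lem-Moving_with_argument}, then conclude that equal signatures force equal URTC status, with the symmetrization case following from Lemma~\ref{finishlem}. The paper phrases this as a direct correspondence between translates of $A$ and $B$ with the same translation vectors, while you make the signature-dependence explicit by packaging it in the sets $S_A(p,q)$; the content is the same.
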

\begin{proof}
   Suppose that $A$ has the URTC property. Let $B_1=B+v_1$ and $B_2=B+v_2$ be translates of $B$ satisfying that $B_1\cap B_2\neq \emptyset$ but $B_1^\circ \cap B_2^\circ=\emptyset$. Let $A_1=A+v_1$ and $A_2=A+v_2$.  Since $\rho_A=\rho_B$, ~\Cref{Lem-Moving_with_argument} implies that $A_1\cap A_2\neq \emptyset$ but $A_1^\circ \cap A_2^\circ=\emptyset$. Again using~\Cref{Lem-Moving_with_argument} we obtain that for a vector $v_3$ it holds that $B+v_3$ intersects $B_1$ and $B_2$ but only at their boundary if and only if  $A+v_3$ intersects $A_1$ and $A_2$ but only on at the boundary. As $A$ has the URTC property it follows that there are exactly two choices of $v_3$ such that $B+v_3$ intersect $B_1$ and $B_2$ but only at their boundary. Since $B_1$ and $B_2$ were arbitrary it follows that $B$ has the URTC property. The converse implication is identical.
\end{proof}
Combined with the work done in the main body of this paper (Section~\ref{necsec} and~\ref{intsec})~\Cref{thm:main} follows immediately:
\begin{proof}[Proof of~\Cref{thm:main}]
We have already seen the sufficiency (Theorem~\ref{suff:thm}) of the condition.
Theorems~\ref{main:contact} and~\ref{theorem:main} (to be proved in the following sections) give the necessity in case $A$ and $B$ are symmetric. Suppose now that $A$ and $B$ are arbitrary convex bodies with the URTC property satisfying that for any linear transformation $B'$ of $B$, $\rho_{A}\neq \rho_{B'}$. For $X\in \{A,B\}$ let $K_X=\frac{1}{2}(X+(-X))$. If there exists a linear transformation $T$ satisfying that $\rho_{K_A}=\rho_{T(K_B)}$ then symmetri yields that $K_A=T(K_B)=K_{T(B)}$. Applying~\Cref{finishlem} we obtain the contradiction that $\rho_A=\rho_{T(B)}$ and we conclude that for no linear transformation $T$ is $\rho_{K_A}=\rho_{T(K_B)}$. Since $K_A$ and $K_B$ are symmetric and by~\Cref{finishprop} both have the URTC property, we conclude that $C(K_A)\neq C(K_B)$, $U(K_A)\neq U(K_B)$, and $I(K_A)\neq I(K_B)$.
But $C(K_X)=C(X)$, $U(K_X)=U(X)$, and $I(K_X)=I(X)$ for $X\in \{A,B\}$ by~\Cref{suff:thm}, so the result follows.
\end{proof}



	\section{Necessity for Contact and Unit Distance Graphs}\label{necsec}
\iffull
In this section we prove the necessity of the condition of Theorem \ref{thm:main} in the case of contact graphs in the setting where $A$ and $B$ are symmetric. The proof for unit distance graphs is completely identical so we will merely provide a remark justifying this claim by the end of the section. The main result of the section is slightly more general than required since we will use it in the classification of intersection graphs. 
\else
In this section we prove Theorem~\ref{main:contact}.
The proof for unit distance graphs is completely identical so we will merely provide a remark justifying this claim by the end of the section.
\fi

\iffull
\subsection{Properties of the Signature}
\else
For $\theta\in [0,2\pi)$ we define $e_A(\theta)$ to be the vector of argument $\theta$ and with $\norm{e_A(\theta)}{A}=1$. We also define $\rho_A(\theta)=2\norm{e_A(\theta)}{2}$. Then $\rho_A(\theta)$ be thought of as the ``diameter'' of $A$ in direction $\theta$. One of our most important tools is the following lemma. 

\begin{lemma}\label{findirlem}
Let $A$ and $B$ be symmetric convex bodies in $\R^2$. 
Suppose that for every finite set $\Theta \subset [0,\pi)$ and for every $\eps>0$, there exists a linear map $T\colon \R^2 \to \R^2$ satisfying that $|\rho_{T(B)}(\theta)-\rho_A(\theta)|<\eps$ for all $\theta\in \Theta$. 
Then there exists a linear map $T\colon \R^2\to \R^2$ with $T(B)=A$.
\end{lemma}
\fi

\iffull
\iffull
Towards proving the main theorem of the section, we prove three lemmas regarding the behaviour of the signature of a symmetric convex body. The content of the concluding lemma is as follows: If $A$ and $B$, are symmetric convex bodies satisfying that no linear map transforms $A$ into $B$, then there exists a finite set of angles, $\theta_1, \dots, \theta_n$ and an $\eps>0$, such that the signature of no linear transform of $B$ is $\varepsilon$-close to the signature of $A$ at every angle $\theta_i$. This observation motivates the constructions of the section to follow.
\else
First a remark:  We defined $\rho_A(\theta)=2 \norm{e_A(\theta)}{2}$ where $e_A(\theta)$ had argument $\theta$ and $\norm{e_A(\theta)}{2}=1$. Note that this definition matches that of the profile from~\Cref{suffapp} in the case where $A$ is symmetric.

In order to prove~Lemma~\ref{findirlem}, we start out by proving two lemmata concerning continuity properties of the signature $\rho_A$ when $A$ is symmetric. 
\fi
\begin{lemma}\label{anglecont}
For a symmetric convex body $A\subset \R^2$ the signature $\rho_A\colon [0,\pi)\to \R$ is continuous.
\end{lemma}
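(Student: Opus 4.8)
The plan is to reduce the lemma to the elementary fact that the Euclidean radial function of a symmetric convex body depends continuously on the direction. First I would note that, since $A$ is symmetric, the longest segment contained in $A$ with argument $\theta$ is the chord through the origin: along any line with direction $\theta$, the length of the intersection with $A$ is a concave function of the offset of the line from the origin (Minkowski-average two parallel chords and use that $\lambda A+(1-\lambda)A=A$ for convex $A$) and it is even in this offset (since $A=-A$), hence it is maximized at offset $0$. Writing $e_A(\theta)$ for the point at which the ray from the origin with argument $\theta$ meets $\partial A$ and $r_A(\theta):=\norm{e_A(\theta)}{2}$, this yields $\rho_A(\theta)=2\,r_A(\theta)$, so it suffices to prove that $\theta\mapsto r_A(\theta)$ is continuous on $[0,\pi)$.

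For the continuity I would use the norm $\norm{\cdot}{A}$ whose unit ball is $A$. With $u(\theta)=(\cos\theta,\sin\theta)$, the point $e_A(\theta)$ is the unique positive multiple of $u(\theta)$ lying on $\partial A=\set{x\mid\norm{x}{A}=1}$, so $r_A(\theta)\,\norm{u(\theta)}{A}=1$, i.e.\ $r_A(\theta)=1/\norm{u(\theta)}{A}$. Now $\norm{\cdot}{A}$ is a norm on $\R^2$, hence Lipschitz with respect to $\norm{\cdot}{2}$: choosing $r>0$ with $\ball{0}{r}\subseteq A$ (possible since $A$ is symmetric with nonempty interior) gives $\norm{x}{A}\le\norm{x}{2}/r$ and therefore $\abs{\norm{x}{A}-\norm{y}{A}}\le\norm{x-y}{A}\le\norm{x-y}{2}/r$. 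Since $\theta\mapsto u(\theta)$ is continuous, $\theta\mapsto\norm{u(\theta)}{A}$ is continuous; and since $A$ is bounded, fixing $R>0$ with $A\subseteq\ball{0}{R}$ gives $\norm{u(\theta)}{A}\ge\norm{u(\theta)}{2}/R=1/R>0$ for every $\theta$. Hence $r_A(\theta)=1/\norm{u(\theta)}{A}$ is the reciprocal of a continuous, strictly positive function, so it is continuous, and so is $\rho_A=2\,r_A$.

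I do not anticipate a genuine obstacle here; the only points that need slight care are the reduction in the first paragraph (that for a symmetric body the extremal chord in a direction is the central one, of length $2r_A(\theta)$) and, should a later argument require it, upgrading ``continuous'' to ``Lipschitz''. For the latter one can argue directly from the sandwich $\ball{0}{r}\subseteq A\subseteq\ball{0}{R}$: for angles $\theta,\theta'$ the points $e_A(\theta),e_A(\theta')\in\partial A$ and the origin form a triangle in which the side $e_A(\theta)e_A(\theta')$ subtends angle $\abs{\theta-\theta'}$ at the origin, and comparing $A$ with its inscribed and circumscribed discs bounds $\abs{r_A(\theta)-r_A(\theta')}$ by a constant depending only on $r$ and $R$ times $\abs{\theta-\theta'}$. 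Either route completes the proof.
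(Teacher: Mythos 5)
Your proof is correct and takes essentially the same approach as the paper: both express $\rho_A(\theta)$ as twice the Euclidean length of the radius $u(\theta)/\norm{u(\theta)}{A}$ of $A$ in direction $\theta$ and then invoke continuity of the norm $\norm{\cdot}{A}$ on $\R^2$. You additionally spell out why, for a symmetric body, the longest chord in a given direction is the central one (so that $\rho_A(\theta)=2/\norm{u(\theta)}{A}$); the paper asserts this identity without proof, so your first paragraph is a welcome but inessential supplement.
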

\begin{proof}
For $\theta \in [0,\pi)$ we define the rotation matrix $M_{\theta}=
\begin{pmatrix}
    \cos \theta       & -\sin \theta  \\
    \sin \theta & \cos \theta
\end{pmatrix}$.
We further let $e_1=(1,0)\in \R^2$. It is easy to check that the mapping $\theta\mapsto M_\theta e_1$ is a continuous map $[0,\pi) \to \R^2$. Furthermore, since any norm on $\R^2$ induces the same topology as that of the Euclidian norm, the map $\norm{\cdot }{A}:\R^2\to \R_{\geq 0}$ is continuous with respect to the standard topology on $\R^2$. Thus, the map $\varphi\colon [0,\pi) \to \R^2$ defined by 
$
\varphi(\theta)=\frac{M_\theta e_1}{\|M_\theta e_1\|_A},
$
is continuous. The conclusion follows as $\rho_A(\theta)= 2\|\varphi(\theta)\|_2$.
\end{proof}

\begin{lemma}\label{operatorcont}
Let $A$ be a symmetric convex body. Let $K$ be the set of all non-singular linear maps $\R^2 \to \R^2$ with topology induced by the operator norm. For each $\theta \in [0,\pi)$ the map $f_{\theta}\colon K \to \R_{\geq 0}$ defined by $f_{\theta}(T)=\rho_{T(A)}(\theta)$ is continuous.
\end{lemma}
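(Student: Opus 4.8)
The plan is to first put $f_\theta$ into closed form and then recognize it as a composition of elementary continuous maps. I would fix the rotation matrix $M_\theta$ and the unit vector $u_\theta\mydef M_\theta e_1$ of Euclidean length $1$ and argument $\theta$, exactly as in the proof of~\Cref{anglecont}; there one obtains $\rho_A(\theta)=2\norm{\varphi(\theta)}{2}$ with $\varphi(\theta)=u_\theta/\norm{u_\theta}{A}$, and since $\norm{u_\theta}{2}=1$ this is simply $\rho_A(\theta)=2/\norm{u_\theta}{A}$. For $T\in K$ the image $T(A)$ is again a symmetric convex body, because an invertible linear map preserves symmetry, compactness, convexity, and the property of having nonempty interior; hence the same identity applied to $T(A)$ gives $\rho_{T(A)}(\theta)=2/\norm{u_\theta}{T(A)}$. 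Using
\[
\norm{v}{T(A)}=\inf\{\lambda\ge 0\mid v\in\lambda\,T(A)\}=\inf\{\lambda\ge 0\mid T^{-1}v\in\lambda A\}=\norm{T^{-1}v}{A},
\]
I arrive at the explicit formula
\[
f_\theta(T)=\rho_{T(A)}(\theta)=\frac{2}{\norm{T^{-1}u_\theta}{A}}.
\]

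It then remains to check that the right-hand side depends continuously on $T$, which I would do by writing it as the composition
\[
T\;\longmapsto\;T^{-1}\;\longmapsto\;T^{-1}u_\theta\;\longmapsto\;\norm{T^{-1}u_\theta}{A}\;\longmapsto\;\frac{2}{\norm{T^{-1}u_\theta}{A}}
\]
and arguing that each arrow is continuous on the relevant domain. Inversion $T\mapsto T^{-1}$ is continuous on $K$ in the operator-norm topology: the operator norm induces the usual topology on the space of $2\times2$ matrices, and the entries of $T^{-1}$ are rational functions of the entries of $T$ whose common denominator $\det T$ is nonzero on $K$. Evaluation $S\mapsto Su_\theta$ is $1$-Lipschitz, since $\norm{Su_\theta-S'u_\theta}{2}\le\norm{S-S'}{\mathrm{op}}\,\norm{u_\theta}{2}=\norm{S-S'}{\mathrm{op}}$. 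The map $w\mapsto\norm{w}{A}$ is continuous, as already used in the proof of~\Cref{anglecont} (every norm on $\R^2$ induces the Euclidean topology). Finally $t\mapsto 2/t$ is continuous on $(0,\infty)$, and $(0,\infty)$ is the correct domain: since $T$ is invertible and $u_\theta\ne 0$ we have $T^{-1}u_\theta\ne 0$, so $\norm{T^{-1}u_\theta}{A}>0$ because $\norm{\cdot}{A}$ is a genuine norm (here we use that $A$, being a convex body, is bounded with nonempty interior). Composing the four continuous maps shows that $f_\theta$ is continuous.

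I do not expect a real obstacle in this argument. The only points that need care are (i) checking that $T(A)$ remains a symmetric convex body, so that $\rho_{T(A)}$ and $\norm{\cdot}{T(A)}$ are meaningful, and (ii) deriving the closed form $f_\theta(T)=2/\norm{T^{-1}u_\theta}{A}$ correctly from the definition of the signature --- which amounts to reusing the identity $\rho_A(\theta)=2\norm{\varphi(\theta)}{2}$ from the proof of~\Cref{anglecont} together with the norm-pullback identity above. Once the closed form is established, the statement follows from the standard fact that a composition of continuous functions is continuous.
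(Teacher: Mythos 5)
Your argument is correct and follows essentially the same route as the paper's: both derive the closed form $f_\theta(T)=2\norm{v_\theta}{2}/\norm{T^{-1}v_\theta}{A}$ via the norm-pullback identity $\norm{v}{T(A)}=\norm{T^{-1}v}{A}$ and then reduce continuity to a chain of elementary continuous maps, the only nontrivial link being inversion $T\mapsto T^{-1}$. The one cosmetic difference is your justification of that step — Cramer's rule (entries of $T^{-1}$ are rational in the entries of $T$ with nonvanishing determinant on $K$) — where the paper instead cites the general fact that inversion is continuous on the invertible elements of a unital Banach algebra; both are valid and equally short here.
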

\begin{proof}
If $T$ is non-singular, $T(A)$ is also symmetric with non-empty interior and thus induces a norm $\norm{\cdot}{T(A)}$ on $\R^2$. Let $M_\theta$ be as in the proof of Lemma \ref{anglecont} and $v_\theta=M_\theta e_1$. Then
$$
\rho_{T(A)}(\theta)=\frac{2\|v_{\theta} \|_2}{\|v_{\theta}\|_{T(A)}}.
$$
Note that 
$$
\|v_\theta\|_{T(A)}=\inf\{\lambda\geq 0\mid  v_\theta\in \lambda T(A)\}=\inf\{\lambda\geq 0\mid  T^{-1}v_\theta \in \lambda A\}=\|T^{-1}v_\theta\|_{A},
$$
and so it suffices to show that the mapping $T\mapsto T^{-1}v_\theta$ is a continuous map $K\to \R^2$. Now, for $T_1,T_2 \in K$,
$$
\|T_1^{-1}v_\theta-T_2^{-1}v_\theta\|_2\leq \|T_1^{-1}-T_2^{-1} \|_{op} \|v_\theta\|_2,
$$
where $\|\cdot \|_{op}$ is the operator norm, so it suffices to show that the inversion $T\mapsto T^{-1}$ is a continuous map $K\to K$. It is a standard result from the literature that on the set of invertible elements of a unital Banach algebra, $\mathcal A$, the operation of inversion, $x\mapsto x^{-1}$, is continuous. Since $K$ is exactly the invertible elements of the Banach algebra of linear maps $\R^2\to\R^2$, the conclusion follows.
\end{proof}

\begin{figure}
\centering
\includegraphics{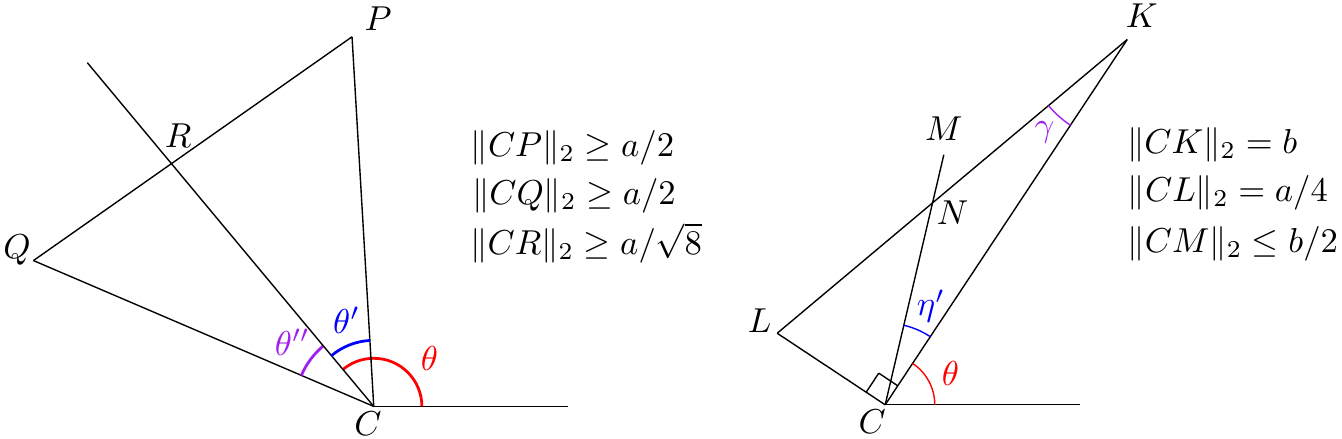}  
\caption{The two steps of the proof of Lemma~\ref{lem:bounded_signature}}
\label{geometry}
\end{figure}

\begin{lemma}\label{lem:bounded_signature}
    Let $A$ be a symmetric convex body and $a, b\in \R_{>0}$ constants. Consider some $\eta>0$ and suppose that for every $\theta\in [0, \pi)$ there exists $\theta'\in [\theta, \theta+\eta)$ (with coordinates modulo $\pi$) such that $a\leq \rho(\theta')\leq b$. If $\eta$ is sufficiently small as a function of $a$ and $b$, $\rho_A(\theta)< 2b$ for every $\theta\in [0, \pi)$.
\end{lemma}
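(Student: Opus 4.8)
The statement is a quantitative, ``local-to-global'' claim about the signature: if along every short arc of directions one can find an angle where $\rho_A$ is bounded below by $a$ and above by $b$, then $\rho_A$ is bounded above globally by $2b$. I would prove this by a two-step geometric argument (matching the two steps suggested by Figure~\ref{geometry}). Throughout, recall that by symmetry of $A$ the profile $\ell_\theta$ at angle $\theta$ is a chord of $A$ centered at the origin, of length $\rho_A(\theta)$, so $\pm\tfrac12\rho_A(\theta)e_A(\theta)\in A$, where $e_A(\theta)$ is the unit vector (Euclidean) of argument $\theta$.

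First I would fix a direction $\theta_0$ and pick, using the hypothesis, a nearby direction $\theta'\in[\theta_0,\theta_0+\eta)$ with $a\le\rho_A(\theta')\le b$; call $p=\tfrac12\rho_A(\theta')e_A(\theta')$, so $\pm p\in A$ and $a/2\le\|p\|_2\le b/2$. The key point is: \emph{every chord of $A$ in a direction close to $\theta'$ passes close to the segment $[-p,p]$ and hence cannot be much longer than $2b$.} Concretely, suppose $q=\tfrac12\rho_A(\theta_0)e_A(\theta_0)$ with $q\in A$. The four points $\pm p,\pm q$ lie in the convex body $A$, hence so does their convex hull, a (possibly degenerate) parallelogram-like quadrilateral symmetric about the origin. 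If $\|q\|_2$ were very large compared to $b$, then since the angle between $q$ and $p$ is at most $\eta$, the point $p$ would lie ``just outside'' the thin cone around $q$; but convexity forces the chord of $A$ in direction $\theta'$ through the origin to extend at least to where the boundary of $\operatorname{conv}(\pm p,\pm q)$ crosses direction $\theta'$, and elementary trigonometry shows this crossing point has Euclidean norm at least roughly $\|q\|_2\sin(\text{something})/\dots$ — more usefully, I would run the estimate in the other direction: the chord $[-q,q]$ together with a slightly-off-axis point of $A$ of bounded norm forces, by a similar-triangles / intercept-theorem computation, that $\rho_A(\theta')\ge c\,\|q\|_2$ for an explicit constant $c=c(\eta)$ that tends to a positive limit as $\eta\to0$ (essentially $c\approx 1$ for small $\eta$, since a long chord in direction $\theta_0$ and the bound $\|p\|_2\le b/2$ at nearby angle $\theta'$ pin down how ``wide'' $A$ must be near direction $\theta'$). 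Combining with $\rho_A(\theta')\le b$ gives $\|q\|_2\le b/c$, i.e. $\rho_A(\theta_0)=2\|q\|_2\le 2b/c$. Choosing $\eta$ small enough that $1/c<\dots$ — wait, that only gives $2b/c$, not $2b$; so the argument must be slightly sharper.

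The refinement (the second step) is to not compare $\theta_0$ directly with a single good $\theta'$, but to iterate/bootstrap: first use the hypothesis to get a \emph{crude} global bound $\rho_A(\theta)\le C_0$ for all $\theta$ (by the intercept-theorem estimate above with any fixed small $\eta$, where $C_0$ depends on $a,b,\eta$ but is finite), and then, knowing $\rho_A$ is bounded by $C_0$ everywhere and continuous (Lemma~\ref{anglecont}), use the hypothesis a second time with the \emph{a priori} bound in hand to improve $C_0$ down to $2b$, by taking $\eta$ sufficiently small depending on $a,b$. In the improved step, for a given $\theta_0$ with $q=\tfrac12\rho_A(\theta_0)e_A(\theta_0)\in A$, pick $\theta'\in[\theta_0,\theta_0+\eta)$ with $a\le\rho_A(\theta')\le b$ and let $p=\tfrac12\rho_A(\theta')e_A(\theta')\in A$; now also $q'=\tfrac12\rho_A(\theta_0+\eta')e_A(\dots)$ for the relevant nearby directions are all within the ball of radius $C_0/2$. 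The point $p\in A$ lies within angle $\eta$ of $\theta_0$; writing $p=\alpha e_A(\theta_0)+\beta e_A(\theta_0+\pi/2)$ we get $|\beta|\le\|p\|_2\sin\eta\le (b/2)\sin\eta$ and $\alpha\le b/2$. Convexity of $A$ applied to $p$ and the chord $[-q,q]$ shows that the chord of $A$ in direction exactly $\theta_0$ through $p$ stays in $A$, forcing $\|q\|_2\le\|p\|_2/\cos\eta + (\text{correction})$; making $\eta$ small forces $\|q\|_2\le b/2\cdot(1+o(1))<b$, hence $\rho_A(\theta_0)<2b$. I would carry out the exact inequality chain here carefully — this is the one place where the constant must come out to exactly $2b$ and not merely $O(b)$.

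\textbf{Main obstacle.} The real work is the second (bootstrapping) step and pinning down ``$\eta$ sufficiently small as a function of $a$ and $b$'' so that the final constant is exactly $2b$ rather than a worse multiple. The first, crude bound is almost immediate from convexity and the intercept theorem; the continuity of $\rho_A$ (Lemma~\ref{anglecont}) is what makes the crude bound usable. The delicate trigonometric estimate — showing that a point of $A$ at Euclidean distance $\le b/2$ and angular distance $\le\eta$ from direction $\theta_0$, combined with $A$'s convexity and the crude bound, forces the half-chord in direction $\theta_0$ to have length $<b$ — is where I would be most careful, and where the lower bound $a>0$ enters (it prevents the good direction $\theta'$ from being ``degenerate,'' keeping the parallelogram $\operatorname{conv}(\pm p,\pm q)$ nondegenerate and controlling the intercept ratio uniformly).
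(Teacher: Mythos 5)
There is a genuine gap, and the approach goes astray in one essential respect: the first step of the argument should be a \emph{lower} bound on $\rho_A$, not a crude upper bound. For $\eta<\pi/4$, applying the hypothesis at both $\theta$ and $\theta-\eta$ yields points $P,Q\in A$ at arguments within $\eta$ of $\theta$ on either side with $\norm{CP}{2},\norm{CQ}{2}\geq a/2$ (where $C$ is the center of $A$); convexity of the segment $PQ$ then gives $\rho_A(\theta)\geq a/\sqrt2$ uniformly in $\theta$. That lower bound is then exploited in the direction \emph{perpendicular} to the putative long chord: assuming $CK\subset A$ has length $b$ and argument $\theta$, the lower bound provides a point $L\in A$ with $\norm{CL}{2}=a/4$ at argument $\theta+\pi/2$, and the law of sines in the triangle $CKN$ (where $N$ is the intersection of the ray toward the nearby boundary point with the segment $KL$) forces that boundary point to have Euclidean norm close to $b$, contradicting $\rho_A(\theta')\leq b$. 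It is this perpendicular triangle $CKL$, not a parallelogram $\conv(\pm p,\pm q)$ along direction $\theta_0$, that makes the constant come out as exactly $2b$.

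The specific step you lean on --- that the chord $[-q,q]$ together with a slightly off-axis bounded point forces $\rho_A(\theta')\geq c\,\norm{q}{2}$ with $c=c(\eta)$ bounded away from $0$ --- is false. Take $A=\conv(\pm q,\pm r)$ with $q$ of Euclidean length $R$ in direction $\theta_0$ and $r$ of length $s\ll R$ in direction $\theta_0+\pi/2$. Then $\rho_A(\theta')$ collapses to roughly $2s$ as soon as $\theta'-\theta_0\gtrsim s/R$, so it bears no fixed ratio to $\norm{q}{2}=R$. Such bodies are excluded not by that local estimate but by the hypothesis itself (once $\eta$ is small compared to $a/b$, the arc around $\theta_0$ where $\rho_A>b$ is wider than $\eta$), and detecting that exclusion is precisely the role played by the perpendicular lower bound. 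Since the crude upper bound of your step~1 is not actually established by your sketch and your step~2 rests on the false intercept estimate, the bootstrap does not close.
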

\begin{proof}
    We compute all coordinates modulo $\pi$. Also let $C$ be the center of $A$.
    
    For the first part of the argument, see the left-hand side of Figure~\ref{geometry}. We start by letting $\eta<\pi/4$. Then for arbitrary $\theta\in [0, \pi)$ there exists $\theta', \theta''\in [0, \pi/4)$ and points $P, Q\in A$ such that the line segments $CP$ and $CQ$ have arguments $\theta-\theta'$ and $\theta+\theta''$, respectively, and satisfy $\norm{CP}{2}, \norm{CQ}{2}\geq \frac a2$. By convexity, the line segment $PQ$ is contained in $A$ and furthermore, it is easy to verify that every point on $PQ$ has distance at least $\frac{a}{\sqrt 8}$ to $C$. Since the profile of $A$ at angle $\theta$ which passes through $C$ intersects $PQ$, it follows that $\rho_A(\theta)\geq \frac{a}{\sqrt{2}}$. Thus, for all $\theta\in [0, \pi)$, $\rho_A(\theta)\geq \frac{a}{\sqrt{2}}$.

    For the remaining argument, see the right-hand side of Figure~\ref{geometry}. Towards our main conclusion, suppose for contradiction that there is a line segment $CK$ of length $b$ and argument $\theta$ contained in $A$. Let $CL$ be the line segment of argument $\theta+\pi/2$ and length $a/4$ and note that it is contained in $A$. By assumption there exists a boundary point $M$ of $A$ such that $\norm{CM}{2}\leq b/2$ and $CM$ has argument $\theta+\eta'$ where $0<\eta'<\eta$. By convexity, $CM$ intersects $LK$ at a point $N$. Denote by $\gamma$ the angle $\angle CKL$ and note that it is a constant depending only on $a$ and $b$. Now, by the law of sines in $\triangle CKN$, 
        $$
        \norm{CN}{2}=\frac{b\sin(\gamma)}{\sin(\pi-\gamma-\eta')}=b\cdot \frac{\sin(\gamma)}{\sin(\gamma+\eta')},
        $$
        so for $\eta$ sufficiently small as a function of $a$ and $b$, $\norm{CN}{2}>b/2$, which contradicts $\norm{CM}{2}\leq b/2$.
\end{proof}
\iffull
\begin{lemma}\label{findirlem}
Let $A$ and $B$ be symmetric convex bodies in $\R^2$. 
Suppose that for every finite set $\Theta \subset [0,\pi)$ and for every $\eps>0$, there exists a linear map $T\colon \R^2 \to \R^2$ satisfying that $|\rho_{T(B)}(\theta)-\rho_A(\theta)|<\eps$ for all $\theta\in \Theta$. 
Then there exists a linear map $T\colon \R^2\to \R^2$ with $T(B)=A$.
\end{lemma}
\begin{proof}
\else
We are now ready to prove~\Cref{findirlem}:

\begin{proof}[Proof of Lemma~\ref{findirlem}]
\fi
We may clearly assume that $A$ and $B$ are centered at the origin. For $n\in \N$, let $\Theta_n=\{i \pi /2^n\mid i \in [2^n]\}$; let $\eps_n=1/n$; and let $T_n:\R^2\to \R^2$ be a linear map satisfying that  $|\rho_{T_n(B)}(\theta)-\rho_A(\theta)|<\eps_n$ for all $\theta\in \Theta_n$. 

For $x\in \R^2$ and $r>0$ denote by $\ball{x}{r}=\{y\in \R^2\mid \|x-y\|_2<r\}$ the open ball in the Euclidian norm with center $x$ and radius $r$. 
We begin by proving that $\{T_n\mid n\in \N\}$ is uniformly bounded in the operator norm by showing that there exists a constant $R>0$ such that $T_n(B)\subset\ball{0}{R}$ when $n$ is sufficiently large. To this end, let $a=\inf_{\theta\in [0, \pi)}\rho_A(\theta)$ and $b=\sup_{\theta\in [0, \pi)}\rho_A(\theta)$ and note that $0<a, b <\infty$. There exists $N_0\in \N$ and constants $a', b'>0$ such that for every $n\geq N_0$, $a-\eps_n\geq a'$ and $b+\eps_n\leq b'$. This implies that for every $n\geq N_0$ and $\theta\in \Theta_n$, $a'\leq \rho_{T_n(B)}(\theta)\leq b'$. Applying Lemma \ref{lem:bounded_signature}, we find that there exists an $N_1\in \N$ such that for every $n\geq N_1$, $\rho_{T_n(B)}<2b'$. Thus, $T_n(B)\subset\ball{0}{b'}$ for $n\geq N_1$ so $\{T_n\mid n\in \N\}$ is uniformly bounded. As it moreover holds for $n\geq N_1$ and $\theta \in \Theta_n(\theta)$ that $\rho_{T_n(B)}\geq a'$, convexity of $T_n(B)$ gives that $\rho_{T_n(B)}\geq \frac{a'}{\sqrt{2}}$ for $n\geq N_1$. In particular $T_n(B)\supset \ball{0}{a'/\sqrt{8}}$.

Since $(T_n)_{n>0}$ is uniformly bounded we may by compactness assume that $(T_n)_{n>0}$ converges in operator norm to some linear map $T:\R^2 \to \R^2$ by passing to an appropriate subsequence. As moreover $T_n(B)\supset \ball{0}{a'/\sqrt{8}}$, it is easy to check that $T(B)\supset \ball{0}{a'/\sqrt{8}}$ so in particular $T$ is non-singular. We claim that $T(B)=A$. As $A$ and $T(B)$ are symmetric it suffices to show that $\rho_A=\rho_{T(B)}$. Moreover, $\rho_A$ and $\rho_{T(B)}$ are both continuous by Lemma~\ref{anglecont} so since $\bigcup_{n\in \N} \Theta_n$ is dense in $[0,\pi)$ it suffices to show that $\rho_A|_{\Theta_n}=\rho_{T(B)}|_{\Theta_n}$ for each $n\in \N$. 
To see this let $\theta\in \bigcup_{n\in \N} \Theta_n$ and let $f_\theta$ be defined as in Lemma~\ref{operatorcont}. Then $f_\theta$ is continuous so $f_\theta(T_n)\to f_\theta(T)$ as $n\to \infty$ (here we use that $T$ is non-singular). It follows that
$$
\rho_{T(B)}(\theta)=\lim_{n\to \infty}\rho_{T_n(B)}(\theta)=\rho_A(\theta),
$$
as desired. This completes the proof.
\end{proof}

\else
Due to space limitations we have left out the proof, but it can be found in the full version~\cite{}.
\fi
 \iffull
\subsection{Establishing Necessity}
 Before proving the part of Theorem~\ref{thm:main} concerning contact graphs we describe certain lattices which gives rise to contact graphs that can be realised in an essentially unique way. We start with the following definition.
 \else
 We proceed to describe certain lattices which give rise to contact graphs that can only be realized in an essentially unique way. We start with the following definition.
\fi
\begin{definition}
Let $A\subset \R^2$ be a symmetric convex body with the URTC property, and $\norm{\cdot}{A}$ the associated norm. Let $e_1,e_2\in \R^2$ be such that $\norm{e_1}{A}=\norm{e_2}{A}=\norm{e_1-e_2}{A}=2$. We define the lattice $\mathcal{L}_A(e_1,e_2)=\{a_1e_1+a_2e_2\mid (a_1,a_2)\in \Z^2\}$. 
\end{definition}
Note that if $e_1$ has been chosen with $\|e_1\|_A=2$, then using the URTC property there are  precisely two vectors $v$ with $\|v\|_A=\|v-e_1\|_A=2$. If one is $v_2$ the second is $e_1-v_2$ so regardless how we choose $e_2$ we obtain the same lattice. 
Let us describe a few properties of the lattice $\mathcal{L}_A(e_1,e_2)$.
Using the triangle inequality and the URTC property of $A$ it is easily verified that for distinct $x,y\in \mathcal{L}_A(e_1,e_2)$, $\|x-y\|_A\geq 2$ with equality holding exactly if $x-y\in \mathcal{S}_A:=\{e_1,e_2,-e_1,-e_2,e_1-e_2,e_2-e_1\}$.
Another useful fact is the following:

\begin{lemma}\label{normrellem}
With $\mathcal{S}_A$ as above it holds that $\frac{1}{2}\conv(\mathcal{S}_A)\subset A\subset \conv(\mathcal{S}_A)$. Here $\conv(\mathcal{S}_A)$ is the convex hull of $\mathcal{S}_A$. If in particular $B$ is another symmetric convex body for which $\norm{e_1}{B}=\norm{e_2}{B}=\norm{e_1-e_2}{B}=2$, then for all $x\in \R^2$ it holds that $\frac{1}{2}\norm{x}{A}\leq \norm{x}{B}\leq 2\norm{x}{A}$.
\end{lemma}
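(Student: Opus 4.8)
The statement to prove is Lemma~\ref{normrellem}: writing $\mathcal{S}_A = \{\pm e_1, \pm e_2, \pm(e_1-e_2)\}$, we have $\frac12\conv(\mathcal S_A) \subset A \subset \conv(\mathcal S_A)$, and consequently comparable norms when a second body $B$ shares the same three unit vectors. Here's my plan.

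The plan is to treat the two inclusions $\tfrac12\conv(\mathcal{S}_A)\subseteq A$ and $A\subseteq\conv(\mathcal{S}_A)$ separately, using nothing beyond the triangle inequality for $\norm{\cdot}{A}$ and the fact that $\norm{s}{A}=2$ for every $s\in\mathcal{S}_A$ (immediate from $\norm{e_1}{A}=\norm{e_2}{A}=\norm{e_1-e_2}{A}=2$ and symmetry of the norm); the concluding norm estimate for $B$ will then be a formal consequence. The inclusion $\tfrac12\conv(\mathcal{S}_A)\subseteq A$ is immediate: each vertex $\tfrac12 s$ of $\tfrac12\conv(\mathcal{S}_A)$ has $\norm{\tfrac12 s}{A}=1$, so $\tfrac12 s\in A$, and convexity of $A$ gives the rest.

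The content is the inclusion $A\subseteq\conv(\mathcal{S}_A)$, and the key idea is to show that the boundary $\partial\conv(\mathcal{S}_A)$ never meets the open unit ball $A^\circ$. Since $\norm{e_1}{A}=\norm{e_2}{A}=2$ forces $e_2\neq\pm e_1$, the vectors $e_1,e_2$ are linearly independent and $\conv(\mathcal{S}_A)$ is a centrally symmetric hexagon; listing its vertices cyclically as $e_1,e_2,e_2-e_1,-e_1,-e_2,e_1-e_2$, one checks that the six differences of consecutive vertices are exactly $\pm e_1,\pm e_2,\pm(e_1-e_2)$, so every edge has the form $[s,s']$ with $s,s',s'-s\in\mathcal{S}_A$. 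For a point $x=(1-t)s+ts'$ on such an edge ($t\in[0,1]$), the triangle inequality applied to $x=s+t(s'-s)$ and to $x=s'-(1-t)(s'-s)$ yields $\norm{x}{A}\geq\norm{s}{A}-t\norm{s'-s}{A}=2-2t$ and $\norm{x}{A}\geq\norm{s'}{A}-(1-t)\norm{s'-s}{A}=2t$, hence $\norm{x}{A}\geq\max(2-2t,2t)\geq1$, i.e.\ $x\notin A^\circ$.

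Now if some $p\in A$ lay outside $\conv(\mathcal{S}_A)$, then, since $0$ is interior to the symmetric full‑dimensional set $\conv(\mathcal{S}_A)$, the segment from $0$ to $p$ would leave $\conv(\mathcal{S}_A)$ at a point $q=\lambda p\in\partial\conv(\mathcal{S}_A)$ with $\lambda\in(0,1)$; but then $\norm{q}{A}=\lambda\norm{p}{A}\leq\lambda<1$, so $q\in A^\circ$, contradicting the previous paragraph. Thus $A\subseteq\conv(\mathcal{S}_A)$. Finally, $\mathcal{S}_A$ depends only on $e_1,e_2$, so the same argument applied to $B$ gives $\tfrac12\conv(\mathcal{S}_A)\subseteq B\subseteq\conv(\mathcal{S}_A)$ as well; writing $H\mydef\conv(\mathcal{S}_A)$ and $\norm{\cdot}{H}$ for its gauge, each sandwich $\tfrac12 H\subseteq A\subseteq H$ and $\tfrac12 H\subseteq B\subseteq H$ is equivalent to $\norm{x}{H}\leq\norm{x}{A}\leq2\norm{x}{H}$ and $\norm{x}{H}\leq\norm{x}{B}\leq2\norm{x}{H}$, and combining the two chains gives $\tfrac12\norm{x}{A}\leq\norm{x}{B}\leq2\norm{x}{A}$.

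I do not expect a serious obstacle. The one real decision is to argue the second inclusion edge‑by‑edge on $\partial\conv(\mathcal{S}_A)$—so that two applications of the triangle inequality suffice—rather than trying to bound $\norm{\cdot}{A}$ globally; and the single point that needs a line of care is confirming that $\conv(\mathcal{S}_A)$ really is the hexagon described, so that every edge is spanned by two elements of $\mathcal{S}_A$. (Note that the URTC hypothesis is not actually used in this lemma.)
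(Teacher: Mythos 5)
Your proof is correct and follows essentially the same route as the paper: the first inclusion from convexity, the second by showing every point on the boundary hexagon has $\norm{\cdot}{A}\ge 1$ via the triangle inequality, and the norm comparison by sandwiching both $A$ and $B$ between $\tfrac12\conv(\mathcal S_A)$ and $\conv(\mathcal S_A)$. You merely spell out the edge-by-edge triangle-inequality estimate and the gauge manipulation in more detail than the paper, which leaves these as one-line remarks.
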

\begin{proof}
As $\frac{1}{2}\mathcal{S}_A\subset A$ and $A$ is convex the first inclusion is clear. For the second inclusion we note that all points $y$ on the hexagon connecting the points $e_1,e_2,e_2-e_1,-e_1,-e_2,e_1-e_2$ of $\mathcal{S}_A$ in this order has $\|y\|_A\geq 1$ by the triangle inequality and so $A\subset\conv(\mathcal{S}_A)$.

For the last statement of the lemma note that if $x\in \R^2$ then
\begin{align*}
\norm{x}{B} & =\inf_{\lambda\geq 0}\{x \in \lambda B\}\geq \inf _{\lambda\geq 0}\{ x \in \lambda \conv(\mathcal{S}_B)\}\\
&=\inf _{\lambda\geq 0}\{ x \in \lambda\conv(\mathcal{S}_A)\}\geq \inf _{\lambda\geq 0}\{x \in 2\lambda A\}=\frac{1}{2}\norm{x}{A},
\end{align*}
and similarly $\|x\|_A\geq  \frac 12 \|x\|_B$.
\end{proof}

\begin{definition}\label{def:latticeu}
We say that a graph $G=(V,E)$ is \emph{lattice unique} if $|V|=n\geq 3$ and there exists an enumeration of its vertices $v_1,\dots,v_n$ such that
\begin{itemize}
\item The  vertex induced subgraph $G[v_1,v_2,v_3]\simeq K_3$ is a triangle.
\item For $i>3$ there exists distinct $j,k,l<i$ such that $G[v_j,v_k, v_l]\simeq K_3$ and both $(v_i,v_j)$ and $(v_i,v_k)$ are edges of $G$.
\end{itemize}
\end{definition}

Suppose that $A$ is a symmetric convex body with the URTC property, that $\mathcal{A}\subset \R^2$ is compatible with $A$, and that $G=C_A(\mathcal{A})$ is lattice unique. Enumerate the points of $\mathcal{A}=\{v_1,\dots,v_n\}$ according to the definition of lattice uniqueness. Without loss of generality assume that $v_1=0$. 
Then the URTC property of $A$ combined with the lattice uniqueness of $G$ gives that $v_4,\dots,v_n$ are uniquely determined from $v_2$ and $v_3$ and all contained in $\mathcal{L}_A(v_2,v_3)$. If moreover $B$ is another convex body with the URTC property, $\mathcal{B}=\{v_1',\dots,v_n'\}\subset \R^2$  has $v_1'=0$ and is compatible with $B$, and $C_B(\mathcal{B})\simeq C_A(\mathcal{A})$ via the graph isomorphism $\varphi:v_i' \mapsto v_i$, then the linear map $T:\R^2 \to \R^2$ defined by $T:a_1v_2'+a_2v_3'\mapsto a_1v_2+a_2v_3$ satisfies that $T|_{\mathcal{B}}=\varphi$. 
\\

Before commencing the proof of Theorem~\ref{main:contact} let us highlight the main ideas. The most important tool is Lemma~\ref{findirlem} according to which there exist $\eps>0$ and a finite set of directions $\Theta$ such that for any linear tranformation $B'$ of $B$ there is a direction $\theta\in\Theta$ such that $\rho_A(\theta)$ and $\rho_{B'}(\theta)$ differ by at least $\eps$.
We will construct $G$ by describing a finite set $\mathcal{A}\subset \R^2$ compatible with $A$, and defining $G=C_A(\mathcal{A})$. Now, $\mathcal{A}$ will be a disjoint union of two sets of points, $\mathcal{A}=\mathcal{U}\cup \mathcal{W}$, where $\mathcal{U}$ and $\mathcal{W}$ will play complementary roles. 
The construction will be such that $\mathcal{U}$ is a subset of a lattice $\mathcal{L}=\mathcal{L}_A(e_1,e_2)$ and such that the corresponding induced subgraph $G[\mathcal{U}]$ of $G$ is lattice unique. More precisely $\mathcal{U}$ will consist of $|\Theta|$ large hexagons connected along their edges.
When attempting to realize $G$ as a contact graph of $B$ the lattice uniqueness enforces that $G[\mathcal{U}]$ is realized as a subgraph of a lattice $\mathcal{L}_B(e_1',e_2')$ in essentially the same way. 
The remaining points of $\mathcal{W}$ do not lie in the lattice $\mathcal{L}$. 
They constitute \emph{rigid beams} in the directions from $\Theta$ ``connecting'' diagonally opposite points of the $|\Theta|$ hexagons of $\mathcal{U}$. The construction of $\mathcal{A}$ is depicted in the left-hand side of Figure~\ref{hexagon} and in Figure~\ref{hexagon3}.
When trying to reconstruct the same contact graph (or a supergraph) with beams connecting the corresponding points of $\mathcal{U}'$, we will find that in at least one direction the beam becomes too long or too short.

\begin{proof}[Proof of Theorem~\ref{main:contact}]

We let $e_1,e_2\in \R^2$ be such that $\|e_1\|_A=\|e_2\|_A=\|e_1-e_2\|_A=2$ and
define the lattice $\mathcal{L}\mydef\mathcal{L}_A(e_1,e_2)$. We also define
the infinite graph $G_0 \mydef C_A(\mathcal{L})$. Without loss of generality
we can assume that $e_1$ and $e_2$ satisfy
that $\|e_1\|_2=\|e_2\|_2=\|e_1-e_2\|_2=2$, since there exists a non-singular
linear transformation $T$ such that $\|T(e_1)\|_2=\|T(e_2)\|_2=\|T(e_1)-T(e_2)\|_2=2$,
and $C(A) = C(T(A))$. Note that in this setting we can use Lemma \ref{normrellem} to compare $A$
to the circle of radius 1 and obtain $\frac12 \| x\|_2\leq \|x\|_A\leq 2\|x\|_2$ for every $x\in \R^2$.



\begin{figure}
\centering
\includegraphics{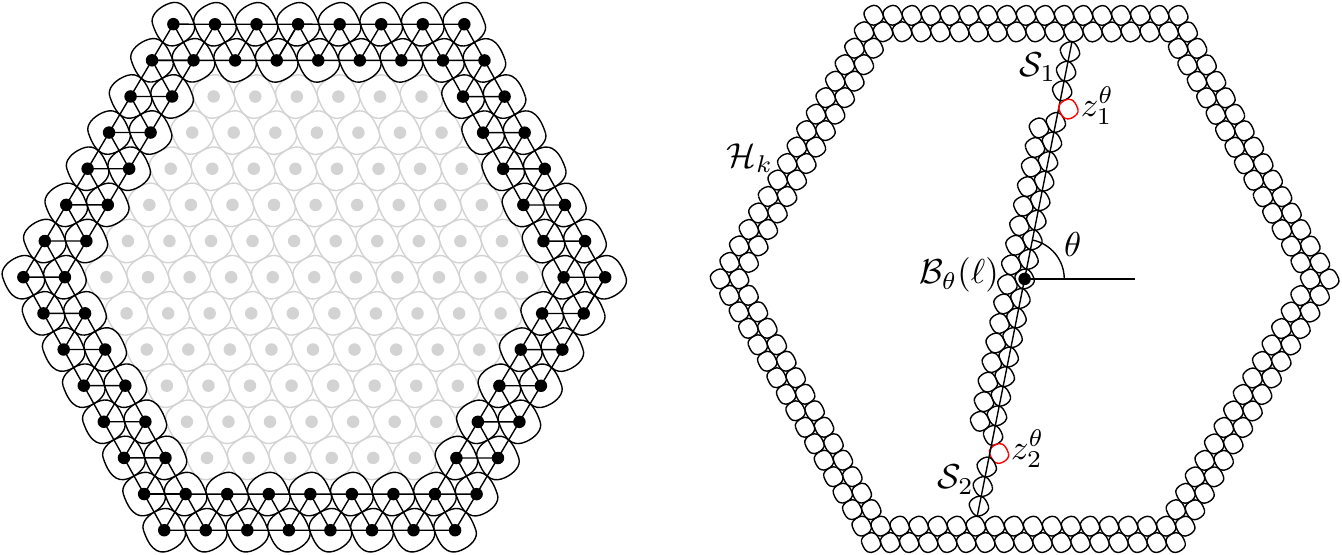}
\caption{Left: The points of $\mathcal{H}_6$ along with the corresponding lattice unique subgraph $G_0[\mathcal{H}_6]$. Right: The attachment of the beam $\mathcal{B}_{\theta}(\ell)$.}
\label{hexagon}
\end{figure}

As already mentioned we will construct $G$ by specifying a finite point set $\mathcal{A}\subset \R^2$ compatible with $A$ and define $G=C_A(\mathcal{A})$. The construction of $\mathcal{A}$ can be divided into several sub-constructions. We start out by describing a hexagon of points $\mathcal{H}_k$ for $k\in \N$ which satisfies that $C_A(\mathcal{H}_k)$ is lattice unique.

\begin{construction}[$\mathcal{H}_k$]
For an illustration of the construction see the left-hand side of Figure~\ref{hexagon}.
For $x,y\in \mathcal{L}$ we write $d(x,y)$ for the distance between $x$ and $y$ in the graph $G_0$, and for $k\in \N$ we define $\mathcal{H}_k=\{x\in \mathcal{L}\mid  d(x,0)\in \{k,k+1\}\}$. Clearly $G_0[\mathcal{H}_k]$ is a lattice unique graph. Moreover, using that $e_1$ and $e_2$ satisfy $\|e_1\|_2=\|e_2\|_2=\|e_1-e_2\|_2=2$ it is easy to check that the points $\{x\in \mathcal{L}\mid  d(x,0)=k\}\subset \mathcal{H}_k$ lie on a regular hexagon $H_k$ whose corners have distance exactly $2k$ to the origin in the Euclidian norm. In particular the points $p\in \mathcal{H}_k$ has $\|p\|_2\geq \sqrt{3}k$, and thus $\|p\|_A\geq \frac{\sqrt{3}}{2}k$ by Lemma~\ref{normrellem}.
\end{construction}

For a given $\theta\in [0,\pi)$ and $\ell\in \N$ we will construct a set of points $\mathcal{B}_\theta(\ell)\subset \R^2$ compatible with $A$ which constitute a ``beam'' of argument $\theta$:
\begin{construction}[$\mathcal{B}_\theta{(\ell)}$]
Let $e_{\theta}\in \R^2$ be the vector of argument $\theta$ with $\|e_{\theta}\|_A=2$, and let $f_{\theta}\in \R^2$ be such that $\|f_{\theta}\|_A=\|f_{\theta}-e_{\theta}\|_A=2$ (by the URTC property we have two choices for $f_{\theta}$). For a given $\ell\in \N$ we define 
\[
    \mathcal{B}_\theta(\ell)=\{a e_\theta\mid  a \in \{-\ell,\dots,\ell\}\}\cup \{ae_\theta+f_{\theta}\mid  a \in \{-\ell,\dots, \ell-1\}\}
\]
Note that $\mathcal{B}_\theta(\ell)$ is compatible with $A$ and that $C(\mathcal{B}_\theta(\ell))$ is lattice unique. 
\end{construction}

For a given $k$ we want to choose $\ell$ as large as possible such that $\mathcal{B}_\theta(\ell)$ ``fits inside'' $G_0[\mathcal{H}_k]$. We then wish to ``attach'' $\mathcal{B}_\theta(\ell)$ to $G_0[\mathcal{H}_k]$ with extra points $\mathcal{S}$, the number of which does neither depend on $k$ nor on $\theta$. We wish to do it in such a way that $\mathcal{A}_1^k(\theta):= \mathcal{B}_\theta(\ell)\cup G_0[\mathcal{H}_k] \cup \mathcal{S}$ is compatible with $A$. The precise construction is as follows:

\begin{figure}
\centering
\includegraphics{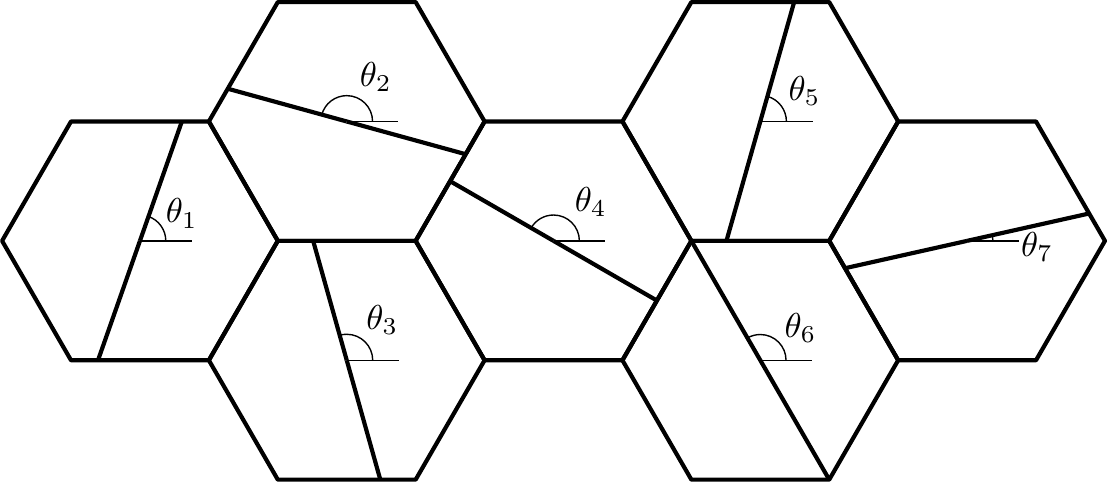}
\caption{The final point set $\mathcal{A}$ where the point sets $\mathcal C_k(\theta)$ are ``glued'' together by translating them such that the contact graph realized by the union of the subsets $\mathcal H_k\subset \mathcal C_k(\theta)$ is lattice unique.}
\label{hexagon3}
\end{figure}

\begin{construction}[$\mathcal{C}_k(\theta)$]
See Figure~\ref{hexagon} (right).
Consider the open line segment $L_{\theta}=\{re_{\theta}\mid r \in (-r_{\max},r_{\max})\}$ where $r_{\max}$ is maximal with the property that for all points $x\in L_{\theta}$ and all $y\in H_k$ it holds that $\|x-y\|_A>4$. Also let $\ell\in \N$ be maximal such that $\{a e_\theta\mid  a \in \{-\ell,\dots,\ell\}\}\subset L_{\theta}$. Note that $\ell \geq \frac{\sqrt{3}}{4}k-3$ as the points $p\in H_k$ has $\|p\|_A\geq \frac{\sqrt{3}}{2}k$. If in particular $k> \frac{12}{\sqrt{3}-1}$ it holds that $\ell > \frac{k}{4}$.

When $\ell$ is chosen in this fashion, we have that $\mathcal{B}_\theta(\ell)$ is contained in the interior of $H_k$. Further, for all points $x\in \mathcal{B}_\theta(\ell)$ and all $y\in H_k$, it holds by the triangle inequality that $\|x-y\|_A>2$ since by construction every point of $\mathcal B_\theta(\ell)$ has distance at most 2 to $L_\theta$ in the norm $\norm{\cdot}{A}$. 
Now,  $\mathcal{B}_\theta(\ell)$ will constitute our beam in direction $\theta$ and we will proceed to show that we can attach it to $\mathcal{H}_k$, as illustrated, using only a constant number of extra points. That this can be done is conceptually unsurprising but requires a somewhat technical proof.

For this we let $s>0$ be minimal with the property that there exists and $y\in \mathcal{H}_k$ such that $\|se_{\theta}-y\|_A=2$. As $d_A(L_{\theta},\mathcal{H}_k)> 4$ it holds that $s> \ell+1$. Define $\mathcal{S}_1(\theta)=\{se_{\theta},(s-1)e_{\theta},\dots,(s-s')e_{\theta}\}$ where $s'$ is chosen maximal such that  $\mathcal{S}_1\cup \mathcal{B}_\theta(\ell)$ is compatible with $A$. In other words $s-s'-1<1+\ell \leq s-s'$. 
As $2\leq \|(s-s')e_{\theta}-\ell e_{\theta}\|_A< 4$ we can find at least one point $z_1^\theta\in \R^2$ such that $\|z_1^\theta-(s-s')e_{\theta}\|_A=\|z_1^\theta-\ell e_{\theta}\|_A=2$ (this is one of the red copies of $A$ in Figure~\ref{hexagon}). It is easy to check that $\mathcal{B}_\theta(\ell)\cup \mathcal{S}_1(\theta)\cup \left\{z_1^\theta\right\}$ is compatible with $A$. To see that $\left\{z_1^\theta\right\}\cup \mathcal{H}_k$ is also compatible with $A$ we note that for any point $x\in \mathcal{H}_k$, 
$$
\|x-z_1^\theta\|_A\geq \| x-\ell e_{\theta}\|_A-\| \ell e_{\theta}-z_1^\theta\|_A\geq 4-2=2.
$$
Finally, we need to argue that $\abs{\mathcal{S}_1(\theta)}$ is bounded by a constant independent of $\theta$ and $k$. To this end let $P=\ell e_\theta$, $Q$ a point on $H_k$ of minimal Euclidian distance to $P$, and $R$ the intersection between $H_k$ and the line $\{re_\theta \mid r \in \R\}$. It is easy to check that the points $Q$ and $R$ lie on the same edge of $H_k$ and that the angle $\angle QPR\leq \pi/6$. It follows that $\norm{PR}{2}= \norm{PQ}{2}/\cos(\angle QPR)\leq \frac{2}{\sqrt{3}}\norm{PQ}{2}$. Combining this with the fact that $\norm{PQ}{2}\leq 2\norm{PQ}{A}\leq2d_A(\{\ell e_{\theta}\},H_k)\leq 12$ we obtain

$$
\|se_{\theta}-\ell e_\theta\|_A\leq \norm{PR}{A}\leq 2\norm{PR}{2}\leq 16\sqrt{3}<28,
$$
and so $\mathcal{S}_1(\theta)$ consists of at most $13$ points.

We may similarly define $\mathcal{S}_2(\theta)$ and $z_2^\theta$ to attach the other end of the beam, $\mathcal{B}_\theta(\ell)$. Letting $\mathcal{C}_k(\theta)=\mathcal{H}_k\cup \mathcal{B}_\theta(\ell) \cup \mathcal{S}_1(\theta) \cup \mathcal{S}_2 \cup \left\{z_1^\theta,z_2^\theta\right\}$ be the combination of the components completes the construction.
\end{construction}

\begin{figure}
\centering
\includegraphics{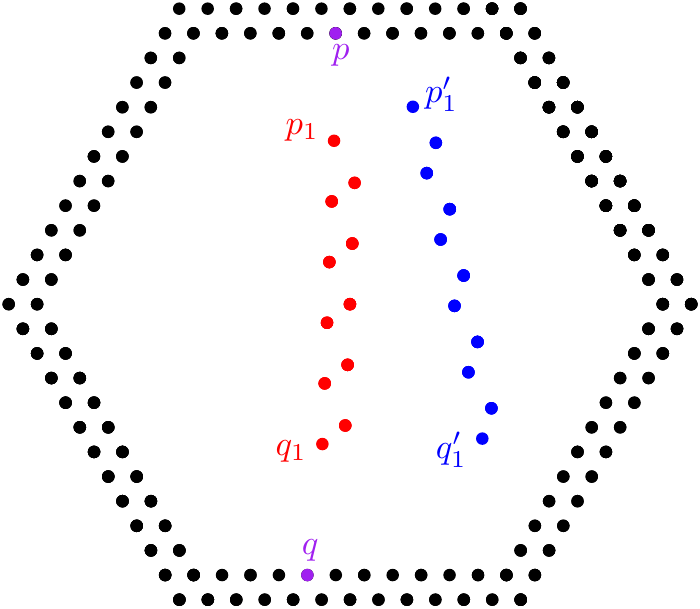}
\caption{Situation from the argument that no graph in $C(B)$ has a subgraph isomorphic to $G$.}
\label{hexagon4}
\end{figure}

We are now ready to construct $\mathcal{A}$ which will consist of several translated copies $\mathcal{C}_k(\theta)$.
\begin{construction}[$\mathcal{A}$]\label{construction:A}
By Lemma~\ref{findirlem} we can find an $\eps\in (0,1)$ and a finite set of directions $\Theta\subset [0,\pi)$ such that for all linear maps $T:\R^2 \to \R^2$ there exists $\theta \in \Theta$ such that 
$$
\left| \frac{\rho_{A}(\theta)}{\rho_{T(B)}(\theta)}-1 \right|\geq \eps.
$$
That we can scale the deviation to be multiplicative rather than additive is possible because $0<\inf_{\theta\in [0, \pi)}\rho_A(\theta)\leq \sup_{\theta\in [0, \pi)}\rho_A(\theta)<\infty$.

For each $\theta\in \Theta$ we construct a copy of $\mathcal{C}_k(\theta)=\mathcal{H}_k\cup \mathcal{B}_\theta(\ell) \cup \mathcal{S}_1(\theta) \cup \mathcal{S}_2(\theta) \cup \{z_1^\theta,z_2^\theta\}$. We then choose translations $t_{\theta}\in \R^2$ for each $\theta \in \Theta$ such that $\bigcup_{\theta \in \Theta}( \mathcal{H}_k+t_{\theta})\subset\R^2$ is compatible with $A$ and induces a lattice unique contact graph. We can choose $(t_\theta)_{\theta\in \Theta}$ in numerous ways to satisfy this. One is depicted in Figure~\ref{hexagon3}. Another is obtained by enumerating $\Theta=\{\theta_1,\dots,\theta_q\}$ and defining $t_{\theta_i}=((2k+3)e_1-(k+1)e_2) \times (i-1)$. The exact choice is not important and picking one, we define $\mathcal{A}(k)=\bigcup_{\theta \in \Theta} (\mathcal{C}_k(\theta)+t_{\theta})$ which is a point set compatible with $A$. Lastly, we set $\mathcal{A}=\mathcal{A}\left(\left\lceil\frac{180}{\eps}\right\rceil\right)$.
\end{construction}

We are now ready for the final step of the proof:
\paragraph*{Proving that no graph in $C(B)$ contains a subgraph isomorphic to $G=C_A(\mathcal{A})$.} Suppose for contradiction that there exists a set of points $\mathcal{B}\subset \R^2$ such that $G$ is isomorphic to a subgraph of $C_B(\mathcal{B})$. We may clearly assume that $|\mathcal{A}|=|\mathcal{B}|$ and we let $\varphi:\mathcal{A}\to \mathcal{B}$ be a bijection which is also a graph homomorphism when considered as a map $C_A(\mathcal{A})\to C_B(\mathcal{B})$. 
The points $\bigcup_{\theta \in \Theta}( \mathcal{H}_k+t_{\theta})$ induce a lattice unique contact graph of $A$. Thus, we may write $\bigcup_{\theta \in \Theta}( \mathcal{H}_k+t_{\theta})=\{p_1,\dots,p_n\}$ such that $p_1,p_2$ and $p_3$ induce a triangle of $G$ and such that for $i>3$ there exist distinct $j,k,l< i$ such that $p_j,p_k$ and $p_l$ induce a triangle and such that $(p_i,p_k)$ and $(p_i,p_l)$ are edges of $G$. By translating the point sets $\mathcal{A}$ and $\mathcal{B}$ we may assume that $\varphi(p_1)=p_1=0$. Then applying an appropriate linear transformation $T$, thus replacing $B$ by $T(B)$, we may assume that $\varphi(p_2)=p_2$ and $\varphi(p_3)=p_3$. Finally, the discussion succeeding~\Cref{def:latticeu} implies that in fact $\varphi|_{\bigcup_{\theta \in \Theta}( \mathcal{H}_k+t_{\theta})}$ is the identity. 

As noted in Construction \ref{construction:A}, there exists $\theta\in \Theta$ such that $\left| \frac{\rho_{A}(\theta)}{\rho_{T(B)}(\theta)}-1 \right|\geq \eps$. The outline of the remaining argument is as follows: The Euclidian length of the beam $\mathcal{B}_\theta(\ell)$ is $2\ell \rho_{A}(\theta)$, but we will see that rigidity of $\bigcup_{\theta \in \Theta}( \mathcal{H}_k+t_{\theta})$ means that it is also $ 2\ell \rho_{T(B)}(\theta)+O(1)$. When $k$ (and hence $\ell$) is large enough, this will contradict the inequality above.

Formally, assume $t_\theta=0$ without loss of generality. Let $p,q\in \mathcal{H}_k$ be such that for some $x\in \mathcal{S}_1(\theta)$, $\|p-x\|_A=2$, and for some $y\in \mathcal{S}_2(\theta)$, $\|q-y\|_A=2$. Note that $\varphi(p)=p$ and $\varphi(q)=q$. Also define $p_1=\ell e_\theta$, $q_1=-\ell e_\theta, p_1'=\varphi(p_1)$ and $q_1'=\varphi(q_1)$ (see Figure~\ref{hexagon4}). Then
\begin{align*}
&\left| \|p_1q_1\|_{T(B)} -\|p_1q_1\|_A \right|=\left|\|p_1q_1\|_{T(B)}-\|p_1'q_1'\|_{T(B)}\right| \\
\leq &\left|\|p_1q_1\|_{T(B)}-\|p_1'q_1\|_{T(B)}\right|+\left|\|p_1'q_1\|_{T(B)}-\|p_1'q_1'\|_{T(B)}\right| 
\leq \|p_1p_1'\|_{T(B)}+ \|q_1q_1'\|_{T(B)}.
\end{align*}
Next, there is a path of length $\abs{\mathcal S_1(\theta)}+2$ from $p$ to $p_1$ in $\mathcal C_k(\theta)$ with intermediate vertices $\left\{z_1(\theta)\right\}\cup \mathcal S_1(\theta)$. Combining this with Lemma~\ref{normrellem} and the fact that $\varphi(p_1)=p_1'$, we find
$$
\|p_1-p_1'\|_{T(B)}\leq \|p_1-p\|_{T(B)}+\|p-p_1'\|_{T(B)}\leq 2\|p_1-p\|_A + \|p_1-p\|_A\leq 6(|\mathcal{S}_1|+2)\leq 90.
$$ 
Similarly, $\|q_1q_1'\|_{T(B)}\leq 90$, so  $\left| \|p_1q_1\|_{T(B)} -\|p_1q_1\|_{A} \right|\leq 180$. But on the other hand we have that $\ell > k/4$, and so arrive at the contradiction
$$
\left| \|p_1q_1\|_{T(B)} -\|p_1q_1\|_{A} \right|=4 \ell \left| \frac{\rho_{A}(\theta)}{\rho_{T(B)}(\theta)}-1 \right|\geq 4  \ell\eps>k\eps=\left\lceil\frac{180}{\eps}\right\rceil \cdot  \eps\geq 180.\qedhere
$$
\end{proof}
\begin{remark}
    We claimed that the proof of the part of Theorem~\ref{thm:main} concerning unit distance graphs is identical to the proof above. 
    In fact, if we replace $C(X)$ by $U(X)$ for $X\in\{A,B\}$ in the statement of \Cref{main:contact}, the result remains valid.
    To prove it we would construct $\mathcal{A}$ in precisely the same manner.
    The important point is then that the comments immediately prior to \Cref{thm:main} concerning the rigidity of the realization of lattice unique graphs remains valid.
    If in particular $\mathcal{B}\subset \R^2$ satisfies that $U_A(\mathcal{A})\simeq U_A(B)$ via the isomorphism $\varphi:\mathcal{A}\to \mathcal{B}$, we may assume that $\varphi|_{\bigcup_{\theta \in \Theta}( \mathcal{H}_k+t_{\theta})}$ is the identity as in the proof above.
    The remaining part of the argument comparing the lengths of the beams then carries through unchanged.
    In conclusion, we are only left with the task of proving \Cref{thm:main} for intersection graphs.
\end{remark}

	\section{Intersection Graphs}\label{intsec}

In this section we prove~\Cref{theorem:main}. 
Our proof strategy is as follows:
Consider two convex bodies $A$ and $B$ as in the statement of the theorem.
We construct an intersection graph $Q^k_A \in I(A)$ containing a cycle $\alpha_k$ such that in any drawing
of $Q^k_A$ as an intersection graph, $\alpha_k$ is contained in a translation of the annulus $kA \setminus (k - 1)A$. This allows us
to view $\alpha_k$ as an upscaled copy of the boundary of $A$ with a precision error decreasing in $k$.
Similarly, in any drawing of $Q^k_A$ as an intersection graph of $B$, the cycle $\alpha_k$ is an upscaled copy of the boundary of $B$.
The idea is then to build contact graphs using $\alpha_k$ from distinct copies of $Q^k_A$.
Since we know that $C(A)\neq C(B)$, it follows that $I(A)\neq I(B)$.

However, $\alpha_k$ is not a completely fixed figure since there are many drawings of $Q^k_A$ as an intersection graph. To capture this uncertainty, we introduce the concept of \emph{$\eps$-overlap graphs}.

\begin{definition}[$\eps$-overlap Graph]
	Let $\eps>0$ and $K \subset \R^2$ be a symmetric convex body, and let 
	$v_0, \ldots, v_{n - 1} \subset \R^2$ be $n$ points in the plane. 
	Suppose that for any $i, j \in [n]$, $\norm{v_iv_j}{K}\geq 2-\eps$.
	A graph $G$ with vertex set $[n]$ and edge set satisfying
	\begin{align*}
		E(G) \subseteq \setbuilder{(i, j)\in [n]^2}{\norm{v_iv_j}{K}\leq 2}
	\end{align*}
	is called an \emph{$\eps$-overlap graph} of $K$.
	We say that $\{v_0, \ldots, v_{n - 1}\}$ \emph{realize} the graph $G$ as an $\eps$-overlap graph of $K$.
	Further, we denote by $C_\eps(K)$ the set of graphs that can be realized as $\eps$-overlap graphs of $K$.
\end{definition}

We will use $\alpha_k$ to build an $\eps$-overlap graph where 
$\eps = O\left(\frac{1}{k}\right)$. We place copies of $Q^k_A$ centered at every point
$v_0, \ldots, v_{n - 1}\in\R^2$, which are the vertices of the graph, and say that there is an edge
between two points $v$, $v'$ if the corresponding cycles $\alpha_k$, $\alpha_k'$
intersect. Then using the following reduction from $\eps$-overlap graphs to contact
graphs finishes our proof.

\begin{lemma}\label{lemma:implyContact}
	Consider a graph $G'=(V,E')$ with $V=[n]$ and a convex body $A$.
	If for every $\eps>0$, $G'\in C_\eps(A)$, then there is a graph $G=(V,E)\in
	C(A)$ such that $E'\subseteq E$.
\end{lemma}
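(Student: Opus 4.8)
The plan is to use a compactness argument. For each $\eps > 0$ (say $\eps = 1/m$, $m \in \N$), we are given a point set $v_0^{(m)}, \dots, v_{n-1}^{(m)} \in \R^2$ realizing $G'$ as a $(1/m)$-overlap graph of $A$; that is, $\norm{v_i^{(m)} v_j^{(m)}}{A} \ge 2 - 1/m$ for all $i,j$, and $\norm{v_i^{(m)} v_j^{(m)}}{A} \le 2$ whenever $(i,j) \in E'$. First I would normalize: translate so that $v_0^{(m)} = 0$ for every $m$. Since $G'$ is connected-or-not, the relevant bound is that each coordinate stays controlled: for any $i$, the graph distance in $G'$ is irrelevant, but we do know $\norm{v_i^{(m)}}{A} \le 2$ is \emph{not} guaranteed in general unless $i$ is adjacent to $0$. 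So instead I would argue boundedness componentwise: actually it is cleanest to first discard the assumption of connectivity issues by noting we only need \emph{some} realization, and if $G'$ is disconnected we can translate the components arbitrarily close together; so we may assume $G'$ is connected (or handle each component and reassemble, using that the $\eps$-overlap lower bound is automatically satisfiable between far-apart components). Granting connectedness, each $v_i^{(m)}$ is within graph-distance $n$ of $v_0^{(m)} = 0$, and along a path each consecutive pair has $\norm{\cdot}{A} \le 2$, so $\norm{v_i^{(m)}}{A} \le 2n$ for all $i, m$. Hence the sequence $(v_0^{(m)}, \dots, v_{n-1}^{(m)})_{m \in \N}$ lies in a compact subset of $(\R^2)^n$.

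Next I would extract a convergent subsequence $v_i^{(m_t)} \to w_i$ as $t \to \infty$, for each $i$ simultaneously. By continuity of $\norm{\cdot}{A}$ (a norm on $\R^2$), we get $\norm{w_i w_j}{A} = \lim_t \norm{v_i^{(m_t)} v_j^{(m_t)}}{A}$. Taking limits in the two defining inequalities: from $\norm{v_i^{(m_t)} v_j^{(m_t)}}{A} \ge 2 - 1/m_t$ we obtain $\norm{w_i w_j}{A} \ge 2$ for all $i \ne j$; and from $\norm{v_i^{(m_t)} v_j^{(m_t)}}{A} \le 2$ for $(i,j) \in E'$ we obtain $\norm{w_i w_j}{A} \le 2$, hence $\norm{w_i w_j}{A} = 2$ for every edge of $G'$. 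One subtlety: we need the $w_i$ to be \emph{distinct} so that $\{w_0, \dots, w_{n-1}\}$ is a genuine $n$-point set compatible with $A$; this follows immediately from $\norm{w_i w_j}{A} \ge 2 > 0$ for $i \ne j$. Thus $\mathcal{A} := \{w_0, \dots, w_{n-1}\}$ is compatible with $A$ in the sense of the Preliminaries, and we may form $G := C_A(\mathcal{A})$, the contact graph with vertex set $\mathcal{A}$ (identified with $V = [n]$) and edge set $E = \{(i,j) : \norm{w_i w_j}{A} = 2\}$. By the previous paragraph $E' \subseteq E$, and by construction $G \in C(A)$, which is exactly the claim.

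The main obstacle I anticipate is the boundedness step when $G'$ is not connected (or, even if connected, if one worries that the $\eps$-overlap lower bound might force points apart in a way that breaks compactness — it does not, since it is only a lower bound). The clean fix is: prove the lemma first for connected $G'$ via the graph-distance bound above, then for general $G'$ observe that a disjoint union of compatible point sets, translated far enough apart (distances $\gg$ diameter of $A$), is again compatible with $A$ and realizes the disjoint-union contact graph; since each component's $\eps$-overlap realizations converge to a contact realization and the inter-component pairs trivially satisfy $\norm{\cdot}{A} \ge 2$ once separated, we assemble a global contact realization $G \supseteq G'$. A second minor point to get right is the bookkeeping of which vertex plays the role of the "origin" and that the bijection between $\mathcal{A}$ and $[n]$ is maintained consistently across the subsequence limit, but this is routine. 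No deep input beyond continuity of norms and sequential compactness of bounded sets in $\R^{2n}$ is needed.
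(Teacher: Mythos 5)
Your proof is correct and follows the same compactness argument as the paper's: extract a convergent subsequence of $\eps$-overlap realizations and pass to the limit in the two defining inequalities. The paper dispenses with the boundedness step by simply asserting "we may clearly assume that the values $\|v^m_i\|_A$ are bounded," whereas you spell out the justification (normalize one vertex to the origin, bound via graph distance in connected components, and translate far-apart components to keep the whole configuration bounded while preserving the $\geq 2-\eps$ separation); this is the right fix, though your first phrasing ("translate the components arbitrarily close together; so we may assume $G'$ is connected") is a non sequitur — translating does not connect $G'$ — and it is the parenthetical "handle each component and reassemble" that is the correct reading. You also correctly note the distinctness of the limit points via $\norm{w_iw_j}{A}\geq 2>0$, which the paper leaves implicit.
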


\begin{proof}
	Let $\eps_m$ be a positive sequence such that $\eps_m\longrightarrow 0$ as
	$m\longrightarrow \infty$ and suppose that $\mathcal
	A^m=\{v^m_1,\ldots,v^m_n\}\subset \R^2$ realize $G'$ as an $\eps_m$-overlap graph.
	We may clearly assume that the values $\|v^m_i\|_A$ are bounded.
	By passing from $\mathcal A^m$ to a subsequence, we may therefore assume that
	for each $i\in V$, $v^m_i$ converges to some point $v_i$ as
	$m\longrightarrow\infty$.
	Clearly, $\|v_iv_j\|_A\geq 2$, so $\{v_1,\ldots,v_n\}$ are compatible with $A$ and define a contact graph
	$G=(V,E)=C_A(\{v_1,\ldots,v_n\})$.
	Furthermore, if $ij\in E'$, then $\|v_iv_j\|_A\leq 2$, so $E'\subseteq E$.
\end{proof}

Combining this with \Cref{main:contact} will exactly give us our result. So the
rest of this section will be dedicated to showing that the graph $Q^k_A$ exists
and describe how to build $\eps$-overlap graphs using it.





In the construction of $Q^k_A$ we have a designated vertex $s_0$ with the
property that for every drawing of $Q^k_A$ as an intersection graph and every vertex $v\in \alpha_k$, we have $\norm{s_0v}{A}=\Omega(k)$.
To obtain this property, we first construct another graph $P^k_A$ (which will be contained in $Q^k_A$)
with a vertex $s_0$ such that in every drawing of $P^k_A$
as an intersection graph, $s_0$ is contained in $k$ nested disjoint
cycles. A priori, it is not clear what it means for $s_0$ to be contained in a cycle
of the graph in every drawing, since the drawing is not necessarily a plane
embedding of the graph.
However, as the following lemma shows, it is well-defined if $P^k_A$ is triangle-free.

\begin{lemma}\label{lemma:triangle}
	If $G$ is a triangle-free graph then every drawing of $G$ as an intersection
	graph is a plane embedding.
\end{lemma}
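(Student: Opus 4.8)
The plan is to prove the contrapositive in a strong form: if a drawing of $G$ as an intersection graph fails to be a plane embedding, then $G$ contains a triangle. Fix such a drawing; in the symmetric setting of this section it is given by points $v_1,\dots,v_n$, one per vertex, with $ij\in E(G)$ exactly when $\norm{v_iv_j}{A}\le 2$, each edge $ij$ being drawn as the segment $[v_i,v_j]$. (For a general convex body one first passes to the symmetrization $\tfrac12(A+(-A))$, which is symmetric and has the same intersection graphs, realised by the same point sets.) Since the drawing assigns distinct points to distinct vertices, it can fail to be a plane embedding only through one of two local features: (A) a point $v_c$ lies on the segment of some edge $ab$ with $c\notin\{a,b\}$; or (B) the segments of two edges $ab$ and $cd$ meet at a point that is not a common endpoint.

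The only metric input is the elementary identity: if $m$ lies on the closed segment $[v_i,v_j]$, then $m-v_i$ and $v_j-m$ are parallel and point the same way, so $\norm{v_im}{A}+\norm{mv_j}{A}=\norm{v_iv_j}{A}$. Hence if $ij\in E(G)$ and $p\in[v_i,v_j]$ with $p\ne v_i,v_j$, then both $\norm{v_ip}{A}$ and $\norm{pv_j}{A}$ are at most $\norm{v_iv_j}{A}\le 2$. I will use this first to dispose of feature (A): by injectivity $v_c\ne v_a,v_b$, so applying the identity with $p=v_c$ gives $ac,bc\in E(G)$, and with $ab\in E(G)$ the three distinct vertices $a,b,c$ span a triangle. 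The same reasoning settles the ``boundary'' cases of (B) as well: if an endpoint of one edge lies on the other edge we are back in case (A); and if two edges with a common endpoint meet again, the three vertices involved are collinear, so one of them lies in the relative interior of the segment spanned by the other two, again reducing to (A). In each case three distinct vertices form a triangle.

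It remains to handle (B) when $a,b,c,d$ are four distinct vertices and the segments $[v_a,v_b]$, $[v_c,v_d]$ cross at a point $m$ interior to both. If $v_a,v_b,v_c,v_d$ are collinear the two segments overlap in a subsegment and one of the four points lies in the relative interior of a segment spanned by two of the others, reducing to (A). Otherwise no three of the four points are collinear, and -- since two segments cross at an interior point of each precisely when their endpoints alternate around the convex hull -- the four points are the vertices of a convex quadrilateral in cyclic order $v_a,v_c,v_b,v_d$, with diagonals $[v_a,v_b]$ and $[v_c,v_d]$ meeting at $m$. Applying the triangle inequality (strict, since $m$ is collinear with no two of the four points) in the triangles $v_amv_c$ and $v_bmv_d$, summing, and regrouping via the identity above,
\[
\norm{v_av_c}{A}+\norm{v_bv_d}{A}<\bigl(\norm{v_am}{A}+\norm{v_bm}{A}\bigr)+\bigl(\norm{v_cm}{A}+\norm{v_dm}{A}\bigr)=\norm{v_av_b}{A}+\norm{v_cv_d}{A}\le 4,
\]
so $ac\in E(G)$ or $bd\in E(G)$; symmetrically, from the triangles $v_amv_d$ and $v_bmv_c$, $ad\in E(G)$ or $bc\in E(G)$. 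Whichever of the four possibilities holds, combining it with the edge $ab$ or $cd$ produces a triangle on three distinct vertices -- for instance $ac,ad\in E(G)$ together with $cd$ gives the triangle $a,c,d$, while $ac,bc\in E(G)$ together with $ab$ gives $a,b,c$, and similarly in the remaining two cases. This is the desired contradiction.

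I expect the genuinely delicate part of the write-up to be not the displayed inequality -- a one-line convexity computation -- but the careful enumeration of the degenerate incidences (collinear or overlapping edges, an edge through another edge's endpoint, two edges sharing one endpoint and meeting again) and checking that each is faithfully reduced to case (A), together with verifying throughout that the triangle produced always involves three genuinely distinct vertices. Everything else is routine.
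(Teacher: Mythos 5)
Your proof is correct and follows essentially the same approach as the paper: argue by contraposition, apply the triangle inequality through the crossing point to get $\norm{v_av_c}{A}+\norm{v_bv_d}{A}\le 4$ and $\norm{v_av_d}{A}+\norm{v_bv_c}{A}\le 4$, and observe that any choice of one edge from each pair yields a triangle. You are more scrupulous than the paper about the degenerate incidences (a vertex on an edge, collinear or overlapping segments, two edges sharing an endpoint) -- the paper's short argument addresses only the generic crossing and leaves these cases implicit -- but the underlying idea is the same.
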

\begin{proof}
	The proof will be by contraposition so assume that $A \subset \R^2$ is a
	convex body and that $G=(V,E)$ is a drawing as an intersection graph. Then there exists
	$x, y, z, w \in V$ with $xy, zw \in E$ and where the edges $xy$
	and $zw$ intersect. Call this intersection point $p \in \R^2$. We know
	that $\norm{xy}{A} \le 2$ and $\norm{zw}{A} \le 2$. Using the triangle
	inequality we get that
	\begin{align*}
		\norm{xw}{A} &\le \norm{xp}{A} + \norm{wp}{A} \\
		\norm{yz}{A} &\le \norm{yp}{A} + \norm{zp}{A}
	\end{align*}
	combining this we get that
	\[
		\norm{xw}{A} + \norm{yz}{A} \le \norm{xy}{A} + \norm{zw}{A} \le 4
	\]
	since $p$ lies on the lines $xy$ and $zw$. This implies that either
	$\norm{xw}{A} \le 2$ or $\norm{yz}{A} \le 2$ so either $xw \in E(G)$
	or $yz \in E(G)$. An analogous argument shows that either $xz \in E(G)$ or
	$yw \in E(G)$. This shows that $G$ contains a triangle which finishes the
	proof.
\end{proof}

We are now ready to define $P^k_A \in I(A)$ for any $k > 0$.
Besides being triangle-free, our aim is that $P^k_A$ should have the following properties:
\begin{enumerate}
\item\label{prop:P1}
There is a vertex $s_0$ such that in any drawing of $P_A^k$ as an intersection graph of $A$ and $B$, $s_0$ is contained in $k$ nested disjoint, simple cycles $\sigma_1,\ldots,\sigma_k$.

\item\label{prop:P2}
There is a path $\kappa_k$ from a vertex $s_k$ to a leaf $t_k$ such that in any drawing of $P_A^k$ as an intersection graph of $A$ and $B$, the path $\kappa_k$ is on the boundary of the outer face.
%
\end{enumerate}

\begin{construction}[$P_A^k$.]
As in the previous section choose $e_1,e_2\in \R^2$ such that $\norm{e_1}{A}=\norm{e_2}{A}=\norm{e_1-e_2}{A}=2$, let $\mathcal{L}:=\mathcal{L}(e_1,e_2)$, and put $G_0=I_A(\mathcal{L})$. 
We will define $P_A^k$ to be of the form $G_0[\mathcal{A}_k]$ for some $\mathcal{A}_k\subset \mathcal{L}$ to be defined inductively.
Let first $s_0=0$, $t_0=e_1$, and $\mathcal{A}_0=\{s_0,t_0\}$. Define $\kappa_0$ to be the length-one path between $s_0$ and $t_0$ in $G_0$.
Suppose inductively that $\mathcal{A}_{k-1}$ has been defined.
Write $t_{k-1}=(r-1)e_1$ for some positive integer $r$ and define $\mathcal{R}_{r}=\{x\in \mathcal{L} \mid d_{G_0}(x,0)=r\}$ and $\mathcal{K}_r=\{x\in G_0\mid d_{G_0}(re_1,x)=1\}$.
Define $\tau_k$ and $\sigma_k$ to be the cycles of $G_0$ through the points of $\mathcal{K}_r$ and $(\mathcal{R}_{r}\setminus \{re_1\})\cup (\mathcal{K}_{r}\setminus \{t_{k-1}\})$, respectively. 
Finally define $\mathcal{T}_r=\{t_{r-1}+i e_1 \mid i=3,\dots ,\ell\}$ where $\ell$ is chosen so large that the path $\kappa_k$ on the vertices of $\mathcal{T}_r$ is so long that it cannot be contained in the cycle $\sigma_k$ in any drawing of $P^k_A$ as an intersection graph of $A$ and $B$. 
Let $t_k=\ell e_1$ and $\mathcal{A}_k=\mathcal{A}_{k-1}\cup (\mathcal{R}_r \setminus \{re_1\})\cup \mathcal{K}_r\cup \mathcal{T}_r$.
See Figure~\ref{fig:christmas}.
\end{construction}

\begin{figure}
\centering
\includegraphics[]{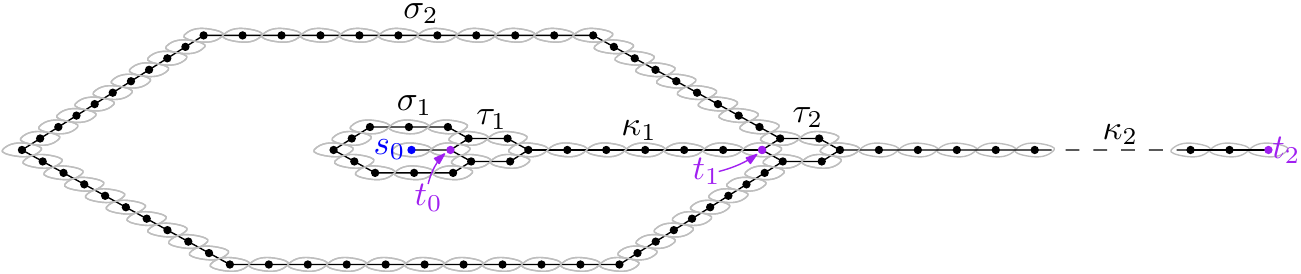}
\caption{The construction of $P_A^2$.}
\label{fig:christmas}
\end{figure}

\begin{lemma}\label{lemma:christmas}
The graph $P_A^k$ has properties~\ref{prop:P1}--\ref{prop:P2}.
\end{lemma}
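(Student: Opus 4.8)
The plan is to prove properties~\ref{prop:P1} and~\ref{prop:P2} by induction on $k$, following the inductive structure of the construction of $P_A^k$, and using~\Cref{lemma:triangle} as the fundamental tool. First I would check that $P_A^k$ is triangle-free: the lattice graph $G_0 = I_A(\mathcal L)$ contains triangles, but the subsets $\mathcal A_k$ are chosen so that the induced subgraph $G_0[\mathcal A_k]$ contains none — the rings $\mathcal R_r$, the ``petals'' $\mathcal K_r$, and the tails $\mathcal T_r$ each induce paths or cycles, and they are attached along single edges or single vertices. Once triangle-freeness is established, \Cref{lemma:triangle} tells us that every drawing of $P_A^k$ as an intersection graph of any convex body is a genuine plane embedding, so phrases like ``$s_0$ is contained in the cycle $\sigma_i$'' and ``$\kappa_k$ is on the boundary of the outer face'' become topologically meaningful.

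Next I would carry out the induction. The base case $k=0$ is trivial (an edge, with $\kappa_0$ being that edge). For the inductive step, assume that in every plane embedding of $P_A^{k-1}$ the vertex $s_0$ lies inside the nested cycles $\sigma_1,\dots,\sigma_{k-1}$ and that $\kappa_{k-1}$ lies on the outer face. In $P_A^k$ we add the ring $\mathcal R_r\setminus\{re_1\}$ together with the petal $\mathcal K_r$, forming the cycle $\sigma_k$, and then the tail $\mathcal T_r$. I would argue that the cycle $\sigma_k$ must enclose $\sigma_{k-1}$: every vertex of $\sigma_{k-1}$ is adjacent to some vertex on the newly added ring, and since the drawing is a plane embedding and $t_{k-1}$ — which lies on $\sigma_{k-1}$ and on the previous outer boundary — is connected outward through the new construction, a parity/Jordan-curve argument forces $\sigma_{k-1}$ (and hence $s_0$) into the interior of $\sigma_k$. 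The crucial quantitative point is that $\mathcal T_r$ is chosen long enough that the path $\kappa_k$ cannot fit inside $\sigma_k$: by~\Cref{lemma:triangle} the cycle $\sigma_k$ bounds a region whose $\norm{\cdot}{A}$-diameter (and, via~\Cref{normrellem}, Euclidean diameter) is bounded by a constant depending only on $r$ and the body, whereas a path on $\ell$ collinear lattice points of $\mathcal L$ has Euclidean length $\Theta(\ell)$; choosing $\ell$ large relative to that diameter bound forces $\kappa_k$ to escape $\sigma_k$, hence $t_k$ lies outside $\sigma_k$ and, being the end of the tail attached at $t_{k-1}$ on the outer face, $\kappa_k$ lies on the outer face.

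I would also need the same conclusion when $P_A^k$ is drawn as an intersection graph of $B$ rather than of $A$; here I would note that~\Cref{lemma:triangle} is stated for arbitrary convex bodies, and that the diameter bound used to force the tail out is uniform over the bodies in play because $B$ may be rescaled so that $\norm{\cdot}{B}$ and $\norm{\cdot}{A}$ are within a bounded factor of each other (as in~\Cref{normrellem}, or simply because $B$ is a fixed body). Thus the choice of $\ell$ in the construction can be made to work simultaneously for $A$ and $B$, which is exactly why the construction quantifies $\ell$ against both bodies.

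The main obstacle I expect is the topological bookkeeping in the inductive step: carefully justifying, purely from planarity and the adjacency pattern of the lattice, that the new cycle $\sigma_k$ must enclose all of $P_A^{k-1}$ rather than being drawn ``on the other side'' of some part of it, and that the outer-face path $\kappa_{k-1}$ genuinely hands off to $\kappa_k$. This requires identifying which vertices of $\sigma_{k-1}$ have neighbours on the ring $\mathcal R_r$ and using the fact that $s_0$ is already strictly interior (so the only way the ring can connect to all of $\sigma_{k-1}$ consistently in a plane embedding is by enclosing it), together with a Jordan-curve argument applied to $\sigma_k$ and the long tail. The metric estimate forcing the tail outside $\sigma_k$ is comparatively routine given~\Cref{normrellem} and the regular-hexagon geometry already used in the previous section.
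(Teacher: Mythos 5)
Your proposal has the right high-level shape (induction on $k$; triangle-freeness plus \Cref{lemma:triangle} to get a plane embedding; a ``tail is too long'' estimate to force things outward), but the inductive step contains a misreading of the construction that breaks the argument, and you never use the object that the paper's proof actually hinges on.

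First, the claim that ``every vertex of $\sigma_{k-1}$ is adjacent to some vertex on the newly added ring'' is false: $\sigma_{k-1}$ and $\mathcal{R}_r$ are separated by the entire previous tail $\kappa_{k-1}$, and the only connection from $P_A^{k-1}$ to the newly added material is through the single cut vertex $t_{k-1}$ (which is a leaf of $P_A^{k-1}$, sitting at the \emph{end} of $\kappa_{k-1}$, not on $\sigma_{k-1}$). So your Jordan-curve/parity argument has no adjacencies to work with. Relatedly, $\kappa_k$ is \emph{not} attached at $t_{k-1}$; its first vertex $(r+2)e_1$ is adjacent to $(r+1)e_1$, the vertex of $\mathcal{K}_r$ diametrically opposite $t_{k-1}$ across the small hexagon. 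The sentence ``being the end of the tail attached at $t_{k-1}$ on the outer boundary, $\kappa_k$ lies on the outer face'' is therefore a non-sequitur.

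Second, and this is the missing idea: you never use the cycle $\tau_k$ (the hexagon on $\mathcal{K}_r$) even though the construction builds it in deliberately. The paper's proof is a theta-graph argument. The cycles $\tau_k$ and $\sigma_k$ share the arc through $\mathcal{K}_r\setminus\{t_{k-1}\}$, so $\tau_k\cup\sigma_k$ is a theta graph, and the new tail $\kappa_k$ is attached at a vertex \emph{on this shared arc}, hence on both cycles. Since $\kappa_k$ is too long to lie in the bounded face of either cycle, it must lie in the outer face; as its attachment point lies on the shared arc, the outer face of the theta graph must be bounded by one of $\tau_k$, $\sigma_k$, i.e.\ one of the two cycles contains the other. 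Because $\tau_k$ is just a 6-cycle it cannot contain the long cycle $\sigma_k$, so $\sigma_k$ is outermost. This is what puts $\tau_k$ — and therefore $t_{k-1}$, and therefore all of $P_A^{k-1}$ hanging off it — inside $\sigma_k$, and simultaneously puts $\kappa_k$ on the outer face. Without $\tau_k$, knowing only that $\kappa_k$ escapes $\sigma_k$ does not tell you on which side of $\sigma_k$ the cut vertex $t_{k-1}$ (and hence $s_0$) lives; both placements are a priori consistent with planarity. So the small hexagon is not decorative: it is what rules out the bad embedding, and your proof as written has no substitute for it.

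Your observations about triangle-freeness and about the metric estimate forcing the tail out of a bounded cycle are in the right spirit, but they are supporting facts, not the crux. The crux is the theta-graph case analysis around $\tau_k$ and $\sigma_k$, which is absent from your proposal.
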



\begin{proof}

The graph $P_A^0$ trivially has the properties.
Suppose inductively that the $P_A^{k-1}$ has the properties and 
consider any drawing of $P^k_A$ as an intersection graph of $A$ or $B$.
Note that $\kappa_k$ is attached to two cycles: $\tau_k$ and $\sigma_k$.
By construction, $\kappa_k$ is so long that it cannot be contained in any of them.
Hence, $\kappa_k$ is in the exterior of both.
It follows that either $\sigma_k$ is contained in $\tau_k$ or $\tau_k$ is contained in $\sigma_k$.
Clearly, $\tau_k$ is too short for the first to be the case.
We therefore get that all of $P_A^{k-1}$ is contained in $\sigma_k$ and that $\kappa_k$ is on the boundary of the outer face.
Furthermore, the induction hypothesis implies that $s_0$ is contained in the $k$ nested, disjoint cycles $\sigma_1,\ldots,\sigma_k$.
\end{proof}

The most important property of $P^k_A$ is that every vertex $u \in \sigma_k$
has distance $\Omega(k)$ to $s_0$ in any drawing of $P^k_A$ as intersection
graph of $A$ and $B$. This is exactly what we will use when
constructing $Q^k_A$.

\begin{lemma}\label{lem:cycleDist}
Let $X\in\{A,B\}$.
Consider any drawing of $P_A^k$ as an intersection graph of $X$.
For any vertex $u\in\sigma_k$, we have $\norm{s_0u}{X}> 2(k/9-1)$.
\end{lemma}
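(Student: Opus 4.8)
The plan is to exploit the nested-cycle structure guaranteed by Lemma~\ref{lemma:christmas} together with the norm comparison from Lemma~\ref{normrellem}. Recall that in any drawing of $P_A^k$ as an intersection graph of $X\in\{A,B\}$, the vertex $s_0$ is contained in $k$ nested disjoint simple cycles $\sigma_1,\dots,\sigma_k$, and $\sigma_k$ is the outermost of these. By Lemma~\ref{lemma:triangle}, since $P_A^k$ is triangle-free (it is an induced subgraph of the intersection graph of a lattice in which every pair of lattice points at distance $1$ in $G_0$ corresponds to touching translates, and one checks no triangles arise in $I_A(\mathcal L)$ restricted to $\mathcal A_k$), the drawing is a plane embedding, so ``nested'' and ``disjoint'' are meaningful topological statements about the planar drawing.

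First I would fix a vertex $u\in\sigma_k$ and consider the straight line segment from $s_0$ to $u$ in the drawing. Since $s_0$ lies in the interior of each of the Jordan curves $\sigma_1,\dots,\sigma_k$ and $u$ lies on $\sigma_k$, this segment must cross each of $\sigma_1,\dots,\sigma_{k-1}$ — more precisely, it contains a point of (a point on an edge of) each $\sigma_i$ for $i<k$, by the Jordan curve theorem. Pick one such crossing point $p_i$ on each $\sigma_i$; these points occur along the segment $s_0u$ in the order $p_1,p_2,\dots,p_{k-1}$ interleaved toward $u$, and in particular $\norm{s_0u}{2}\ge\norm{s_0p_1}{2}$ and consecutive crossing points satisfy $\norm{p_ip_{i+1}}{2}\ge$ (something positive). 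The key quantitative input is that each $p_i$ lies on an \emph{edge} of the intersection graph, i.e.\ on a segment between two adjacent vertices whose $X$-distance is at most $2$; hence $p_i$ is within $X$-distance $2$ of an actual vertex of $\sigma_i$, and in the Euclidean norm within distance $\le 2\cdot 2 = 4$ via Lemma~\ref{normrellem} (which gives $\norm{x}{2}\le 2\norm{x}{X}$ under the normalization $\norm{e_1}{X}=\norm{e_2}{X}=\norm{e_1-e_2}{X}=2$ together with $\norm{e_i}{2}=2$). The disjointness of the cycles then forces a lower bound on the Euclidean gap between $\sigma_i$ and $\sigma_{i+1}$: a vertex on $\sigma_{i+1}$ cannot be within $X$-distance $2$ of a vertex on $\sigma_i$ unless they are adjacent, but adjacency across two disjoint cycles is impossible in a plane embedding where one strictly contains the other with $s_0$ inside — so the cycles are ``strictly separated'' and the Euclidean distance between crossing points on consecutive cycles is bounded below by an absolute constant (working out to roughly $1/9$ after accounting for all the factor-of-$2$ losses in Lemma~\ref{normrellem}, which is exactly where the constant $9$ in the statement comes from).

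Concretely, I would argue: $\norm{s_0u}{X}\ge\tfrac12\norm{s_0u}{2}\ge\tfrac12\sum_{i=1}^{k-1}\norm{p_ip_{i+1}}{2}$ (using that the $p_i$ appear monotonically along the segment, with $p_k=u$), and each term $\norm{p_ip_{i+1}}{2}$ is at least some constant $c$. Tracking the constants — $\norm{\cdot}{2}\le 2\norm{\cdot}{X}$ on both sides, the ``$2$'' from edges having $X$-length $\le 2$, and the margin needed for the cycles to be disjoint — should yield $c$ large enough that $\tfrac12\cdot(k-1)\cdot c > 2(k/9-1)$, i.e.\ $c>4/9$ up to the $-1$ slack, which is comfortable. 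I would phrase the bound to absorb the additive $-1$ by noting that the innermost cycle $\sigma_1$ already has $s_0$ strictly inside it at positive distance, contributing the $-1$ correction.

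The main obstacle I expect is making the ``disjoint nested cycles force a Euclidean separation'' step fully rigorous: a priori two disjoint Jordan curves in a plane embedding can come arbitrarily close \emph{as curves}, but here the curves are polygonal with vertices that are genuine points of the point set and edges of bounded $X$-length, and the non-adjacency of vertices on $\sigma_i$ versus $\sigma_{i+1}$ (which would create an edge crossing into the forbidden region) is what provides the uniform gap. I would need to carefully justify that if $p_i\in\sigma_i$ and $p_{i+1}\in\sigma_{i+1}$ then the vertex of $\sigma_i$ nearest $p_i$ and the vertex of $\sigma_{i+1}$ nearest $p_{i+1}$ are non-adjacent — this follows because an edge between them would have to cross $\sigma_i$ (as $\sigma_{i+1}$ lies outside $\sigma_i$ while $s_0$ and hence the ``inside'' is on the other side), contradicting that the drawing is a plane embedding with $\sigma_i$ a face boundary separating the relevant regions. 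Once non-adjacency is established, $X$-distance $>2$ between those vertices, hence $>2$ minus the two edge-slacks between the $p$'s and their nearest vertices, gives the constant. Everything else is bookkeeping with Lemma~\ref{normrellem} and the triangle inequality.
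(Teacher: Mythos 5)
Your proposal diverges substantially from the paper's argument, and the route you sketch has a genuine gap at its core. The paper does \emph{not} bound the separation between consecutive crossing points; instead it uses an area/packing argument: each $\sigma_i$ crosses the segment $s_0u$ at a point within $X$-distance $1$ of some vertex of $\sigma_i$, and then one shows that no subsegment of $X$-length $2$ can be crossed by more than $9$ cycles, because otherwise $10$ pairwise non-adjacent vertices would lie within $X$-distance $2$ of the midpoint $x$, so $10$ pairwise-disjoint translates of $X$ would be packed inside the ball $x+3X$, whose area is only $9\cdot\mathrm{area}(X)$. Dividing $s_0u$ into $\lceil\ell/2\rceil$ pieces of $X$-length at most $2$ then yields $k\leq 9\lceil\ell/2\rceil<9(\ell/2+1)$, i.e.\ $\ell>2(k/9-1)$. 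The constant $9$ is the area-scaling factor $3^2$, not an accumulation of factor-of-two losses from Lemma~\ref{normrellem}.

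The step in your proposal that fails is the claim that consecutive crossing points $p_i, p_{i+1}$ are separated by a positive absolute constant. Even granting that the nearest vertices $a_i\in\sigma_i$ and $a_{i+1}\in\sigma_{i+1}$ are non-adjacent (so $\norm{a_ia_{i+1}}{X}>2$), the triangle inequality only gives $\norm{p_ip_{i+1}}{X}\geq\norm{a_ia_{i+1}}{X}-\norm{a_ip_i}{X}-\norm{a_{i+1}p_{i+1}}{X}>2-1-1=0$, which is vacuous: you cannot extract any uniform lower bound on the per-gap distance, so the sum $\sum\norm{p_ip_{i+1}}{2}$ is not bounded below by anything proportional to $k$. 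Moreover, your argument for the non-adjacency itself is not correct: a vertex $a_i$ lies \emph{on} the Jordan curve $\sigma_i$, not strictly inside it, so an edge from $a_i$ to an exterior vertex need not cross $\sigma_i$; nested disjoint cycles in a plane embedding can perfectly well be connected by ``rungs'' (think of a cylinder grid). To the extent that vertices on distinct $\sigma_i$'s are pairwise non-adjacent, this is a structural property of the specific graph $P_A^k$ (the construction leaves a radial gap between consecutive rings), not a consequence of nesting and planarity. The fix is to abandon the per-gap bound entirely and use the global packing bound on how many independent vertices can lie in any $X$-ball of radius $2$, as the paper does.
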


\begin{proof}
Note that each cycle $\sigma_1,\ldots,\sigma_k$ has an edge that intersects the segment $s_0u$, and that the intersection point has distance at most $1$ to a vertex of the cycle.
We claim that each subsegment $r$ of $s_0u$ of length at most $2$ is intersected by at most $9$ cycles.
Otherwise, the midpoint $x$ of $r$ would have distance at most $2$ to at least $10$ independent vertices.
The translates of $X$ centered at these vertices are pairwise disjoint and contained in a ball $D$ centered at $x$ with radius $3$.
But the area of $D$ is only $3^2=9$ times larger than that of $X$, a contradiction.

Let now $\ell=\norm{s_0u}{X}$, and divide $s_0u$ into $\lceil \ell/2\rceil$ equally long pieces, each of length at most $2$.
As each piece is intersected by at most $9$ cycles, the total number of cycles intersecting $s_0u$ is $9\lceil \ell/2\rceil$.
We get that $k\leq 9\lceil \ell/2\rceil<9(\ell/2+1)$ so that $\ell > 2(k/9-1)$.
\end{proof}

Having defined $P^k_A$ we are now ready to the main part of this section: Constructing
$Q^k_A$ and prove that it has the necessary properties for building $\eps$-overlap
graphs.

\begin{construction}[$Q_A^k$]\label{const:Q}
We here define a graph $Q_A^k\in I(A)$ by specifying a drawing of $Q_A^k$ as an intersection graph of $A$.
Let $k'\mydef 18(k+1)$.
We start with $P_A^{k'}$ and explain what to add to obtain $Q_A^k$.
Let $u_0,\ldots,u_{n-1}$ be the vertices of $\sigma_{k'}$ in cyclic, counter-clockwise order.
Consider an arbitrary drawing of $P_A^{k'}$ as an intersection graph of $A$ and a vertex $u_i$.
Note that $d\mydef \left\lceil\frac{\norm{s_0u_i}{A}-2}{2}\right\rceil$ is the number of vertices needed to add in order to create a path from $s_0$ to $u_i$.
It follows from Lemma~\ref{lem:cycleDist} that $d\geq 2k$.

We want to minimize the vector of these values $d$ with respect to each vertex $u_i\in\sigma_{k'}$.
To be precise, we define
$$(d_0,\ldots,d_{n-1})\mydef
\min\left(
\left\lceil\frac{\norm{s_0u_0}{A}-2}{2}\right\rceil,\ldots,\left\lceil\frac{\norm{s_0u_{n-1}}{A}-2}{2}\right\rceil
\right),$$
where the minimum is with respect to the lexicographical order and taken over all drawings of $P_Q^{k'}$ as an intersection graph.
Consider an drawing of $P_A^{k'}$ as an intersection graph realizing the minimum and let $\mathcal P$ be the set of vertices in the drawing.
For each vertex $u_i$, we create a path $\pi_i$ from $s_0$ to $u_i$ as follows.
Let $\mathbf v_i$ be the unit-vector in direction $u_i-s_0$.
We add new vertices placed at the points $v_i(j)\mydef s_0+2j\mathbf v_i$ for $j\in\{1,\ldots,d_i\}$.
We now define the vertices of $Q_A^k$ as $\mathcal Q\mydef \mathcal P\cup\bigcup_{i=0}^{n-1}\{v_i(1),\ldots,v_i(d_i)\}$ and define $Q_A^k=I_A(\mathcal Q)$.
See Figure~\ref{fig:Q}.
\end{construction}

\begin{figure}
\centering
\includegraphics{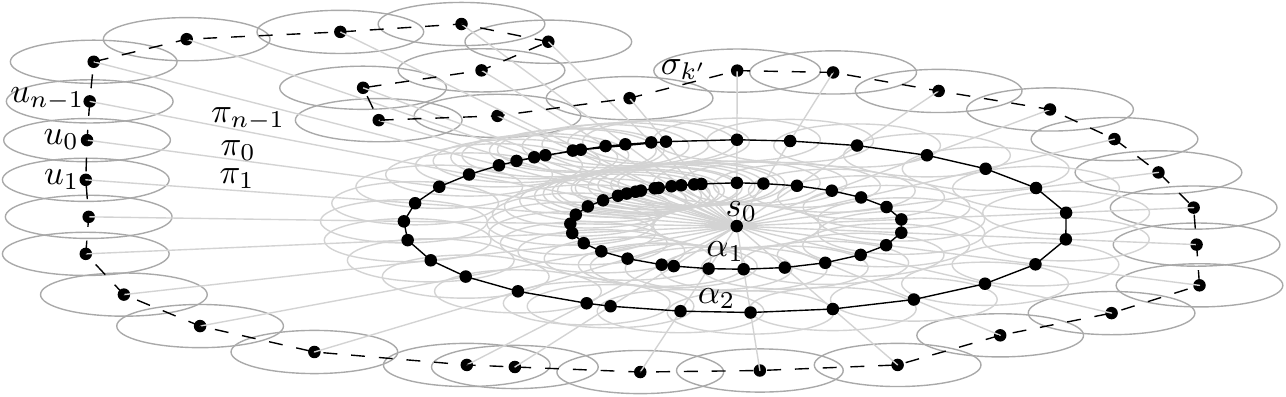}
\caption{A part of a graph $Q_A^k$.
The vertices $v_i(j)$ are only shown for $j\in\{1,2\}$, and only edges on paths $\pi_i$ and cycles $\alpha_1,\alpha_2,\sigma_{k'}$ are shown.}
\label{fig:Q}
\end{figure}

\begin{remark}\label{remark:minimality}
By construction, there exists a drawing of $Q_A^k$ as an intersection graph of $A$.
If there does not exist one with respect to $B$, we are done, since we then clearly have that $I(A) \neq I(B)$.
Now suppose that there exists a drawing of $P_A^{k'}$ as an intersection graph of $B$ such that
\begin{align}\label{remark:minimality:eq}
\left(
\left\lceil\frac{\norm{s_0u_0}{B}-2}{2}\right\rceil,\ldots,\left\lceil\frac{\norm{s_0u_{n-1}}{B}-2}{2}\right\rceil
\right)\prec
(d_0,\ldots,d_{n-1}),
\end{align}
where $\prec$ denotes the lexicographical order.
We can now define a graph $Q_B^k\in I(B)$ in a similar way as we defined $Q_A^k$ by adding $\left\lceil\frac{\norm{s_0u_i}{B}-2}{2}\right\rceil$ vertices to form a path from $s_0$ to each $u_i$.
It then follows from~\eqref{remark:minimality:eq} that $Q_B^k\notin I(A)$, so in this case we have likewise succeeded in proving $I(A)\neq I(B)$.
In the following, we therefore assume that $Q_A^k\in I(B)$ for any $k$ and that no drawing of $P_A^{k'}$ as an intersection graph of $B$ satisfying~\eqref{remark:minimality:eq} exists.
\end{remark}

First we need to show that $Q^k_A$ does contain a cycle $\alpha_k$ as described
in the beginning of this section.

\begin{lemma}\label{lemma:cycleQ}
The set of edges of $Q_A^k$ contain the pairs $v_i(j)v_{i+1}(j)$ for any $i\in[n]$ and $j\in\{1,\ldots,k\}$, and for each $j\in\{1,\ldots,k\}$, these edges thus form a cycle $\alpha_j$.
In the specific drawing of $Q^k_A$ as an intersection graph defined in Construction~\ref{const:Q}, the cycle $\alpha_j$ is contained in the annulus $\setbuilder{x\in\R^2}{\norm{s_0x}{A}\in[2j-1,2j]}$.
\end{lemma}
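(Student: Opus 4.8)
The plan is to establish the three assertions in turn: (i) each pair $v_i(j)v_{i+1}(j)$ (indices mod $n$) is an edge of $Q_A^k$; (ii) for each fixed $j$ these $n$ edges form a simple cycle $\alpha_j$; and (iii) in the drawing of Construction~\ref{const:Q} the figure $\alpha_j$ lies in the annulus $\{x:\norm{s_0x}{A}\in[2j-1,2j]\}$. Throughout, recall that in that drawing $v_i(j)=s_0+2j\mathbf v_i$ with $\mathbf v_i$ the $\norm{\cdot}{A}$-unit vector from $s_0$ to $u_i$, so $\norm{v_i(j)-s_0}{A}=2j$; moreover $v_i(j)$ is a vertex of $Q_A^k$ for every $j\in\{1,\dots,d_i\}$, and since $d_i\ge 2k$ (as observed in Construction~\ref{const:Q} via Lemma~\ref{lem:cycleDist}), all the vertices $v_i(j)$ with $j\in\{1,\dots,k\}$ exist.

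The crux is the estimate $\norm{\mathbf v_i-\mathbf v_{i+1}}{A}\le \tfrac{2}{2k+1}$. Write $p=u_i-s_0$, $q=u_{i+1}-s_0$ and set $\alpha=\norm{p}{A}$, $\beta=\norm{q}{A}$. Since $u_iu_{i+1}$ is an edge of the cycle $\sigma_{k'}$ in $G_0=I_A(\mathcal L)$ we have $\norm{p-q}{A}=\norm{u_i-u_{i+1}}{A}\le 2$, hence $\abs{\beta-\alpha}\le\norm{p-q}{A}\le 2$ by the reverse triangle inequality; and by Lemma~\ref{lem:cycleDist} applied to $P_A^{k'}$ with $k'=18(k+1)$ we get $\alpha=\norm{s_0u_i}{A}>2(k'/9-1)=4k+2$ (and likewise $\beta>4k+2$). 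Since
\[
\mathbf v_i-\mathbf v_{i+1}=\frac{p}{\alpha}-\frac{q}{\beta}=\frac{p-q}{\alpha}+\frac{\beta-\alpha}{\alpha}\cdot\frac{q}{\beta},
\]
we obtain $\norm{\mathbf v_i-\mathbf v_{i+1}}{A}\le \tfrac{\norm{p-q}{A}}{\alpha}+\tfrac{\abs{\beta-\alpha}}{\alpha}\le \tfrac{4}{4k+2}=\tfrac{2}{2k+1}$. Consequently, for every $j\in\{1,\dots,k\}$,
\[
\norm{v_i(j)-v_{i+1}(j)}{A}=2j\,\norm{\mathbf v_i-\mathbf v_{i+1}}{A}\le\frac{4j}{2k+1}\le\frac{4k}{2k+1}<2,
\]
so $v_i(j)v_{i+1}(j)$ is an edge of $Q_A^k=I_A(\mathcal Q)$, which is (i). Letting $i$ range over $0,\dots,n-1$, these edges form a closed walk through $v_0(j),\dots,v_{n-1}(j)$; these points are pairwise distinct, since $v_i(j)=v_{i'}(j)$ would force $\mathbf v_i=\mathbf v_{i'}$, i.e.\ $u_i$ and $u_{i'}$ on a common ray from $s_0$, impossible because $\sigma_{k'}$ is a simple polygon that is star-shaped about the interior point $s_0$ (this is straightforward to read off from the construction of $P_A^{k'}$: traversing $\sigma_{k'}$, the vertices occur at strictly increasing angles as seen from $s_0$). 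Hence the closed walk is a simple cycle $\alpha_j$, giving (ii).

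For (iii), fix an edge $v_i(j)v_{i+1}(j)$ of $\alpha_j$ and a point $w=\lambda v_i(j)+(1-\lambda)v_{i+1}(j)$, $\lambda\in[0,1]$, on the segment joining its endpoints; write $p=v_i(j)-s_0$, $q=v_{i+1}(j)-s_0$, so $\norm{p}{A}=\norm{q}{A}=2j$. Convexity of the norm gives $\norm{w-s_0}{A}=\norm{\lambda p+(1-\lambda)q}{A}\le 2j$, while applying the triangle inequality once from $p$ and once from $q$ yields $\norm{\lambda p+(1-\lambda)q}{A}\ge 2j-\min(\lambda,1-\lambda)\,\norm{p-q}{A}\ge 2j-\tfrac12\norm{v_i(j)-v_{i+1}(j)}{A}>2j-1$, using the bound $\norm{v_i(j)-v_{i+1}(j)}{A}<2$ from the previous paragraph. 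Thus every point of $\alpha_j$ lies in the stated annulus.

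The step I expect to be the main obstacle is the directional estimate in the second paragraph: this is exactly where the inflated parameter $k'=18(k+1)$ pays off, since it forces the radius $\norm{s_0u_i}{A}$ of $\sigma_{k'}$ to exceed $4k+2$, which is just large enough that scaling the tiny gap $\norm{\mathbf v_i-\mathbf v_{i+1}}{A}$ between consecutive directions by the factor $2j\le 2k$ still keeps $v_i(j)$ and $v_{i+1}(j)$ at $\norm{\cdot}{A}$-distance below $2$. A secondary point requiring care is the star-shapedness of $\sigma_{k'}$ with respect to $s_0$, which is what upgrades the closed walk $\alpha_j$ to a genuine simple cycle.
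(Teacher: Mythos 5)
Your proof is correct and reaches the same bounds as the paper's, but by a slightly different route for part (i): where you expand $\mathbf v_i-\mathbf v_{i+1}=\tfrac{p-q}{\alpha}+\tfrac{\beta-\alpha}{\alpha}\cdot\tfrac{q}{\beta}$ and estimate each term, the paper instead projects $u_{i+1}$ onto the sphere of $\norm{\cdot}{A}$-radius $\alpha$ around $s_0$ to obtain a point $u'_{i+1}$ with $\norm{u_iu'_{i+1}}{A}\le 4$, and then invokes the similarity of the triangles $s_0u_iu'_{i+1}$ and $s_0v_i(j)v_{i+1}(j)$, which (since $\norm{s_0v_i(j)}{A}=2j\le 2k\le\tfrac12\norm{s_0u_i}{A}$) scales the bound down to $\le 2$. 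These are two packagings of the same geometric fact — a small angular step seen from far away becomes a short chord at radius $2j$ — and both use the same input, namely Lemma~\ref{lem:cycleDist} with the inflated parameter $k'=18(k+1)$. For part (iii) your double triangle-inequality argument via the midpoint coincides with the paper's "any point on the edge is within distance $1$ of an endpoint." One small difference: you additionally argue that the vertices $v_0(j),\dots,v_{n-1}(j)$ are pairwise distinct so that $\alpha_j$ is a \emph{simple} cycle; the paper is silent on this. Be aware that your star-shapedness justification tacitly refers to the lattice drawing of $P_A^{k'}$, whereas Construction~\ref{const:Q} uses an optimal drawing selected by the lexicographic-minimality criterion, which need not be the lattice one — so this point is not quite "straightforward to read off." It is, however, not load-bearing: the later lemmas (\ref{lemma:winding} and \ref{lemma:realizeH}) treat $\alpha_j$ as a parameterized closed curve and use its argument variation, for which simplicity is not required.
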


\begin{proof}
Consider the pair $v_i(j)v_{i+1}(j)$.
Note that $\norm{u_iu_{i+1}}{A}\leq 2$ as $u_iu_{i+1}$ is an edge of $\sigma_{k'}$.
Assume without loss of generality that $\norm{s_0u_i}{A}\leq\norm{s_0u_{i+1}}{A}$, and let $u'_{i+1}$ be the point on $s_0u_{i+1}$ such that $\norm{s_0u'_{i+1}}{A}=\norm{s_0u_i}{A}$.
Note that
$$\norm{s_0u'_{i+1}}{A}+\norm{u'_{i+1}u_{i+1}}{A}=
\norm{s_0u_{i+1}}{A}\leq
\norm{s_0u_i}{A}+\norm{u_iu_{i+1}}{A},$$ so
$\norm{u'_{i+1}u_{i+1}}{A}\leq \norm{u_iu_{i+1}}{A}\leq 2$.
Hence,
$$\norm{u_iu'_{i+1}}{A}\leq
\norm{u_iu_{i+1}}{A}+\norm{u_{i+1}u'_{i+1}}{A}\leq 4.$$
It follows from Lemma~\ref{lem:cycleDist} that $\norm{s_0u_i}{A}\geq 4k$, and thus $\norm{s_0v_i(j)}{A}\leq 2k\leq\norm{s_0u_i}{A}/2$.
As the triangles $s_0u_iu'_{i+1}$ and $s_0v_i(j)v_{i+1}(j)$ are similar, we get
$$\norm{v_i(j)v_{i+1}(j)}{A}\leq \norm{u_iu'_{i+1}}{A}/2\leq2.$$

For the second part, note that the edge $v_i(j)v_{i+1}(j)$ is in the ball $s_0+2jA$.
As any point on $v_0(j)v_{i+1}(j)$ is within distance $1$ from $v_i(j)$ or $v_{i+1}(j)$, the statement follows.
\end{proof}

This shows that the cycle $\alpha_k$ behaves nicely in one particular drawing
of $Q^k_A$ as an intersection graph.
We now show that something similar holds for every drawing.
\iffull
\begin{lemma}\label{lemma:annulus}
Let $X\in\{A,B\}$.
In any drawing of $Q^k_A$ as an intersection graph with respect to $X$, any $i\in[n]$, and any $j\in\{1,\ldots,k\}$, the vertex $v_i(j)$ is contained in the annulus $\setbuilder{x\in\R^2}{\norm{s_0x}{A}\in(2j-2,2j]}$.
Therefore, the cycle $\alpha_j$ is contained in the annulus $\setbuilder{x\in\R^2}{\norm{s_0x}{A}\in(2j-3,2j]}$.
\end{lemma}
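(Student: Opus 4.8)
The plan is to obtain the two-sided bound $\norm{s_0v_i(j)}{X}\in(2j-2,2j]$ on the vertices of the paths $\pi_i$ and then read off the claim about the cycle $\alpha_j$ (throughout I use the norm $\norm{\cdot}{X}$ of the body $X$ actually used in the drawing; for $X=A$ this is exactly the norm written in the statement). Fix a drawing of $Q_A^k$ as an intersection graph of $X$. The upper bound is immediate: the initial segment $s_0,v_i(1),\ldots,v_i(j)$ of $\pi_i$ is a path of $j$ edges, each joining two points at $\norm{\cdot}{X}$-distance at most $2$, so the triangle inequality gives $\norm{s_0v_i(j)}{X}\le 2j$; applying the same estimate to all $d_i+1$ edges of the whole path $\pi_i$ from $s_0$ to $u_i$ gives $\norm{s_0u_i}{X}\le 2(d_i+1)$, and hence $\lceil(\norm{s_0u_i}{X}-2)/2\rceil\le d_i$ for every $i\in[n]$.

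The crux is to pin down $\norm{s_0u_i}{X}$ precisely enough to get the lower bound, and here I expect the only delicate point to be a ``lexicographic squeeze''. Since $Q_A^k$ is obtained by adding vertices to $P_A^{k'}$ (\Cref{const:Q}), the graph $P_A^{k'}$ is an induced subgraph of $Q_A^k$, so restricting our drawing to the vertices of $P_A^{k'}$ yields a valid drawing of $P_A^{k'}$ as an intersection graph of $X$. Now $(d_0,\ldots,d_{n-1})$ was chosen lexicographically minimal among the vectors $(\lceil(\norm{s_0u_{i'}}{A}-2)/2\rceil)_{i'}$ over all such drawings when $X=A$, and \Cref{remark:minimality} guarantees that no $B$-drawing of $P_A^{k'}$ does better when $X=B$; either way $(\lceil(\norm{s_0u_{i'}}{X}-2)/2\rceil)_{i'}\succeq(d_0,\ldots,d_{n-1})$ lexicographically. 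Combined with the coordinatewise inequality $\lceil(\norm{s_0u_i}{X}-2)/2\rceil\le d_i$ from the previous paragraph, this forces equality in every coordinate (a lexicographic $\succeq$ together with a coordinatewise $\le$ leaves only equality, by looking at the first coordinate of disagreement), so $\lceil(\norm{s_0u_i}{X}-2)/2\rceil=d_i$ and therefore $\norm{s_0u_i}{X}>2d_i$ for every $i$. The two things to be careful about are that the restricted drawing is genuinely admissible for $P_A^{k'}$ and that the minimality of $(d_i)$ was taken over exactly this class of drawings; note that this argument avoids the plane-embedding/nesting machinery entirely.

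It remains to derive the lower bound and transfer it to $\alpha_j$. Since $d_i\ge 2k$ (\Cref{const:Q}, via \Cref{lem:cycleDist}) and $j\le k$, the tail $v_i(j),v_i(j+1),\ldots,v_i(d_i),u_i$ of $\pi_i$ is a path of $d_i-j+1$ edges, so $\norm{v_i(j)u_i}{X}\le 2(d_i-j+1)$; if $\norm{s_0v_i(j)}{X}\le 2j-2$ then the triangle inequality would give $\norm{s_0u_i}{X}\le(2j-2)+2(d_i-j+1)=2d_i$, contradicting $\norm{s_0u_i}{X}>2d_i$. Hence $\norm{s_0v_i(j)}{X}>2j-2$, which with the upper bound gives $\norm{s_0v_i(j)}{X}\in(2j-2,2j]$. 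Finally, by \Cref{lemma:cycleQ} each pair $v_i(j)v_{i+1}(j)$ is an edge of $\alpha_j$ of $\norm{\cdot}{X}$-length at most $2$, so any point $y$ on this segment satisfies $\norm{v_i(j)y}{X}+\norm{yv_{i+1}(j)}{X}\le 2$ and is thus within $\norm{\cdot}{X}$-distance $1$ of an endpoint; hence $\norm{s_0y}{X}>(2j-2)-1=2j-3$, while $\norm{s_0y}{X}\le 2j$ because the dilate $s_0+2jX$ is convex and contains both endpoints. Therefore $\alpha_j$ lies in the annulus $\setbuilder{x\in\R^2}{\norm{s_0x}{X}\in(2j-3,2j]}$, as claimed.
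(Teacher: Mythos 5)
Your proof is correct and takes essentially the same approach as the paper's. The one structural difference is purely cosmetic: the paper sets up a single contradiction (assume $\norm{s_0v_i(j)}{X}\leq 2j-2$, derive coordinatewise $\le$ everywhere and $<$ at coordinate $i$, contradicting lexicographic minimality / Remark~\ref{remark:minimality}), whereas you first extract the coordinatewise equality $\lceil(\norm{s_0u_i}{X}-2)/2\rceil=d_i$ for every $i$ (coordinatewise $\le$ plus lexicographic $\succeq$ forces equality), read off $\norm{s_0u_i}{X}>2d_i$, and only then run the contradiction against the bound on $\norm{s_0v_i(j)}{X}$; the underlying observations are identical.
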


\begin{proof}
The upper bound on $\norm{s_0v_i(j)}{X}$ holds as there is a path from $s_0$ to $v_i(j)$ consisting of only $j+1$ vertices.
For the lower bound, assume for contradiction that for some values of $i$ and $j$, there exists a drawing of $Q^k_A$ as an intersection graph where $\norm{s_0v_i(j)}{X}\leq 2j-2$.
We now claim that (i): $\left\lceil\frac{\norm{s_0u_j}{X}-2}{2}\right\rceil\leq d_j$ for all $j\in\{1,\ldots,k\}$ and (ii): $\left\lceil\frac{\norm{s_0u_i}{X}-2}{2}\right\rceil< d_i$. Together, (i) and (ii) contradict either the minimality of $(d_0,\ldots,d_{n-1})$ (if $X=A$) or the assumption from Remark~\ref{remark:minimality}~(if $X=B$).

For part (i), note that since $\pi_j$ consists of $d_j+2$ vertices, we have $\norm{s_0u_j}{X}\leq 2(d_j+1)$, and it follows that $\left\lceil\frac{\norm{s_0u_j}{X}-2}{2}\right\rceil\leq \left\lceil d_j\right\rceil=d_j$.

For part (ii), note that since there is a path from $v_i(j)$ to $u_i$ consisting of $d_i-j+2$ vertices, we have $\norm{v_i(j)u_i}{X}\leq 2(d_i-j+1)$.
By the triangle inequality we now get
$\norm{s_0u_i}{X}\leq \norm{s_0v_i(j)}{X}+\norm{v_i(j)u_i}{X}\leq 2j-2+2(d_i-j+1)=2d_i$.
It now follows that $\left\lceil\frac{\norm{s_0u_i}{X}-2}{2}\right\rceil\leq\left\lceil\frac{2d_i-2}{2}\right\rceil=d_i-1<d_i$.
\end{proof}

We want to be able to conclude that two cycles $\alpha_k, \alpha'_k$ intersect
if the two annuli containing the cycles cross each other. This will be an easy consequence of the following lemma
which shows that the cycle $\alpha_k$ goes all the way around $s_0$ inside the annulus
in any drawing of $Q^k_A$ as an intersection graph.
\begin{lemma}\label{lemma:winding}
Let $X\in\{A,B\}$.
Consider any drawing of $Q_A^k$ as an intersection graph with respect to $X$ and any $j\in\{3,\ldots,k\}$, and consider $\alpha_j$ as a parameterized, closed curve $\alpha_j\colon [0,1]\longrightarrow \R^2$, such that for any $i\in[n]$, $\alpha_j$ interpolates linearly from $v_i(j)$ to $v_{i+1}(j)$ on the interval $[\frac{i}{n},\frac{i+1}{n}]$, where indices are taken modulo $n$.
We may then define a continuous argument function $\theta_{\alpha}\colon[0,1]\longrightarrow \R$ such that $\theta_{\alpha}(t)$ is an argument of the vector $\alpha_j(t)-s_0$ for all $t\in[0,1]$.
Then the argument variation of $\alpha_j$ around $s_0$ is $\theta_{\alpha}(1)-\theta_{\alpha}(0)=\pm 2\pi$.
\end{lemma}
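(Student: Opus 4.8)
The plan is to show that the closed curve $\alpha_j$ winds exactly once around $s_0$, by comparing it with the outer cycle $\sigma_{k'}$ of the plane sub-drawing of $P_A^{k'}$.

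\textbf{Step 1: set-up and reduction in $j$.} Since $P_A^{k'}$ is triangle-free and is an induced subgraph of $Q_A^k$, in any drawing of $Q_A^k$ as an intersection graph of $X$ the sub-drawing on the vertices of $P_A^{k'}$ is a plane embedding by Lemma~\ref{lemma:triangle}. By Lemma~\ref{lemma:annulus} the curve $\alpha_j$ stays at positive distance from $s_0$ (for $j\ge3$ it lies at $\norm{\cdot}{}$-distance $>2j-3\ge3$), so $\alpha_j(t)-s_0$ never vanishes, $\theta_\alpha$ is well-defined, and $w\mydef\frac1{2\pi}(\theta_\alpha(1)-\theta_\alpha(0))$ is a well-defined integer (the winding number of $\alpha_j$ about $s_0$); the claim is $w=\pm1$. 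First I would note that $w$ is the same for all $j\in\{2,\dots,k\}$: for consecutive $j$ the cycles $\alpha_j,\alpha_{j+1}$ are joined by the ``rungs'' $v_i(j)v_i(j+1)$ (edges of $\pi_i$), and each quadrilateral cell with corners $v_i(j),v_{i+1}(j),v_i(j+1),v_{i+1}(j+1)$ can, after splitting along a diagonal, be filled by two triangles each lying in a $\norm{\cdot}{X}$-ball of radius $2$ about a corner at distance $>2$ from $s_0$ (all four bounding edges have $\norm{\cdot}{X}$-length $\le2$); gluing these fillings over $i$ yields a free homotopy $\alpha_j\simeq\alpha_{j+1}$ in $\R^2\setminus\{s_0\}$. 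Hence it suffices to compute $w$ for one convenient value of $j$.

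\textbf{Step 2: the outer cycle winds once.} By property~\ref{prop:P1} together with Step~1, $\sigma_{k'}$ is a simple closed curve with $s_0$ in its interior, so the directions $u_0-s_0,\dots,u_{n-1}-s_0$, listed in cyclic order, wind exactly once around $s_0$.

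\textbf{Step 3: transporting the winding to $\alpha_j$ (the crux).} Write $\pi_i^{\,\ge j}$ for the sub-path of $\pi_i$ from $v_i(j)$ to $u_i$. Using the minimality in Construction~\ref{const:Q} together with Remark~\ref{remark:minimality} one gets $d_i=\lceil(\norm{s_0u_i}{X}-2)/2\rceil$, hence $\norm{s_0u_i}{X}>2d_i$, so the $d_i{+}1$ edges of length $\le2$ of $\pi_i$ have total triangle-inequality defect below $2$; consequently $\pi_i^{\,\ge j}$ (and each $v_i(m)$ with $m\ge j$) lies in the convex lens $\Lambda_i\mydef\{x:\norm{v_i(j)x}{X}+\norm{xu_i}{X}<\norm{v_i(j)u_i}{X}+2\}$, and for $j\ge3$ one checks $s_0\notin\Lambda_i$ because $v_i(j)$ points roughly toward $u_i$ rather than away from it. Being convex and missing $s_0$, $\Lambda_i$ has angular extent $<\pi$ about $s_0$, so $\pi_i^{\,\ge j}$ does not wind around $s_0$ and $\arg(v_i(j)-s_0)$ is confined near $\arg(u_i-s_0)$. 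Here the URTC property of $X$ enters quantitatively: since $\partial X$ contains no $\norm{\cdot}{X}$-segment longer than $1$, a point on an actual metric segment from $s_0$ to $u_i$ has its direction pinned to a single length-$\le1$ boundary face of $X$, and — using also that $\norm{s_0v_i(j)}{X}\le2j$ is small compared to $\norm{s_0u_i}{X}>4k$ — the defect-$<2$ relaxation keeps $\arg(v_i(j)-s_0)$ close enough to $\arg(u_i-s_0)$ that the cyclic order of the directions $v_i(j)-s_0$ agrees with that of the $u_i-s_0$. Finally, for $j\ge3$ every edge of $\alpha_j$ has $\norm{\cdot}{X}$-length $\le2$ with both endpoints at $\norm{\cdot}{X}$-distance $>4$ from $s_0$, hence avoids $s_0$ and contributes exactly the short signed angle $\angle(v_i(j)-s_0,v_{i+1}(j)-s_0)\in(-\pi,\pi)$ to the argument variation; summing over $i$ and invoking the confinement shows these directions sweep a total angle of exactly $\pm2\pi$, i.e. $w=\pm1$. (Equivalently, the confinement lets one glue the curves $\pi_i^{\,\ge j}$ into a homotopy in $\R^2\setminus\{s_0\}$ carrying $\alpha_j$ to $\sigma_{k'}$, so $w$ equals the winding number $\pm1$ of the Jordan curve $\sigma_{k'}$.)

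\textbf{Main obstacle.} Steps~1 and~2 are routine, as is the ``$\pi_i^{\,\ge j}\subseteq\Lambda_i$'' part of Step~3. The genuine difficulty is that $X=B$ need only be URTC, not strictly convex, so metric geodesics can fan out; turning ``$\pi_i$ is a near-geodesic'' into ``$v_i(j)$ points nearly radially toward $u_i$'', tightly enough that the homotopy cells (equivalently, the cyclic-order statement) go through, is the main technical point and requires a careful quantitative use of the URTC bound together with the large ratio $\norm{s_0u_i}{X}/\norm{s_0v_i(j)}{X}$.
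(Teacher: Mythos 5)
Your conceptual goal is the right one (compare the winding number of $\alpha_j$ with the known winding number $\pm1$ of the Jordan curve $\sigma_{k'}$), and Steps~1 and~2 are fine. The genuine gap is in Step~3, and you correctly flag it as the obstacle. The lens $\Lambda_i$ being convex and missing $s_0$ only gives angular extent $<\pi$ about $s_0$, which is far too weak for any cyclic-order claim, and the attempt to tighten it ``quantitatively via URTC'' does not close: the URTC bound says boundary segments of $X$ have $\norm{\cdot}{X}$-length at most~$1$, but this still permits a fixed positive angular spread for near-geodesics from $s_0$ to $u_i$ that does \emph{not} decay as $k\to\infty$ (consider directions hitting the interior of one face of a regular hexagon). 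Worse, URTC is simply not available here: Lemma~\ref{lemma:winding} sits inside the proof of Theorem~\ref{theorem:main}, which is stated and proved for arbitrary symmetric convex bodies, so no step of this section may invoke URTC. Finally, the ``glue the $\pi_i^{\ge j}$ into a homotopy'' reformulation has a separate problem: Lemma~\ref{lemma:cycleQ} only provides the rung edges $v_i(m)v_{i+1}(m)$ for $m\le k$, so the cells of your ladder for $m>k$ need not be small, and the Step-1 ball-filling argument breaks down there.

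The paper proves the lemma by a different, much simpler homotopy. Parameterize $\sigma_{k'}$ by linear interpolation from $u_i$ to $u_{i+1}$ on $[\tfrac in,\tfrac{i+1}n]$, and set $\varphi_\lambda(t)=(1-\lambda)\alpha_j(t)+\lambda\sigma_{k'}(t)$: a straight-line homotopy between the two curves, not one that follows the paths $\pi_i$. For $t=i/n+t'$ with $t'\in[0,\tfrac1{2n})$ (the other half is symmetric), the whole segment $\alpha_j(t)\sigma_{k'}(t)$ lies in the convex set $D=\setbuilder{x}{\norm{xu_i}{X}\le\norm{\alpha_j(t)u_i}{X}}$, because $\norm{\sigma_{k'}(t)u_i}{X}\le 1$; and $s_0\notin D$ because $\norm{\alpha_j(t)u_i}{X}\le 1+2(d_i-j+1)<2d_i-1<\norm{s_0u_i}{X}$ when $j\ge3$, using the minimality of $(d_0,\dots,d_{n-1})$ (respectively Remark~\ref{remark:minimality}) to get $\norm{s_0u_i}{X}>2d_i$. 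So $\varphi_\lambda$ never hits $s_0$, and $\alpha_j$ has the same argument variation around $s_0$ as $\sigma_{k'}$, namely $\pm 2\pi$. No confinement estimate, cyclic-order claim, or URTC is needed; your Step~1 ladder reduction also becomes unnecessary.
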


\begin{proof}
Just as $\alpha_j$ is considered as a parameterized curve in the lemma, we may in a similar way consider $\sigma_{k'}$ as a parameterized closed curve $\mathcal \sigma_{k'}\colon [0,1]\longrightarrow \R^2$ such that $\sigma_{k'}$ interpolates linearly from $u_i$ to $u_{i+1}$ on the interval $[\frac{i}{n},\frac{i+1}{n}]$.
We also define $\theta_\sigma\colon[0,1]\longrightarrow\R$ to be a continuous argument function for $\sigma_{k'}$.
Since $\sigma_{k'}$ is a simple closed curve containing $s_0$ in the interior by Lemma~\ref{lemma:christmas}, we get that the argument variation of $\sigma_{k'}$ around $s_0$ is $\theta_\sigma(1)-\theta_\sigma(0)=2\pi$.

For any $\lambda\in[0,1]$, we now define the curve $\varphi_\lambda\colon[0,1]\longrightarrow\R^2$ such that $\varphi_\lambda(t)=(1-\lambda)\alpha_j(t)+\lambda \sigma_{k'}(t)$.
Thus, $\varphi_\lambda$ is a continuous interpolation between $\alpha_j$ (when $\lambda=0$) and $\sigma_{k'}$ (when $\lambda=1$).

We claim that for all $t,\lambda\in[0,1]$, we have $s_0\neq\varphi_\lambda(t)$.
To this end, we prove that the segment $\alpha_j(t)\sigma_{k'}(t)=\varphi_0(t)\varphi_1(t)$ is contained in the ball $D\mydef \setbuilder{x\in\R^2}{\norm{xu_i}{X}\leq \norm{\alpha_j(t)u_i}{X}}$, whereas $s_0\notin D$.

Suppose that $t=i/n+t'$, where $t'\in[0,\frac 1{2n})$.
The case where $t'\in[\frac 1{2n},\frac 1n)$ is similar.
We now have that
$$\norm{\alpha_j(t)u_i}{X}\leq
\norm{\alpha_j(t)v_i(j)}{X}+
\norm{v_i(j)u_i}{X}\leq
1+2(d_i-j+1)<2(d_i-1)+1<\norm{s_0u_i}{X},$$
so $s_0\notin D$.
Obviously, $\alpha_j(t)\in D$ by definition.
Since also $\norm{\sigma_{k'}(t)u_i}{X}\leq 1<\norm{s_0u_i}{X}$, the claim follows.

It now follows that for all $\lambda\in[0,1]$, the curve $\varphi_\lambda$ has the same argument variation around $s_0$ as $\sigma_{k'}$.
In particular, $\theta_{\alpha}(1)-\theta_{\alpha}(0)=\pm 2\pi$ as stated.
\end{proof}

\begin{figure}
\centering
\includegraphics{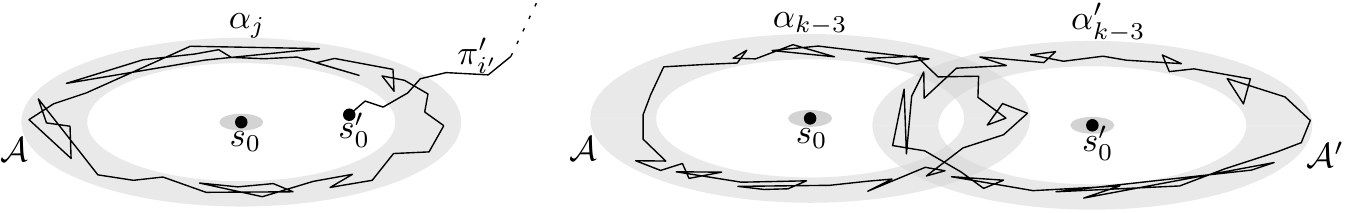}
\caption{Two cases from the proof of Lemma~\ref{lemma:realizeH}.}
\label{fig:annulus}
\end{figure}
\else
To see that something similar holds for every drawing, we refer the reader to the full version.
\fi

We will now use use copies of $Q^k_A$ to construct $\eps$-overlap graphs which will finish the proof.

\begin{construction}[$H_A^k(G)$]\label{const:H}
For any $G\in C(A)$, consider a fixed drawing of $G$ as a contact graph of $A$.
For each vertex $w$ of $G$, we make a copy of the drawing of $Q_A^k$ as an intersection graph as defined in Construction~\ref{const:Q} which we translate so that $s_0$ is placed at $s_0^w\mydef (2k-2)w$.
We then add all edges induced by the vertices, and the result is denoted as $H_A^k(G)$.
\end{construction}

To show that $H^k_A(G)$ does in fact construct $G$ as a $\eps$-overlap graph, we need to show that if $w w'$ is an edge of $G$, then the corresponding cycles $\alpha_k$ and $\alpha'_k$ intersect each other.
That implies the existence of vertices $v_i(k)$ and $v'_{i'}(k)$ of $\alpha_k$ and $\alpha'_k$, respectively, such that $v_i(k)v'_{i'}(k)$ is an edge of $H_A^k(G)$.

\begin{lemma}\label{lemma:edgesH}
Consider two vertices $w,w'$ of a drawing of a graph $G$ as a contact graph.
Denote by $Q$ and $Q'$ the copies of $Q_A^k$ in $H_A^k(G)$ corresponding to $w$ and $w'$, respectively, such that $s_0,\pi_i,\alpha_j,v_i(j)$ denote objects in $Q$ and $s'_0,\pi'_i,\alpha'_j,v'_i(j)$ denote objects in $Q'$.
If $v_i(j)v'_{i'}(j')$ is an edge of $H_A^k(G)$, then $j+j'\geq 2k-4$.

If $ww'$ is an edge of $G$, then there is an edge $v_i(k)v_{i'}'(k)$ in $H_A^k(G)$.
\end{lemma}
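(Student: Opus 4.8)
The statement has two parts; I will treat them in turn, the first being a short triangle-inequality computation and the second the real content.

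\emph{First part.} Write $s_0,s_0'$ for the distinguished vertices of the two copies $Q,Q'$, placed by Construction~\ref{const:H} at $s_0=(2k-2)w$ and $s_0'=(2k-2)w'$. If $v_i(j)v'_{i'}(j')$ is an edge of $H_A^k(G)$ then $\norm{v_i(j)v'_{i'}(j')}{A}\le 2$, while the initial part $s_0,v_i(1),\ldots,v_i(j)$ of the path $\pi_i$ has $j$ edges, giving $\norm{s_0v_i(j)}{A}\le 2j$, and likewise $\norm{s_0'v'_{i'}(j')}{A}\le 2j'$. As $w\ne w'$ and the underlying drawing of $G$ is a contact graph of $A$, $\norm{ww'}{A}\ge 2$, hence $\norm{s_0s_0'}{A}=(2k-2)\norm{ww'}{A}\ge 4k-4$. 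The triangle inequality $\norm{s_0s_0'}{A}\le\norm{s_0v_i(j)}{A}+\norm{v_i(j)v'_{i'}(j')}{A}+\norm{v'_{i'}(j')s_0'}{A}$ then gives $4k-4\le 2j+2+2j'$, i.e.\ $j+j'\ge 2k-3\ge 2k-4$.

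\emph{Second part.} Now $ww'$ is an edge of $G$, so $\norm{s_0s_0'}{A}=4k-4$; we may assume $k\ge 3$. I claim it suffices to show that the closed polygonal curves $\alpha_k$ (in $Q$) and $\alpha'_k$ (in $Q'$), whose edges $v_i(k)v_{i+1}(k)$ and $v'_{i'}(k)v'_{i'+1}(k)$ are edges of $H_A^k(G)$ by Lemma~\ref{lemma:cycleQ}, intersect. Indeed, if they do then some edge $e=v_i(k)v_{i+1}(k)$ and some edge $e'=v'_{i'}(k)v'_{i'+1}(k)$ share a point $z$ (a shared vertex is the easy case). Since $z\in e$ and $z\in e'$ we have $\norm{v_i(k)z}{A}+\norm{v_{i+1}(k)z}{A}=\norm{e}{A}\le 2$ and $\norm{v'_{i'}(k)z}{A}+\norm{v'_{i'+1}(k)z}{A}=\norm{e'}{A}\le 2$, and summing $\norm{v_a(k)v'_b(k)}{A}\le\norm{v_a(k)z}{A}+\norm{zv'_b(k)}{A}$ over the four choices of $a\in\{i,i+1\}$ and $b\in\{i',i'+1\}$ gives a total at most $2\norm{e}{A}+2\norm{e'}{A}\le 8$; hence one of the four pairs has $\norm{\cdot}{A}$-distance at most $2$, and is an edge $v_i(k)v'_{i'}(k)$ of $H_A^k(G)$ of the desired shape.

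To prove $\alpha_k\cap\alpha'_k\ne\emptyset$ I would argue with winding numbers. Put $p=\frac{1}{2}(s_0+s_0')$, so $\norm{s_0p}{A}=\norm{s_0'p}{A}=2k-2$. By Lemma~\ref{lemma:cycleQ}, $\alpha_k$ lies in the annulus $\{x:\norm{s_0x}{A}\in[2k-1,2k]\}$, so it avoids the connected open set $(s_0+(2k-1)A)^\circ$, which contains both $s_0$ and $p$; hence $s_0$ and $p$ lie in the same component of $\R^2\setminus\alpha_k$, and by Lemma~\ref{lemma:winding} the winding number of $\alpha_k$ about each of them is $\pm 1$ — and symmetrically the winding number of $\alpha'_k$ about $p$ is $\pm 1$. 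Suppose $\alpha_k\cap\alpha'_k=\emptyset$. Then $\mathrm{wind}(\alpha'_k,\cdot)$ is constant on the connected set $\alpha_k$, and $\mathrm{wind}(\alpha_k,\cdot)$ is constant on $\alpha'_k$; both constants vanish, because $\alpha_k$ winds about $s_0$ and $s_0$ lies outside the convex set $s_0'+2kA$ (using $4k-4>2k$), so $\alpha_k$ cannot be contained in $s_0'+2kA$ and has a point $y\notin s_0'+2kA$ with $\mathrm{wind}(\alpha'_k,y)=0$ (as $\alpha'_k\subset s_0'+2kA$, a convex set), and symmetrically for $\alpha'_k$. Consequently $\alpha'_k$ is disjoint from $I:=\{x:\mathrm{wind}(\alpha_k,x)\ne 0\}$, a bounded open set with $p\in I$ and $\partial I\subset\alpha_k$; so $\mathrm{wind}(\alpha'_k,\cdot)$ is continuous on $\overline I$, locally constant on $I$, and $\equiv 0$ on $\partial I$, whence $\equiv 0$ on $\overline I$ \emph{provided $I$ is connected} — contradicting $\mathrm{wind}(\alpha'_k,p)=\pm 1$. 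Connectedness of $I$ holds because $\alpha_k$ is in fact a simple closed curve: its vertices $v_i(k)=s_0+2k\mathbf v_i$ are radial rescalings of the vertices $u_i$ of the simple cycle $\sigma_{k'}$ enclosing $s_0$, and hence occur in convex position in their cyclic order, so $\R^2\setminus\alpha_k$ has a single bounded component.

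\emph{Main obstacle.} Everything except one point is routine: the triangle-inequality arithmetic, and the ``average over the four endpoints of two crossing edges'' reduction. The hard part will be the topological step forcing $\alpha_k$ and $\alpha'_k$ to cross; this hinges on the region enclosed by $\alpha_k$ being connected, i.e.\ on $\alpha_k$ being drawn as a Jordan curve, which in turn requires that the vertices of the lattice cycle $\sigma_{k'}$ used in $P_A^{k'}$ occur in angular order around $s_0$ — a geometric property of the particular construction that must be checked, and is where I expect the effort to lie.
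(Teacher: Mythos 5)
Your first part matches the paper's proof (the paper carries out the same triangle-inequality computation in the drawing of Construction~\ref{const:H}; if anything your arithmetic is slightly cleaner, and correctly gives the stronger bound $j+j'\geq 2k-3$). Your reduction ``curves cross $\Rightarrow$ one of the four endpoint pairs is a graph edge'' is also sound and is exactly the argument used to prove Lemma~\ref{lemma:triangle}. The paper's own proof of the second part is very terse --- it just says the annuli are ``Olympic rings'' and cites Lemma~\ref{lemma:winding} --- so the winding-number analysis you supply is genuinely filling a gap that the paper leaves to the reader, and it is in the right spirit.

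However, the step you yourself flag is a real gap, and the justification you offer for it is not correct as stated. You claim that the $v_i(k)$ ``are radial rescalings of the vertices $u_i$ of the simple cycle $\sigma_{k'}$ enclosing $s_0$, and hence occur in convex position.'' That inference is false in general: a simple closed curve enclosing a point need not be star-shaped with respect to it, so radial rescaling of its vertices onto a convex boundary need not produce points in cyclic order, and the resulting polygon need not be simple. What you would actually need to verify is that $\sigma_{k'}$ is star-shaped about $s_0$ in the reference drawing (plausible for this near-hexagonal lattice cycle, but it has to be checked, and the ``bump'' near $re_1$ introduced in the construction of $P_A^{k}$ makes it non-obvious). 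Fortunately the connectedness of $I$ can be bypassed entirely. You have already established, under the assumption $\alpha_k\cap\alpha'_k=\emptyset$, that $\mathrm{wind}(\alpha'_k,\cdot)\equiv 0$ on $\alpha_k$. Now parameterize the segment from $s_0$ to $s_0'$. Since $\mathrm{wind}(\alpha_k,s_0)=\pm 1$ (Lemma~\ref{lemma:winding}) while $\mathrm{wind}(\alpha_k,s_0')=0$ (because $\alpha_k\subset s_0+2kA$, a convex set, and $\norm{s_0 s_0'}{A}=4k-4>2k$ puts $s_0'$ in the unbounded component), the segment must meet $\alpha_k$ at some point $z$. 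By Lemma~\ref{lemma:cycleQ}, $\norm{s_0 z}{A}\in[2k-1,2k]$, so $\norm{s_0' z}{A}=4k-4-\norm{s_0 z}{A}\in[2k-4,2k-3]<2k-1$, i.e.\ $z$ lies in the inner disk of $\mathcal A'$; hence $\mathrm{wind}(\alpha'_k,z)=\pm 1$. But $z\in\alpha_k$, contradicting $\mathrm{wind}(\alpha'_k,\cdot)\equiv 0$ on $\alpha_k$. This replaces your appeal to $I$ being connected with a pointwise computation and closes the gap cleanly.
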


\begin{proof}
Assume that $j+j'\leq 2k-3$.
In the drawing of $H^k_A(G)$ as an intersection graph defined by Construction~\ref{const:H}, we have
$$\norm{v_i(j)v'_{i'}(j')}{A}\geq \norm{s_0s'_0}{A}-\norm{s_0v_i(j)}{A}-\norm{s'_0v'_{i'}(j')}{A}= 4k-4-2j-2j'\geq 4>2,$$
so $v_i(j)v'_{i'}(j')$ is not an edge of $H_A^k(G)$.

Suppose now that $ww'$ is an edge of $G$.
Let $\mathcal A\mydef \setbuilder{x\in\R^2}{\norm{s_0x}{A}\in[2k-1,2k]}$ be the annulus containing $\alpha_k$ (by Lemma~\ref{lemma:cycleQ}) and $\mathcal A'\mydef \setbuilder{x\in\R^2}{\norm{s'_0x}{A}\in[2k-1,2k]}$ be that containing $\alpha'_k$.
The annuli $\mathcal A$ and $\mathcal A'$ cross over each other as two Olympic rings (i.e., the difference $\mathcal A\setminus\mathcal A'$ has two connected components).
\iffull
\Cref{lemma:winding} then shows the
\else
A lemma from the full version then shows the
\fi
intuitive fact that $\alpha_k$ and $\alpha'_k$ must intersect.
Therefore, there is an edge $v_i(k)v'_{i'}(k)$.
\end{proof}

We need one last fact before concluding that $H^k_A(G)$ constructs $G$ as an
$\eps$-overlap graph:
for $X\in\{A,B\}$ and any two centers $s_0, s'_0$ of copies
of $Q^k_A$, we have $\norm{s_0s'_0}{X}=\Omega(k)$, and if the corresponding
vertices $w, w'$ share an edge in $G$, then $\norm{s_0s'_0}{X}=O(k)$, as made precise in the following
\iffull
lemma.
\else
lemma, the proof of which is deferred to the full version.
\fi

\begin{lemma}\label{lemma:realizeH}
Let $X\in\{A,B\}$ and $k\geq 7$.
In the setting of Lemma~\ref{lemma:edgesH}, consider an arbitrary drawing of $H_A^k(G)$ as in intersection graph of $X$.
Then $\norm{s_0s'_0}{X}\geq 4k-18$ and if $ww'$ is an edge of $G$, then $\norm{s_0s'_0}{X}\leq 4k+2$.
\end{lemma}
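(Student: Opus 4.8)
\textbf{Proof plan for Lemma~\ref{lemma:realizeH}.}
The plan is to bound $\norm{s_0s'_0}{X}$ from below and above using the path structure of $H_A^k(G)$ together with the cycles $\alpha_k$, $\alpha'_k$ and the winding property established earlier.

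\emph{Lower bound.} First I would argue that in any drawing of $H_A^k(G)$ as an intersection graph of $X$, the two cycles $\alpha_k$ and $\alpha'_k$ are ``far apart'' in the following sense: by Lemma~\ref{lemma:edgesH}, an edge $v_i(j)v'_{i'}(j')$ of $H_A^k(G)$ forces $j+j'\geq 2k-4$, and in particular edges between the two copies only involve vertices with large indices. The key geometric input is Lemma~\ref{lemma:annulus}: in \emph{any} drawing, $\alpha_j$ lies in the annulus $\setbuilder{x}{\norm{s_0x}{A}\in(2j-3,2j]}$ around $s_0$ (and similarly for primed objects around $s'_0$). So a vertex $v_i(k)$ has $\norm{s_0v_i(k)}{X}$ bounded above (via the path of $k+1$ vertices from $s_0$) and, crucially, the cycles $\alpha_k,\alpha'_k$ must nevertheless intersect if $ww'$ is an edge of $G$. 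For the lower bound $\norm{s_0s'_0}{X}\geq 4k-18$ I would take a shortest path in $H_A^k(G)$ realizing the graph distance between $s_0$ and $s'_0$: such a path must pass from the ``$s_0$ side'' to the ``$s'_0$ side'', and by Lemma~\ref{lemma:edgesH} the single inter-copy edge it uses, say $v_i(j)v'_{i'}(j')$, has $j+j'\geq 2k-4$. Hence the path has at least $(j+1)+(j'+1)\geq 2k-2$ vertices on each side combined, giving $\norm{s_0v_i(j)}{X}\geq 2(j - $ (small constant)$)$ by a counting/path-length argument analogous to Lemma~\ref{lem:cycleDist}, and likewise on the primed side; summing and applying the triangle inequality $\norm{s_0s'_0}{X}\geq \norm{s_0v_i(j)}{X} + \norm{v'_{i'}(j')s'_0}{X} - \norm{v_i(j)v'_{i'}(j')}{X}$ yields $4k - O(1)$, and bookkeeping the constants gives $4k-18$. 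I would need to be careful that distances along the paths $\pi_i$ really are $\Omega(\text{index})$ in an arbitrary drawing, which follows from the same area-packing argument as in Lemma~\ref{lem:cycleDist} (nested cycles of $P_A^{k'}$) combined with the minimality/Remark~\ref{remark:minimality} assumption.

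\emph{Upper bound.} When $ww'$ is an edge of $G$, Lemma~\ref{lemma:edgesH} guarantees an actual edge $v_i(k)v'_{i'}(k)$ in $H_A^k(G)$, so $\norm{v_i(k)v'_{i'}(k)}{X}\leq 2$. Combined with the paths $\pi_i$ and $\pi'_{i'}$ (each consisting of $k+1$ vertices, hence of $\norm{\cdot}{X}$-length at most $2k$), the triangle inequality gives
$$\norm{s_0s'_0}{X}\leq \norm{s_0v_i(k)}{X}+\norm{v_i(k)v'_{i'}(k)}{X}+\norm{v'_{i'}(k)s'_0}{X}\leq 2k+2+2k=4k+2.$$
This direction is essentially immediate once the edge $v_i(k)v'_{i'}(k)$ is known to exist.

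\emph{Main obstacle.} The delicate part is the lower bound, specifically establishing that in an \emph{arbitrary} drawing the inter-copy edge must connect high-index vertices and that the two halves of the connecting path genuinely contribute $\approx 2k$ each to $\norm{s_0s'_0}{X}$. This rests on Lemma~\ref{lemma:annulus} and on the argument (mirroring Lemma~\ref{lem:cycleDist}) that distances from $s_0$ in the $X$-norm grow linearly in the index because of the nested cycles $\sigma_1,\ldots,\sigma_{k'}$ of $P_A^{k'}$ inside $Q_A^k$, which each must be crossed; tracking the additive constants carefully to land exactly at $4k-18$ (and using $k\geq 7$ to absorb small-index anomalies) is where the care is needed.
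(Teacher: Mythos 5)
Your upper bound is correct and is essentially the same as the paper's. The lower bound, however, has a genuine gap: the inequality
$$\norm{s_0s'_0}{X}\geq \norm{s_0v_i(j)}{X} + \norm{v'_{i'}(j')s'_0}{X} - \norm{v_i(j)v'_{i'}(j')}{X}$$
is not a valid triangle inequality and is false in general (take $s_0=s'_0$ coincident and $v_i(j)=v'_{i'}(j')$ far away: the left side is $0$, the right side is large). More fundamentally, a shortest-path argument in the graph can only give an \emph{upper} bound on $\norm{s_0s'_0}{X}$ (each edge contributes at most $2$), never a lower bound; the whole difficulty of this lemma is exactly to rule out drawings where $s_0$ and $s'_0$ are geometrically close even though every graph path between them is long. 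Knowing that $\norm{s_0 v_i(j)}{X}$ and $\norm{s'_0 v'_{i'}(j')}{X}$ are large, together with $\norm{v_i(j) v'_{i'}(j')}{X}\le 2$, is entirely consistent with $s_0=s'_0$, so this route cannot close the argument.

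The paper instead argues by contradiction and by geometric crossing, splitting into two ranges. If $\norm{s_0s'_0}{X}\in[0,k-7)$, take $j=\lceil k/2-2\rceil$; then $s'_0$ lies strictly inside the annulus of $\alpha_j$ around $s_0$ (Lemma~\ref{lemma:annulus}), while $v'_{i'}(k)$ lies far outside it, so some subpath of $\pi'_{i'}$ connects the inner to the outer boundary of that annulus. Lemma~\ref{lemma:winding} (the winding/argument-variation fact) then forces $\pi'_{i'}$ to cross $\alpha_j$, producing an edge $v_i(j)v'_{i'}(j')$ with $j+j'\le \lceil k/2-2\rceil + k < 2k-4$, contradicting Lemma~\ref{lemma:edgesH}. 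If $\norm{s_0s'_0}{X}\in[k-7,4k-18)$, the annuli of $\alpha_{k-3}$ and $\alpha'_{k-3}$ overlap as two Olympic rings and, by the same crossing argument as in Lemma~\ref{lemma:edgesH}, must intersect, giving an edge $v_i(k-3)v'_{i'}(k-3)$ with $j+j'=2k-6<2k-4$, again contradicting Lemma~\ref{lemma:edgesH}. You allude to Lemma~\ref{lemma:annulus} and the winding property as background, but your actual outlined steps do not use them; you would need to replace the shortest-path/triangle-inequality step with this case analysis and crossing argument to obtain the lower bound.
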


\iffull
\begin{proof}
    To prove the lower bound on $\norm{s_0s'_0}{X}$, we show that $\norm{s_0s'_0}{X}$ is neither in the interval $[0,k-7)$ nor in $[k-7,4k-18)$.
    The cases are depicted in Figure~\ref{fig:annulus}.
    \begin{enumerate}
    \item\label{lemma:realizeHcase1}
        Suppose that $\norm{s_0s'_0}{X}\in [0,k-7)$.
        Let $j\mydef \lceil\frac k2-2\rceil$.
        Then $s'_0$ is in the ball $s_0+(2j-3)X$.
        By Lemma~\ref{lemma:annulus}, $\alpha_j$ is contained in the annulus $\mathcal A\mydef \setbuilder{x\in\R^2}{\norm{s_0x}{X}\in (2j-3,2j]}$.
        Note that $\mathcal A$ has diameter $4j<2k-3$.
        It follows from Lemma~\ref{lemma:annulus} that any of the subpaths of $\pi'_{i'}$ from $s'_0$ to $v'_{i'}(k)$ connects the inner and outer boundary of $\mathcal A$.
        Therefore, Lemma~\ref{lemma:winding} gives that $\pi'_{i'}$ crosses $\alpha_j$.
        Thus, there is also an edge $v_i(j)v'_{i'}(j')$ of $H_A^k(G)$, where $j+j'\leq \lceil\frac k2-2\rceil+k<2k-4$, contradicting Lemma~\ref{lemma:edgesH}.
    
    \item\label{lemma:realizeHcase2}
        Suppose now that $\norm{s_0s'_0}{X}\in [k-7,4k-18)$.
        By an argument similar to the one used in the proof of Lemma~\ref{lemma:edgesH}, one can show that $\alpha_{k-3}$ crosses $\alpha'_{k-3}$.
        Hence, there is an edge $v_i(k-3)v'_{i'}(k-3)$, contradicting Lemma~\ref{lemma:edgesH}.
    \end{enumerate}
    
    Consider now the case that $ww'$ is an edge of $G$.
    By Lemma~\ref{lemma:edgesH}, we know that there is an edge $v_i(k)v'_{i'}(k)$.
    Then $\norm{s_0s'_0}{X}\leq \norm{s_0v_i(k)}{X}+\norm{v_i(k)v'_{i'}(k)}{X}+\norm{v'_{i'}(k)s'_0}{X}\leq 4k+2$.
\end{proof}
\else
\fi

We are now ready to show that $H^k_A(G)$ does in fact construct $G$ as a
$\eps$-overlap graph.
\begin{lemma}\label{lemma:overlap}
For a graph $G=(V,E)\in C(A)$, if $H_A^k(G)\in I(B)$ for $k\geq 7$, then $G\in C_{10/k}(B)$.
\end{lemma}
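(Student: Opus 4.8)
The plan is to use a fixed drawing of $H_A^k(G)$ as an intersection graph of $B$ and extract from it a point set in $\R^2$ that realizes $G$ as a $(10/k)$-overlap graph of $B$; combining this with Lemma~\ref{lemma:implyContact} is not needed here, but this lemma is exactly the step that feeds into that combination later. Concretely, for each vertex $w$ of $G$ let $s_0^w$ be the image of the center $s_0$ of the corresponding copy $Q^w$ of $Q_A^k$ inside the given drawing of $H_A^k(G)$, and set $v_w \mydef \frac{1}{2k}\, s_0^w$. I claim $\{v_w \mid w\in V\}$ realizes $G$ as a $(10/k)$-overlap graph of $B$; by scale-invariance of the norm it is equivalent to show that the points $s_0^w$ satisfy $\norm{s_0^w s_0^{w'}}{B}\geq 2\cdot(2k) - 10\cdot(2k)/k = 4k-20$ for all distinct $w,w'$, and $\norm{s_0^w s_0^{w'}}{B}\leq 4k$ whenever $ww'\in E(G)$.

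First I would establish the lower bound. This is immediate from Lemma~\ref{lemma:realizeH} applied with $X=B$: since $k\geq 7$, any two distinct centers $s_0^w, s_0^{w'}$ of copies of $Q_A^k$ in a drawing of $H_A^k(G)$ as an intersection graph of $B$ satisfy $\norm{s_0^w s_0^{w'}}{B}\geq 4k-18 \geq 4k-20$. (One should note that the copies $Q^w$ and $Q^{w'}$ are vertex-disjoint in $H_A^k(G)$, so Lemma~\ref{lemma:realizeH}, which is phrased for two centers of copies of $Q_A^k$, applies verbatim; the ``setting of Lemma~\ref{lemma:edgesH}'' there refers precisely to this configuration of two copies.) Next, for the upper bound, if $ww'$ is an edge of $G$ then Lemma~\ref{lemma:realizeH} with $X=B$ gives $\norm{s_0^w s_0^{w'}}{B}\leq 4k+2 \leq 4k + 2k\cdot(10/k)$, which after dividing by $2k$ yields $\norm{v_w v_{w'}}{B}\leq 2$, so every edge of $G$ is realized as an edge of the $(10/k)$-overlap graph on $\{v_w\}$. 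Dividing the lower bound by $2k$ gives $\norm{v_w v_{w'}}{B}\geq 2 - 10/k$ for all distinct $w,w'$, which is exactly the separation condition in the definition of a $(10/k)$-overlap graph.

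It remains to exhibit an edge set $E(G')$ with $E(G)\subseteq E(G') \subseteq \{(w,w') : \norm{v_w v_{w'}}{B}\leq 2\}$; taking $G' = G$ would require the reverse inclusion (no non-edge of $G$ becomes an edge), but the definition of $\eps$-overlap graph only demands that we can choose \emph{some} such edge set, so we may simply take $E(G')$ to be \emph{all} pairs with $\norm{v_w v_{w'}}{B}\leq 2$; by the upper bound above this contains $E(G)$, and by the lower bound all such pairs have distance $\geq 2-10/k$, so the realization is valid and $G\subseteq G'\in C_{10/k}(B)$. Wait --- the statement claims $G\in C_{10/k}(B)$, i.e.\ $G$ itself must be realizable; so I instead let $G'$ be the graph on $V$ with edge set $E(G)$: the separation condition holds by the lower bound, the inclusion $E(G')\subseteq\{\norm{v_w v_{w'}}{B}\leq 2\}$ holds by the upper bound, and hence $G = G' \in C_{10/k}(B)$ with the points $\{v_w\}$ as a realization.

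\textbf{Main obstacle.} The only subtle point is making sure Lemma~\ref{lemma:realizeH} genuinely applies to \emph{every} pair of distinct centers, including pairs $ww'\notin E(G)$ — i.e.\ that its lower bound $\norm{s_0 s_0'}{X}\geq 4k-18$ holds unconditionally for two disjoint copies of $Q_A^k$ sitting inside the larger graph $H_A^k(G)$, using only that they are vertex-disjoint and that the ambient drawing is an intersection drawing. Reading Lemma~\ref{lemma:realizeH}, its lower bound proof (the two annulus cases) uses only properties internal to the two copies $Q,Q'$ and Lemmas~\ref{lemma:annulus},~\ref{lemma:winding},~\ref{lemma:edgesH}, all of which hold for any intersection drawing of $H_A^k(G)$ regardless of whether $ww'\in E(G)$; so the application is legitimate. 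The rest is the bookkeeping with the factor $2k$ and the constant $10$ carried out above.
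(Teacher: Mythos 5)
Your overall strategy matches the paper's: scale the centers $s_0^w$ down by a factor of order $k$ and apply the two bounds from Lemma~\ref{lemma:realizeH}. Your clarification that the lower bound of Lemma~\ref{lemma:realizeH} applies to every pair of distinct centers, not just edge pairs, is correct and worth noting.

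However, there is a concrete arithmetic gap in your choice of normalization. You set $v_w = \frac{1}{2k}s_0^w$, and to realize $G$ as an $\eps$-overlap graph you need every edge $ww'$ of $G$ to satisfy $\norm{v_wv_{w'}}{B}\leq 2$, i.e., $\norm{s_0^w s_0^{w'}}{B}\leq 4k$. But Lemma~\ref{lemma:realizeH} only yields $\norm{s_0^w s_0^{w'}}{B}\leq 4k+2$, which after dividing by $2k$ gives $\norm{v_wv_{w'}}{B}\leq 2+\frac{1}{k}$, strictly larger than $2$. Your stated chain
$4k+2 \leq 4k + 2k\cdot\frac{10}{k}$,
``which after dividing by $2k$ yields $\norm{v_wv_{w'}}{B}\leq 2$,''
is false: dividing $4k+20$ by $2k$ gives $2+\frac{10}{k}$, not $2$. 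The definition of $\eps$-overlap graph imposes the hard constraint $E(G)\subseteq\{\norm{v_iv_j}{B}\leq 2\}$ (the slack $\eps$ only softens the separation lower bound, not the edge upper bound), so this constant matters. The paper sidesteps this by scaling with $2k+1$ instead: then $\frac{4k+2}{2k+1}=2$ exactly, so edges land on or inside the threshold, and the lower bound becomes $\frac{4k-18}{2k+1}$, which one checks directly is $\geq 2-\frac{10}{k}$ (equivalently $k(4k-18)\geq(2k-10)(2k+1)$, i.e., $0\geq -10$). With that one change — replacing $2k$ by $2k+1$ — your argument goes through and becomes the paper's proof.
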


\begin{proof}
Suppose that $H_A^k(G)\in I(B)$ and consider a drawing of $H_A^k(G)$ as an intersection graph of $B$, and define $\mathcal A\mydef \setbuilder{\frac{s_0^u}{2k+1}}{u\in V}$.
It follows directly from Lemma~\ref{lemma:realizeH} that $I_B(\mathcal A)$ is a drawing of $G$ as a $\left(2-\frac{4k-18}{2k+1}\right)$-overlap graph of $B$. 
Since $\frac{4k-18}{2k+1}\geq 2-10/k$, the statement follows.
\end{proof}

\Cref{theorem:main} is  an easy consequence of \Cref{lemma:overlap} and \Cref{lemma:implyContact}:
\begin{proof}[Proof of \Cref{theorem:main}]
Let $G=(V,E')$ have the property from the theorem and suppose that $I(A)=I(B)$.
Then in particular, $H_A^k(G)\in I(B)$ for all $k>0$.
By Lemma~\ref{lemma:overlap} and~\ref{lemma:implyContact}, there is a graph $H=(V,E)\in C(B)$ such that $E'\subseteq E$, which is a contradiction.
\end{proof}


	\section*{Acknowledgement}

We thank Tillmann Miltzow for asking when the translates of two different convex bodies induce the same intersection graphs which inspired us to work on these problems.

	\bibliographystyle{plain}
	\bibliography{lib}

\end{document}